\newcommand\reallywidehat[1]{%
\savestack{\tmpbox}{\stretchto{%
  \scaleto{%
    \scalerel*[\widthof{\ensuremath{#1}}]{\kern-.6pt\bigwedge\kern-.6pt}%
    {\rule[-\textheight/2]{1ex}{\textheight}}
  }{\textheight}%
}{0.5ex}}%
\stackon[1pt]{#1}{\tmpbox}%
}
\newcommand{\ud}{\mathrm{d}}
\newcommand{\ii}{\mathrm{i}}
\newcommand{\cH}{\mathcal{H}}
\theoremstyle{plain}
\newtheorem{theorem}{Theorem}[section]
\newtheorem{lemma}[theorem]{Lemma}
\newtheorem{corollary}[theorem]{Corollary}
\newtheorem{proposition}[theorem]{Proposition}
\theoremstyle{definition}
\newtheorem{remark}[theorem]{Remark}
\newtheorem*{remark*}{Remark}
\numberwithin{equation}{section}
\begin{document}

\title[Spectrum of the 2+1 fermionic trimer with contact interactions]
{Spectral analysis of the 2+1 fermionic trimer with contact interactions}

\author[S.~Becker]{Simon Becker}
\address[S.~Becker]{Department of Applied Mathematics and Theoretical Physics, University of Cambridge \\ Wilbeforce Rd, Cambridge \\ CB3 0WA (UK).}
\email{simon.becker@damtp.cam.ac.uk}
\author[A.~Michelangeli]{Alessandro Michelangeli}
\address[A.~Michelangeli]{International School for Advanced Studies -- SISSA \\ via Bonomea 265 \\ 34136 Trieste (Italy).}
\email{alessandro.michelangeli@sissa.it}
\author[A.~Ottolini]{Andrea Ottolini}
\address[A.~Ottolini]{Department of Mathematics, Stanford University \\ 450 Serra Mall, Stanford CA 94305 (USA).}
\email{ottolini@stanford.edu}

\begin{abstract}
We qualify the main features of the spectrum of the Hamiltonian of point interaction for a three-dimensional quantum system consisting of three point-like particles, two identical fermions, plus a third particle of different species, with two-body interaction of zero range. For arbitrary magnitude of the interaction, and arbitrary value of the mass parameter (the ratio between the mass of the third particle and that of each fermion) above the stability threshold, we identify the essential spectrum, localise the discrete spectrum and prove its finiteness, qualify the angular symmetry of the eigenfunctions, and prove the increasing monotonicity of the eigenvalues with respect  to the mass parameter. We also demonstrate the existence or absence of bound states in the physically relevant regimes of masses.
\end{abstract}

\date{\today}

\subjclass[2000]{}
\keywords{Particle systems with zero-range/contact interactions. Ter-Martirosyan-Skornyakov Hamiltonians. Fermonic 2+1 system.}

\thanks{Partially supported by the 2014-2017 MIUR-FIR grant ``\emph{Cond-Math: Condensed Matter and Mathematical Physics}'' code RBFR13WAET (S.B., A.M., A.O.), by the EPSRC grant EP/L016516/1 for the University of Cambridge CDT, the CCA (S.B.), and by a 2017 visiting research fellowship  at the International Center for Mathematical Research CIRM, Trento (A.M.).}
\maketitle


\section{Introduction}

In this work we consider the three-dimensional quantum system consisting of three point-like particles, two identical fermions, plus a third particle of different species and mass ratio $m$ with respect to the mass of the fermions, where the inter-particle interaction has exactly zero range. Customarily one refers to this system as the 2+1 fermionic trimer with `contact interaction'; in the special case where the interaction has infinite scattering length one speaks of a 2+1 fermionic trimer `at unitarity'. In practice, because of the exclusion principle, the interaction is only present between each fermion and the third particle.

For the Hamiltonian that models this system we consider the so-called maximal Ter-Martirosyan--Skornyakov realisation $H_\alpha$, for given interaction scattering length $\alpha^{-1}$ (in suitable units) and mass parameter $m>m^*$, the critical threshold for stability ($m^*\sim(13.607)^{-1}$). The Hamiltonian $H_\alpha$, whose precise construction will be recalled in Section \ref{sec:setting_and_main_results}, is a self-adjoint operator on the internal Hilbert space
\begin{equation}
 \cH\;:=\;L^2_{\mathrm{f}}(\mathbb{R}^3\times\mathbb{R}^3,\ud x_1\ud x_2)\,,
\end{equation}
where $x_j$ is the three-dimensional relative variable of the $j$-th fermion and the third particle and the subscript `f' stands for the fermionic subspace of the corresponding $L^2$-space, namely the square-integrable functions $\psi(x_1,x_2)$ that are anti-symmetric under exchange $x_1\leftrightarrow x_2$ -- of course the total Hilbert space is given by tensoring $\cH$ itself with another copy of $L^2(\mathbb{R}^3)$ corresponding to the centre-of-mass degree of freedom. The elements $\psi$ of the domain of $H_\alpha$ are qualified by a boundary condition that informally reads
\begin{equation}\label{eq:BP_condition}
 \psi(x_1,x_2)\;\sim\;\xi(x_1)\Big(\frac{1}{|x_2|}+\alpha\Big)\qquad \textrm{as }\;x_2\to 0
\end{equation}
for some $\xi$ that we are going to specify later, which is a typical Bethe-Peierls contact condition \cite{Bethe_Peierls-1935,Bethe_Peierls-1935-np}, that is, the asymptotics for the low-energy quantum scattering through an interaction only supported at the origin and $s$-wave scattering length equal to $-(4\pi\alpha)^{-1}$.

The study of this and similar few-body systems with contact interaction has a long history and a wide literature throughout the last eight decades -- a concise retrospective may be found in \cite[Section 2]{MO-2016}. Recent progress in cold atom physics made the subject topical, thanks to the possibility of tuning the effective scattering length by means of a magnetically induced Feshbach resonance \cite[Section 5.4.2]{pethick02}, thus making the zero-range idealisation particularly realistic, above all at unitarity  \cite{Braaten-Hammer-2006,Castin-Werner-2011_-_review}. The 2+1 fermionic system is an actual building block for hetheronuclear mixtures with inter-species contact interaction -- see \cite[Section 1]{MP-2015-2p2} for an outlook.

Mathematically, the correct implementation of the `physical' condition \eqref{eq:BP_condition}, so as to identify a well-defined self-adjoint Hamiltonian, is a problem originally set up by Minlos and Faddeev \cite{Minlos-Faddeev-1961-1,Minlos-Faddeev-1961-2}, which required a long time to be fully understood and solved -- and in fact is still open for other fundamental configurations of the type $N+M$ ($N$ identical fermions of one type and $M$ of another type with inter-species contact interaction).

For the 2+1 fermionic model of interest, the rigorous construction of the Hamiltonian $H_\alpha$ for $m>m^*$, together with the precise determination of $m^*$ and the proof of the self-adjointness and the semi-boundedness from below of $H_\alpha$, was done in the work \cite{CDFMT-2012} by Correggi, Dell'Antonio, Finco, Michelangeli, and Teta, by means of quadratic form techniques for contact interactions \cite{Teta-1989,Finco-Teta-2012}. 
A previous thorough investigation by Minlos \cite{Minlos-1987,Minlos-2012-preprint_30sett2011,Minlos-2012-preprint_1nov2012,Minlos-RusMathSurv-2014} and Minlos and Shermatov \cite{Minlos-Shermatov-1989}, based instead on the Kre{\u\i}n-Vi\v{s}ik-Birman self-adjoint extension theory \cite{GMO-KVB2017}, had the virtue of showing that associated with the boundary condition \eqref{eq:BP_condition}, and depending whether $m$ is large or small, there is only one or an infinity of distinct self-adjoint realisations -- the Hamiltonians  `Ter-Martirosyan--Skornyakov' type for the 2+1 system -- however with a flaw in the treatment of the so-called `space of charges', as was later found by Michelangeli and Ottolini in \cite{MO-2016}. The identification of $H_\alpha$ as the highest (Friedrichs-type) among all the Ter-Martirosyan--Skornyakov self-adjoint Hamiltonians was finally made by Michelangeli and Ottolini in \cite{MO-2017}.

In this work we continue the investigation of the Hamiltonian $H_\alpha$ by qualifying a number of key features of its spectrum. In the physics literature, where the contact condition \eqref{eq:BP_condition} is typically implemented at a formal level, that is, ignoring the precise domain issues and the well-posedness of the operator, it nevertheless gives rise to manageable numerical schemes for the computation of ground state energy and higher eigenvalues \cite{Kartavtsev-Malykh-2007,Endo-Naidon-Ueda-2011,Kartavtsev-Malykh-2016,Kartavtsev-Malykh-2016-proc}. There is no analogue at a rigorous level, except for the work \cite{michelangeli-schmidbauer-2013} by Michelangeli and Schmidbauer, where the numerical computation of the ground state energy of $H_\alpha$ was set up on the eigenvalue problem emerging from the rigorous definition of  $H_\alpha$.

Informally speaking, the main findings of the present work are: we determine exactly the essential spectrum of $H_\alpha$ and its bottom, which is zero for repulsive interaction ($\alpha\geqslant 0$) and strictly negative for attractive interactions ($\alpha<0$); we localise a non-trivial energy window that contains entirely the discrete spectrum of $H_\alpha$; we prove that the discrete spectrum, when it is non-empty, consists of finitely many eigenvalues; we prove that $H_\alpha$ does admit eigenvalues below the essential spectrum in a physically relevant regime of masses $m$; and we prove that such negative eigenvalues are monotone increasing with $m$.

The precise setting for the 2+1 model, the rigorous formulation of our results, and their general discussion is presented in Section \ref{sec:setting_and_main_results}. The subsequent Sections contain the proofs and an amount of material needed from the previous literature.

\textbf{Notation.} Besides an amount of standard notation, we shall denote by $\mathcal{D}(H)$, resp., $\mathcal{D}[H]$, the operator domain, resp., the form domain of a self-adjoint operator $H$ on Hilbert space;
we shall denote with $\rightharpoonup$ the Hilbert space weak convergence; we shall omit for compactness the `almost everywhere' specification for pointwise identities between functions in Lebesgue spaces; we shall denote by $\mathbbm{1}$, resp., by $\mathbbm{O}$, the identity and the null operator on any of the considered Hilbert spaces; we shall indicate the Fourier transform by $\widehat{\phi}$ or $\mathcal{F}\phi$ with the convention $\widehat{\phi}(p)=(2\pi)^{-\frac{d}{2}}\int_{\mathbb{R}^d}e^{-\ii p x}\phi(x)\ud x $; we shall denote sequences by $(\phi_n)_n$; we shall use $\dotplus$, resp., $\oplus$, for the direct sum, resp., the direct orthogonal sum; we shall write $A\lesssim B$ for $A\leqslant\mathrm{const.}\,B$ when the constant does not depend on the other relevant parameters or variables of both sides of the inequality; for $x\in\mathbb{R}^d$ we shall write $\langle x\rangle:=(1+x^2)^{\frac{1}{2}}$.
%

\section{Setting and main results}\label{sec:setting_and_main_results}

Acting on the Hilbert space $\cH=L^2_{\mathrm{f}}(\mathbb{R}^3\times\mathbb{R}^3,\ud x_1\ud x_2)$, let us introduce the self-adjoint operator $H_\alpha$, for given $\alpha\in\mathbb{R}\cup\{\infty\}$ -- here we follow closely \cite{CDFMT-2012,MO-2016,MO-2017}.

Let us observe, preliminarily, that after removing the centre of mass, the free Hamiltonian of a three-dimensional system of two identical fermions of unit mass in relative positions $x_1,x_2$ with respect to a third particle of different species and with mass $m$ is the operator
\begin{equation}\label{eq:Hfree}
 H_{\mathrm{free}}\;:=\;-\Delta_{x_1}-\Delta_{x_2}-{\textstyle\frac{2}{m+1}}\,\nabla_{x_1}\cdot\nabla_{x_2}
\end{equation}
acting on $\cH$ with domain of self-adjointness consisting of the $H^2$-functions in $\cH$. (The three-body free Hamiltonian $H_{2+1}$ in the coordinates $(x_1,x_2,x_{\mathrm{cm}})$, where $x_{\mathrm{cm}}$ is the centre-of-mass variable, acts as $H_{2+1}=\frac{1+m}{2m}H_{\mathrm{free}}-\frac{1}{\,2(m+2)}\Delta_{x_{\mathrm{cm}}}$.)

We set for convenience
\begin{equation}\label{eq:defmunu}
 \mu\;:=\;\frac{2}{m+1}\,,\qquad\nu\;:=\;\frac{m(m+2)}{\:(m+1)^2}\;=\;1-\frac{\mu^2}{4}\,,
\end{equation}
and for arbitrary $\lambda>0$ we consider the linear map $\xi\mapsto T_\lambda\xi$ defined by
\begin{equation}\label{eq:Tlambda}
\widehat{(T_\lambda\,\xi)}(p)\;:=\;2\pi^2\sqrt{\nu p^2+\lambda}\;\widehat{\xi}(p)+\int_{\mathbb{R}^3}\frac{\widehat{\xi}(q)}{p^2+q^2+\mu p\cdot q+\lambda}\,\ud q\,,\quad p\in\mathbb{R}^3\,.
\end{equation}
One can see (see Proposition \ref{prop:regpropTlambda} in the following) that $T_\lambda$  maps continuously $H^s(\mathbb{R}^3)$ into $H^{s-1}(\mathbb{R}^3)$ for any $s\in(-\frac{1}{2},\frac{3}{2})$: we shall refer to it as the `\emph{charge operator}'. Moreover, for any $\xi\in H^{-\frac{1}{2}}(\mathbb{R}^3)$ and $\lambda>0$ we define
\begin{equation}\label{eq:u_xi}
 \widehat{u_\xi^\lambda}(p_1,p_2)\;:=\;\frac{\widehat{\xi}(p_1)-\widehat{\xi}(p_2)}{p_1^2+p_2^2+\mu\,p_1\cdot p_2+\lambda}\,,\qquad p_1,p_2\in\mathbb{R}^3\,.
\end{equation}
One can also see \cite[Prop.~1]{MO-2017} that $u_\xi^\lambda\in\cH$ and $u_\xi^\lambda=u_\eta^\lambda$ $\Rightarrow$ $\xi=\eta$.

In this context we shall refer to $H^{-\frac{1}{2}}(\mathbb{R}^3)$ and to the $\xi$'s in $H^{-\frac{1}{2}}(\mathbb{R}^3)$ as, respectively, the `\emph{space of charges}' and the `\emph{charges}' for the system under consideration. This jargon dates back from \cite{DFT-1994} and is chosen in analogy with electrostatics: indeed, \eqref{eq:u_xi} implies that, distributionally,
\[
 \begin{split}
  \big((-\Delta_{x_1}-\Delta_{x_2}-{\textstyle\frac{2}{m+1}}&\,\nabla_{x_1}\cdot\nabla_{x_2}+\lambda)u_\xi^\lambda\big)(x_1,x_2)\;= \\
  &=\;(2\pi)^{\frac{3}{2}}\big(\xi(x_1)\delta(x_2)-\xi(x_2)\delta(x_1)\big)\,,
 \end{split}
\]
thus $u_\xi^\lambda$ can be considered as the analogue of an electrostatic potential generated by the charge distribution $\xi$.

Next, we introduce the threshold mass $m^*\approx(13.607)^{-1}$ as the unique root of $\Lambda(m)=1$, where 
\begin{equation}\label{eq:Lambdam}
\Lambda(m)\;:=\;{\textstyle\frac{2}{\pi}}(m+1)^2\Big(\frac{1}{\sqrt{m(m+2)}}-\arcsin\frac{1}{m+1}\Big)\,.
\end{equation}
Indeed, $(0,+\infty)\ni m\mapsto\Lambda(m)$ is a positive, smooth, monotone decreasing function with range $(0,+\infty)$. It is often referred to as the `\emph{Efimov transcendental function}'.

The operator $H_\alpha$ is then defined for $m>m^*$ as follows: its domain is
\begin{equation}\label{eq:DHab}
\begin{split}
\mathcal{D}(H_{\alpha})\;&:=\;\left\{\,g=F^\lambda+u^\lambda_\xi\left|\!
\begin{array}{c}
F^\lambda \in H^2_{\mathrm{f}}(\mathbb{R}^3\times\mathbb{R}^3)\,,\\
\xi\in H^{\frac{1}{2}}(\mathbb{R}^3)\,,\;(T_\lambda+\alpha\mathbbm{1})\xi\in H^{\frac{1}{2}}(\mathbb{R}^3)\,, \\
\textrm{plus the boundary conditions $(\textsc{tms}')$}
\end{array}\!\!\!\right.\right\} \\
(\textsc{tms}') & \quad \int_{\mathbb{R}^3}\widehat{F^\lambda}(p_1,p_2)\,\ud p_2\;=\;\big((T_\lambda+\alpha)\,\xi\big)^{\!\!\widehat{\;\;\;}}(p_1)\,,
\end{split}
\end{equation}
and its action is
\begin{equation}\label{eq:action_DHab}
\begin{split}
(H_{\alpha}+\lambda\mathbbm{1})\,g\;:=&\;(H_{\mathrm{free}}+\lambda\mathbbm{1})\,F^\lambda \\
=&\;\big(-\Delta_{x_1}-\Delta_{x_2}-{\textstyle\frac{2}{m+1}}\nabla_{x_1}\cdot\nabla_{x_2}+\lambda\big)\,F^\lambda\,.
\end{split}
\end{equation}

It is worth remarking that the decomposition \eqref{eq:DHab} of the domain of $H_\alpha$ depends on the chosen $\lambda>0$, but the domain itself, and the operator action, does not (Section \ref{sec:lambda_arbitrary}).
In fact, $\lambda$ represents a fictitious artefact that pops up ubiquitously for the mere reason that it is much more manageable to construct the shifted operator $H_{\alpha}+\lambda\mathbbm{1}$ and to study its properties \cite{CDFMT-2012,MO-2016,MO-2017}, from which one infers at once the analogous picture for $H_\alpha$ itself. On the other hand, it is important to remember that for each chosen $m>m^*$ formulas \eqref{eq:DHab}-\eqref{eq:action_DHab} define a different operator $H_\alpha$, even if in our notation we do not carry over the explicit dependence on $m$.

Let us comment on the relevance of the operator $H_\alpha$ for the 2+1 fermionic system with contact interaction. Let us consider the restriction of $H_{\mathrm{free}}$ to functions that vanish in the vicinity of the `\emph{coincidence hyperplanes}' $\Gamma_1$ and $\Gamma_2$, where
\begin{equation}
\Gamma_j\;:=\;\{(x_1,x_2)\in\mathbb{R}^3\times\mathbb{R}^3\,|\,x_j=0\}\,,\qquad j\in\{1,2\}\,.
\end{equation}
More precisely, let us consider the operator
\begin{equation}\label{eq:Hring_2+1}
\begin{split}
\mathring{H}\;&:=\;-\Delta_{x_1}-\Delta_{x_2}-{\textstyle\frac{2}{m+1}}\,\nabla_{x_1}\cdot\nabla_{x_2} \\
\mathcal{D}(\mathring{H})\;&:=\;H^2_0((\mathbb{R}^3\times\mathbb{R}^3)\!\setminus\!(\Gamma_1\cup\Gamma_2))\cap\cH
\end{split}
\end{equation}
on $\cH$. (We recall that 
\begin{equation*}
H^2_0((\mathbb{R}^3\times\mathbb{R}^3)\!\setminus\!(\Gamma_1\cup\Gamma_2))\;=\;\overline{C^\infty_0((\mathbb{R}^3\times\mathbb{R}^3)\!\setminus\!(\Gamma_1\cup\Gamma_2))}^{\|\,\|_{H^2}}\,,
\end{equation*}
which is the space used in \eqref{eq:Hring_2+1}.) It can be easily seen that $\mathring{H}$ is densely defined, closed, positive, and symmetric on $\cH$. It fails to be self-adjoint (in fact, it has infinite deficiency indices), but it admits self-adjoint extensions, for it is semi-bounded from below. These are restrictions of the adjoint $\mathring{H}^*$, and in fact the functions of the form \eqref{eq:u_xi} can be proved to span $\ker(\mathring{H}^*+\lambda\mathbbm{1})$, thus
\begin{equation}\label{eq:kerHstar}
 \mathring{H}^*u_\xi^\lambda\;=\;-\lambda u_\xi^\lambda\,.
\end{equation}
Each such extension is a legitimate Hamiltonian for the 2+1 interacting system, with an interaction that then is only supported at the coincidence hyperplanes.

The latter circumstance is precisely what happens for $H_\alpha$. Let us indeed recall the following known properties.

\begin{theorem}\label{thm:properties_of_Halpha} Let $m>m^*$.
\begin{itemize}
 \item[(i)] $H_\alpha$ is self-adjoint on $\cH$.
 \item[(ii)] $H_\alpha$ is semi-bounded from below, with
 \begin{equation}\label{eq:inf_spec_Halpha}
  \begin{array}{lll}
  H_\alpha\;\geqslant\;\mathbbm{O}  & & \textrm{if }\;\alpha\geqslant 0 \\
  H_\alpha\;\geqslant\;-\frac{\alpha^2}{4\pi^4(1-\Lambda(m)^2)}\,\mathbbm{1} & &  \textrm{if }\;\alpha < 0\,.
  \end{array}
 \end{equation}
 \item[(iii)] $H_\alpha$ is an extension of $\mathring{H}$.
 \end{itemize}
\end{theorem}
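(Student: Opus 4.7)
The plan is to construct $H_\alpha$ via the Kato representation theorem applied to a quadratic form on $\cH$, from which parts (i) and (ii) will follow simultaneously, while (iii) reduces to a direct compatibility check. For fixed $\lambda>0$, I would consider the form $\mathcal{F}_\alpha^{(\lambda)}$ on the domain $\{g=F^\lambda+u_\xi^\lambda : F^\lambda\in H^1_{\mathrm{f}}(\R^3\times\R^3),\,\xi\in H^{-1/2}(\R^3)\}$ acting by
\[
\mathcal{F}_\alpha^{(\lambda)}[g]\;=\;\langle F^\lambda,(H_{\mathrm{free}}+\lambda)F^\lambda\rangle\,-\,\lambda\|g\|_\cH^2\,+\,\langle\xi,(T_\lambda+\alpha)\xi\rangle,
\]
where the last pairing is the natural $H^{-1/2}$--$H^{1/2}$ duality, well-defined by the announced mapping properties of $T_\lambda$. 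Formally, this expression is what one obtains when computing $\langle g,(H_\alpha+\lambda)g\rangle$ by integration by parts, using \eqref{eq:kerHstar} and the distributional identity following \eqref{eq:u_xi}, with the boundary condition $(\textsc{tms}')$ converting the cross term $\langle u_\xi^\lambda,(H_{\mathrm{free}}+\lambda)F^\lambda\rangle$ into the charge form $\langle\xi,(T_\lambda+\alpha)\xi\rangle$.

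The analytic heart of the matter is the spectral inequality
\[
\langle\xi,T_\lambda\xi\rangle\;\geqslant\;2\pi^2(1-\Lambda(m))\!\int_{\R^3}\sqrt{\nu p^2+\lambda}\;|\widehat{\xi}(p)|^2\,\ud p\,,
\]
which for $m>m^*$ yields $1-\Lambda(m)>0$ and hence coercivity of $\mathcal{F}_\alpha^{(\lambda)}$ (for every $\lambda>0$ when $\alpha\geqslant 0$, and for $\lambda$ sufficiently large when $\alpha<0$). I would establish it by decomposing $T_\lambda$ into angular-momentum channels: the multiplication part of \eqref{eq:Tlambda} is diagonal, while the convolution kernel $(p^2+q^2+\mu\,p\cdot q+\lambda)^{-1}$, tested against spherical harmonics, is estimated sector by sector, with the $s$-wave component producing the sharp constant $\Lambda(m)$ via the very integral that defines it in \eqref{eq:Lambdam}. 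Once coercivity is in hand, closedness of $\mathcal{F}_\alpha^{(\lambda)}$ follows from the graph-norm equivalence on the pair $(F^\lambda,\xi)$, and the representation theorem yields a unique self-adjoint operator whose domain and action, unravelled from the Euler--Lagrange equation, coincide with \eqref{eq:DHab}--\eqref{eq:action_DHab}; this settles (i). For (ii) I evaluate the same inequality at $\inf_p\sqrt{\nu p^2+\lambda}=\sqrt{\lambda}$, obtaining $H_\alpha+\lambda\geqslant\mathbbm{O}$ whenever $2\pi^2(1-\Lambda(m))\sqrt{\lambda}+\alpha\geqslant 0$; optimising over $\lambda$ produces precisely the bounds in \eqref{eq:inf_spec_Halpha}.

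For (iii), any $g\in\mathcal{D}(\mathring{H})$ is $H^2$ and vanishes identically in a neighbourhood of $\Gamma_1\cup\Gamma_2$; in the decomposition \eqref{eq:DHab} one may therefore take $F^\lambda=g$ and $\xi=0$, so that $u_\xi^\lambda=0$. The boundary condition $(\textsc{tms}')$ reduces to $\int_{\R^3}\widehat{g}(p_1,p_2)\,\ud p_2=0$, which encodes the trace vanishing $g(\cdot,0)\equiv 0$ and holds automatically, while the action $(H_\alpha+\lambda)g=(H_{\mathrm{free}}+\lambda)g$ agrees with $(\mathring{H}+\lambda)g$ on its domain. The main obstacle of the whole programme is thus the spectral estimate for $T_\lambda$: all the genuine difficulty, together with the explicit appearance of the threshold $m^*$, is concentrated there, and it is precisely the quantitative sharpness captured by $\Lambda(m)$ that makes the lower bound in (ii) sharp rather than merely indicative.
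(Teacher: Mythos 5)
Your overall architecture for (i) and (iii) --- constructing $H_\alpha$ from a closed, coercive quadratic form and then checking that functions vanishing near the coincidence hyperplanes lie in $\mathcal{D}(H_\alpha)$ with $\xi=0$ --- is the route of the references the paper relies on (the theorem is recalled from \cite{CDFMT-2012,MO-2017} rather than reproved here), so no structural objection, apart from one detail: the form domain must require $\xi\in H^{\frac{1}{2}}(\mathbb{R}^3)$, not $\xi\in H^{-\frac{1}{2}}(\mathbb{R}^3)$ as you write, because $T_\lambda$ only maps $H^{s}$ into $H^{s-1}$ for $s>-\frac{1}{2}$ (Proposition \ref{prop:regpropTlambda}), so for $\xi\in H^{-\frac{1}{2}}$ the pairing $\langle\xi,T_\lambda\xi\rangle$ is not defined.

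The genuine gap is in (ii). Evaluating the coercivity estimate at $\inf_p\sqrt{\nu p^2+\lambda}=\sqrt{\lambda}$ gives positivity of the charge form whenever $2\pi^2(1-\Lambda(m))\sqrt{\lambda}\geqslant|\alpha|$, and optimising over $\lambda$ then yields only
\[
H_\alpha\;\geqslant\;-\frac{\alpha^2}{4\pi^4(1-\Lambda(m))^2}\,\mathbbm{1}\,,
\]
which is the older, weaker bound \eqref{eq:old_lower_bound}, \emph{not} the constant $-\frac{\alpha^2}{4\pi^4(1-\Lambda(m)^2)}$ asserted in \eqref{eq:inf_spec_Halpha}: since $0<\Lambda(m)<1$ one has $(1-\Lambda(m))^2<1-\Lambda(m)^2$, so the two constants genuinely differ. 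To reach the stated constant you must not discard the $p$-dependence before optimising: use instead the splitting $2\pi^2(1-\Lambda(m))\sqrt{\nu p^2+\lambda}\geqslant 2\pi^2\sqrt{\nu p^2+\lambda}-2\pi^2\Lambda(m)\sqrt{\nu}\,|p|$ and minimise
\[
f(\rho)\;=\;2\pi^2\sqrt{\nu\rho^2+\lambda}-2\pi^2\Lambda(m)\sqrt{\nu}\,\rho-|\alpha|
\]
over $\rho\geqslant 0$; the minimum sits at $\rho_0=\sqrt{\lambda/\nu}\:\Lambda(m)/\sqrt{1-\Lambda(m)^2}>0$ with $f(\rho_0)=2\pi^2\sqrt{\lambda}\sqrt{1-\Lambda(m)^2}-|\alpha|$, and only this produces the threshold $\lambda_\natural=\frac{\alpha^2}{4\pi^4(1-\Lambda(m)^2)}$. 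This refinement is precisely the content of Section \ref{sec:improved_lower_bound}. A secondary slip: the sharp constant $\Lambda(m)$ in the coercivity estimate originates from the $\ell=1$ (odd) angular channel --- the even-$\ell$ contributions $\Psi_{\lambda,\ell}$ are nonnegative --- not from the $s$-wave component as you state.
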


Theorem \ref{thm:properties_of_Halpha} was established first in \cite{CDFMT-2012} by means of the quadratic form theory, and later in \cite{MO-2017}, by means of the Kre{\u\i}n-Vi\v{s}ik-Birman self-adjoint extension theory, in either case under the operational restriction that when $\alpha<0$ the parameter $\lambda$ in formulas \eqref{eq:DHab}-\eqref{eq:action_DHab} must be taken sufficiently large, more precisely $\lambda>\lambda_0$ with
\begin{equation*}
 \lambda_0\;:=\;\frac{2|\alpha|}{4\pi^2(1-\Lambda(m))}
\end{equation*}
(see \cite[Prop.~3.1]{CDFMT-2012} and \cite[Prop.~6]{MO-2017}). Since for the present discussion the full arbitrariness of $\lambda$ is crucial, and to our knowledge  it was never shown in the previous literature that the restriction to $\lambda>\lambda_0$ can be by-passed, we give evidence of this fact in Section \ref{sec:lambda_arbitrary}. Thus from now on we can indeed consider the definition of $H_\alpha$ and the properties of Theorem \ref{thm:properties_of_Halpha} valid irrespectively of the chosen $\lambda>0$.

We observe that \eqref{eq:inf_spec_Halpha} only provides a (non optimal) lower bound to $H_\alpha$ when $\alpha<0$. In fact, the lower bound determined in \cite{CDFMT-2012} was slightly less efficient and had the form
\begin{equation}\label{eq:old_lower_bound}
 H_\alpha\;\geqslant\;-\textstyle\frac{\alpha^2}{4\pi^4(1-\Lambda(m))^2}\,\mathbbm{1} \qquad\qquad (\alpha < 0)\,.
\end{equation}
A more careful analysis shows that in fact \eqref{eq:old_lower_bound} can be ameliorated to \eqref{eq:inf_spec_Halpha}. We discuss this in Section \ref{sec:improved_lower_bound}.

$H_\alpha$ displays one more fundamental property that makes it a most natural Hamiltonian of contact interaction for the 2+1 fermionic system. It is encoded in the boundary condition $(\textsc{tms}')$ of \eqref{eq:DHab}. This is an identity in $H^{\frac{1}{2}}(\mathbb{R}^3)$ linking the `regular' $H^2$-part $F^\lambda$ of $g\in\mathcal{D}(H_\alpha)$ with the `singular' part $u_\xi^\lambda$ of $g$. It  can be better understood if one recalls \cite[Lemma 4]{MO-2016} that
\begin{equation}\label{eq:TMS_pre-asymptotics}
\int_{\substack{ \\ \,p_2\in\mathbb{R}^3 \\ \! |p_2|<R}}\;\widehat{u_\xi^\lambda}(p_1,p_2)\,\ud p_2\;=\;4\pi\widehat{\xi}(p_1) R-\widehat{(T_\lambda\,\xi)}(p_1)+o(1)\quad\textrm{as $R\to +\infty$}\,;
\end{equation}
then, combining \eqref{eq:TMS_pre-asymptotics} with \eqref{eq:DHab} yields the following.

\begin{proposition}\label{prop:TMS_condition}
  Let $m>m^*$. Let $g\in\mathcal{D}(H_\alpha)$ and let $\xi$ be the corresponding charge of $g$, according to the decomposition \eqref{eq:DHab}. Then $g$ satisfies the `\emph{Ter-Martirosyan--Skornyakov condition}'
  \begin{equation}\tag{\textsc{tms}}\label{eq:TMS_pointwise}
   \int_{\substack{ \\ \,p_2\in\mathbb{R}^3 \\ \! |p_2|<R}}\;\widehat{g}(p_1,p_2)\,\ud p_2\;=\;(4\pi R+\alpha)\,\widehat{\xi}(p_1)+o(1)\quad\textrm{ as }R\to +\infty\,.
  \end{equation}
\end{proposition}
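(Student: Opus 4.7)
The plan is simply to decompose $g$ according to \eqref{eq:DHab} as $g=F^\lambda+u_\xi^\lambda$, insert the split into the truncated integral on the left-hand side of \eqref{eq:TMS_pointwise}, and combine the known asymptotics \eqref{eq:TMS_pre-asymptotics} for the singular part $u_\xi^\lambda$ with an $R\to\infty$ analysis for the regular part $F^\lambda$, where the constant $\alpha\widehat\xi(p_1)$ will emerge from the boundary condition $(\textsc{tms}')$.

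First I would write
\[
\int_{|p_2|<R}\widehat g(p_1,p_2)\,\ud p_2\;=\;\int_{|p_2|<R}\widehat{F^\lambda}(p_1,p_2)\,\ud p_2+\int_{|p_2|<R}\widehat{u_\xi^\lambda}(p_1,p_2)\,\ud p_2 .
\]
For the singular summand the cited asymptotic \eqref{eq:TMS_pre-asymptotics} gives directly
\[
\int_{|p_2|<R}\widehat{u_\xi^\lambda}(p_1,p_2)\,\ud p_2\;=\;4\pi R\,\widehat\xi(p_1)-\widehat{T_\lambda\xi}(p_1)+o(1).
\]
For the regular summand I would rewrite
\[
\int_{|p_2|<R}\widehat{F^\lambda}(p_1,p_2)\,\ud p_2\;=\;\int_{\mathbb{R}^3}\widehat{F^\lambda}(p_1,p_2)\,\ud p_2-\int_{|p_2|\geqslant R}\widehat{F^\lambda}(p_1,p_2)\,\ud p_2 ,
\]
and invoke $(\textsc{tms}')$ to identify the first term on the right with $\widehat{(T_\lambda+\alpha)\xi}(p_1)$. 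Summing with the previous line, the $\widehat{T_\lambda\xi}$ contributions cancel and one is left with $(4\pi R+\alpha)\widehat\xi(p_1)$ plus the tail integral.

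The only genuine work is the tail estimate
\[
\int_{|p_2|\geqslant R}\widehat{F^\lambda}(p_1,p_2)\,\ud p_2\;=\;o(1)\qquad (R\to\infty),
\]
in the same sense (pointwise almost everywhere in $p_1$, or in $H^{1/2}(\mathbb{R}^3_{p_1})$) in which \eqref{eq:TMS_pre-asymptotics} is understood. This is quantitative and purely from $F^\lambda\in H^2(\mathbb{R}^3\times\mathbb{R}^3)$: by Cauchy--Schwarz,
\[
\Big|\int_{|p_2|\geqslant R}\!\widehat{F^\lambda}(p_1,p_2)\,\ud p_2\Big|^{2}\!\leqslant\!\Big(\!\int_{|p_2|\geqslant R}\!\frac{\ud p_2}{(1+p_1^2+p_2^2)^{2}}\Big)\!\Big(\!\int_{\mathbb{R}^3}\!(1+p_1^2+p_2^2)^{2}|\widehat{F^\lambda}(p_1,p_2)|^{2}\,\ud p_2\Big),
\]
where the first factor is $O(R^{-1}\langle p_1\rangle^{-2})$ uniformly in $p_1$, and the second factor, once integrated against $\langle p_1\rangle$ in $p_1$, is controlled by $\|F^\lambda\|_{H^2}^{2}$. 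This gives the vanishing of the tail both pointwise and in $H^{1/2}(\mathbb{R}^3_{p_1})$, compatible with the convergence sense of \eqref{eq:TMS_pre-asymptotics}.

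The main obstacle, such as it is, is only bookkeeping: ensuring that $(\textsc{tms}')$, \eqref{eq:TMS_pre-asymptotics}, and the tail bound are all read in the same topology on the $p_1$-space, so that the three $o(1)$ contributions may be legitimately added. Once this is harmonised, the identity \eqref{eq:TMS_pointwise} follows immediately from the arithmetic $-\widehat{T_\lambda\xi}+\widehat{(T_\lambda+\alpha)\xi}=\alpha\widehat\xi$.
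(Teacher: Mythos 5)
Your proposal is correct and is essentially the paper's own (one-line) argument, namely ``combining \eqref{eq:TMS_pre-asymptotics} with \eqref{eq:DHab}'': split $g=F^\lambda+u_\xi^\lambda$, apply the pre-asymptotics to the singular part, identify $\int_{\mathbb{R}^3}\widehat{F^\lambda}(p_1,p_2)\,\ud p_2$ via $(\textsc{tms}')$, and show the $H^2$ tail vanishes. One small inaccuracy: the first Cauchy--Schwarz factor is $O(R^{-1})$ uniformly in $p_1$ (and separately $O(\langle p_1\rangle^{-1})$), not $O(R^{-1}\langle p_1\rangle^{-2})$ — at $|p_1|\sim R$ it is of order $R^{-1}$, not $R^{-3}$ — which still yields the pointwise-a.e.\ $o(1)$ that the pointwise asymptotics \eqref{eq:TMS_pointwise} requires, although the asserted $H^{1/2}$-control of the tail by $\|F^\lambda\|_{H^2}$ alone would then need a slightly different justification (e.g.\ dominated convergence against the $H^{1/2}$ trace).
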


The boundary condition \eqref{eq:TMS_pointwise} (together with its mirror version when $p_1\leftrightarrow p_2$, using the anti-symmetry of $g$) is the Fourier transform counterpart of the Bethe-Peierls contact condition \eqref{eq:BP_condition}. It is an ultra-violet asymptotics when each fermion comes on top of the third particle. Analogously to \eqref{eq:BP_condition}, as recognised for the first time by Ter-Martirosyan and Skornyakov in \cite{TMS-1956}, it encodes the presence of an interaction active only for the coincidence configurations and with $s$-wave scattering length $-(4\pi\alpha)^{-1}$.

The fact that $H_\alpha$ is a self-adjoint extension of $\mathring{H}$ and models an interaction supported on the hyperplanes $\Gamma_1$ and $\Gamma_2$ which satisfies the physically meaningful asymptotics \eqref{eq:TMS_pointwise} makes $H_\alpha$ the \emph{natural Hamiltonian of contact interaction for the 2+1 fermionic trimer}.

For completeness, let us include the quadratic form of the operator $H_\alpha$, which is given by
\begin{equation}\label{eq:DHab_form}
\begin{split}
\mathcal{D}[H_{\alpha}]\;&=\;\left\{g=F^\lambda+u^\lambda_\xi\left|\!\!
\begin{array}{c}
F^\lambda \in H^1_{\mathrm{f}}(\mathbb{R}^3\times\mathbb{R}^3),\;\xi\in H^{\frac{1}{2}}(\mathbb{R}^3)
\end{array}\!\!\!\!\right.\right\} \\
H_{\alpha}[F^\lambda+u^\lambda_\xi]\;&=\;\lambda\|F^\lambda\|_\cH^2-\lambda\|F^\lambda+u_\xi^\lambda\|_\cH^2+H_\mathrm{free}[F^\lambda] \\
& \qquad\qquad+2\!\int_{\mathbb{R}^3}\overline{\,\widehat{\xi}(p)}\,\big(\widehat{(T_\lambda\xi)}(p)+\alpha\,\widehat{\xi}(p)\big)\,\ud p\,.
\end{split}
\end{equation}

From all the above considerations it is also clear that $H_{\alpha=\infty}$ is the self-adjoint free Hamiltonian with domain the $H^2$-functions of $\cH$: in this case, the interaction is absent.

Before proceeding with the goal of our investigation and our main results, let us quickly mention the following general problem, for a thorough discussion of which we refer to \cite{CDFMT-2015,MO-2016,MO-2017} and the references therein. Out of the vast variety of self-adjoint extensions of $\mathring{H}$, each of which is interpreted as a Hamiltonian of contact interaction, the physical boundary condition \eqref{eq:TMS_pointwise} selects uniquely $H_\alpha$ for sufficiently large $m$, whereas it selects an \emph{infinite} sub-family of extensions, all of Ter-Martirosyan--Skornyakov type, when $m$ is sufficiently small (and above $m^*$, in the present discussion). In fact, the very meaning of  \eqref{eq:TMS_pointwise} as a condition of self-adjointness need be made precise, for  \eqref{eq:TMS_pointwise} is only a \emph{point-wise} asymptotics, as opposite to the \emph{functional identity} $(\textsc{tms}')$ of \eqref{eq:DHab}: this point of view is developed in  \cite{MO-2016,MO-2017}, where the appropriate set of functional conditions of self-adjointness of Ter-Martirosyan--Skornyakov type is discussed. When an infinite multiplicity of TMS extensions arises, the present operator $H_\alpha$ turns out to be the \emph{highest}, in the sense of ordering of self-adjoint operators, all  other extensions being qualified by additional asymptotics, hence additional physics, at the triple coincidence point of the trimer. The study of the other TMS extensions is at a relatively early stage and in this work we only consider $H_\alpha$. The non-TMS extensions are in a sense non-physical, as they correspond to non-local boundary conditions: for them there is a natural classification in therms of the Kre{\u\i}n-Vi\v{s}ik-Birman extension theory (see, e.g., \cite[Eq.~(39)-(40)]{MO-2017}.

Let us present now the main results of this work. As mentioned in the Introduction, we qualify an amount of spectral properties for the Hamiltonian $H_\alpha$.

First, we identify the essential spectrum and we localise a region containing the whole discrete spectrum. The picture changes from the case of repulsive interaction ($\alpha>0$) or unitarity ($\alpha=0$) to the case of attractive interaction ($\alpha<0$).

\begin{theorem}[Essential spectrum]\label{thm:ess-spec}
Let $m>m^*$. The essential spectrum of $H_\alpha$ is the set
\begin{equation}
 \begin{array}{lll}
  \sigma_{\mathrm{ess}}(H_\alpha)\;=\;[0,+\infty) & & \textrm{if }\;\alpha\geqslant 0 \\
  \sigma_{\mathrm{ess}}(H_\alpha)\;=\;[-\frac{\alpha^2}{4\pi^4},+\infty) & & \textrm{if }\;\alpha<0\,.
 \end{array}
\end{equation}
As a consequence of this and of Theorem \ref{thm:properties_of_Halpha}(ii),
\begin{equation}\label{eq:where_disc_spec}
 \begin{array}{lll}
  \sigma_{\mathrm{disc}}(H_\alpha)\;=\;\emptyset & & \textrm{if }\;\alpha\geqslant 0 \\
  \sigma_{\mathrm{disc}}(H_\alpha)\;\subset\;[-\frac{\alpha^2}{4\pi^4(1-\Lambda(m)^2)}, -\frac{\alpha^2}{4\pi^4})& & \textrm{if }\;\alpha<0\,.
 \end{array}
\end{equation}
\end{theorem}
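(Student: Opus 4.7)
The plan is to establish the characterisation of $\sigma_{\mathrm{ess}}(H_\alpha)$ by a double inclusion and to read off \eqref{eq:where_disc_spec} as an immediate corollary: once $\inf\sigma_{\mathrm{ess}}(H_\alpha)$ is known, the position of $\sigma_{\mathrm{disc}}(H_\alpha)$ is pinned between that value and the lower bound of Theorem \ref{thm:properties_of_Halpha}(ii); the two bounds coincide precisely when $\alpha\geqslant 0$, giving $\sigma_{\mathrm{disc}}(H_\alpha)=\emptyset$ in that case.

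The inclusion $\sigma_{\mathrm{ess}}(H_\alpha)\supseteq[0,+\infty)$, needed for every $\alpha$, I would produce by Weyl sequences in the ``three free particles'' channel. Given $E\geqslant 0$, positive-definiteness of the symbol $p_1^2+p_2^2+\mu\,p_1\cdot p_2$ of $H_{\mathrm{free}}$ (ensured by $\mu=\tfrac{2}{m+1}<2$) lets me pick $p,q\in\mathbb{R}^3$ with $p^2+q^2+\mu\,p\cdot q=E$, and I take the antisymmetrisations of $n^{-3}\chi\big(\frac{x_1-a_n}{n},\frac{x_2-b_n}{n}\big)e^{\ii(p\cdot x_1+q\cdot x_2)}$, with $\chi\in C^\infty_c$ and $a_n,b_n\to\infty$ along directions that keep the support disjoint from $\Gamma_1\cup\Gamma_2$. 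These lie in $\mathcal{D}(\mathring H)\subseteq\mathcal{D}(H_\alpha)$ by Theorem \ref{thm:properties_of_Halpha}(iii), so the action \eqref{eq:action_DHab} reduces to $H_{\mathrm{free}}$, and a standard computation gives the Weyl criterion for $E$. When $\alpha<0$ I must additionally produce $[-\alpha^2/(4\pi^4),0)$ via an ``atom-dimer'' channel, exploiting the fact that the symbol $2\pi^2\sqrt{\nu p^2+\lambda}$ of the leading part of $T_\lambda$ attains $-\alpha$ exactly on the manifold $\nu p^2+\lambda=\alpha^2/(4\pi^4)$. For $E\in[-\alpha^2/(4\pi^4),0)$ set $\lambda=-E$ and pick $p_0\in\mathbb{R}^3$ with $\nu|p_0|^2=\alpha^2/(4\pi^4)-\lambda\geqslant 0$; take Schwartz charges $\xi_n$ with $\widehat{\xi_n}(k)=n^{3/2}\widehat{\chi}(n(k-p_0))$, concentrated at $p_0$ on scale $1/n$. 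A Taylor expansion of $\sqrt{\nu k^2+\lambda}$ together with $L^1$-control of the convolution kernel in \eqref{eq:Tlambda} give $\|(T_\lambda+\alpha)\xi_n\|_{L^2}=o(1)$ with $\|\xi_n\|_{L^2}=O(1)$. I then lift $\xi_n$ to $\psi_n=u_{\xi_n}^\lambda+F_n^\lambda\in\mathcal{D}(H_\alpha)$ by choosing $F_n^\lambda\in H^2_{\mathrm{f}}$ through a bounded right inverse of the TMS trace map in \eqref{eq:DHab} with $\|F_n^\lambda\|_{H^2}=o(1)$; by \eqref{eq:action_DHab}, $(H_\alpha+\lambda)\psi_n=(H_{\mathrm{free}}+\lambda)F_n^\lambda\to 0$ in $\cH$, while \eqref{eq:u_xi} yields $\|u_{\xi_n}^\lambda\|_\cH$ bounded below and $\psi_n\rightharpoonup 0$ as $\xi_n$ delocalises in position space.

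The opposite inclusion is the main obstacle. For $\alpha\geqslant 0$ nothing beyond Theorem \ref{thm:properties_of_Halpha}(ii) is needed. For $\alpha<0$ the plan is to use the Kre{\u\i}n-type resolvent formula underlying the construction of $H_\alpha$, schematically of the form
\[
(H_\alpha+\lambda)^{-1}-(H_{\mathrm{free}}+\lambda)^{-1}\;=\;U_\lambda(T_\lambda+\alpha)^{-1}U_\lambda^\ast\,,
\]
with $U_\lambda$ the continuous extension of $\xi\mapsto u_\xi^\lambda$. A Birman-Schwinger-type argument then transfers presence of essential spectrum of $H_\alpha$ in $(-\infty,0)$ into failure of $T_\lambda+\alpha$ to be Fredholm on the charge space. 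But for $\lambda>\alpha^2/(4\pi^4)$ the multiplier part of $T_\lambda+\alpha$ is bounded below by $2\pi^2\sqrt\lambda+\alpha>0$, while the convolution part of \eqref{eq:Tlambda} is controlled by the very estimates that underpin the semi-bound \eqref{eq:inf_spec_Halpha} in \cite{CDFMT-2012,MO-2017}; together they give $\sigma_{\mathrm{ess}}(T_\lambda+\alpha)\subseteq[2\pi^2\sqrt\lambda+\alpha,+\infty)$, so $0\notin\sigma_{\mathrm{ess}}(T_\lambda+\alpha)$ and $T_\lambda+\alpha$ is Fredholm at $0$, which through the resolvent formula forces $-\lambda\notin\sigma_{\mathrm{ess}}(H_\alpha)$. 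This closes $\sigma_{\mathrm{ess}}(H_\alpha)\subseteq[-\alpha^2/(4\pi^4),+\infty)$ and, combined with Theorem \ref{thm:properties_of_Halpha}(ii), yields the localisation \eqref{eq:where_disc_spec}. The truly delicate point is the quantitative control of $T_\lambda+\alpha$ as $\lambda\downarrow\alpha^2/(4\pi^4)$, where the positive multiplier lower bound degenerates and its coupling with the singular convolution piece in \eqref{eq:Tlambda} must be handled with care; the same estimate is what powers the atom-dimer Weyl-sequence construction above.
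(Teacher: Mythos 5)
Your overall architecture matches the paper's: Weyl sequences in the free channel give $[0,+\infty)\subseteq\sigma_{\mathrm{ess}}(H_\alpha)$, an ``atom--dimer'' singular sequence built from charges concentrated where $2\pi^2\sqrt{\nu p^2+\lambda}=|\alpha|$ gives $[-\tfrac{\alpha^2}{4\pi^4},0)$, and an upper bound closes the identification. However, your atom--dimer construction has a genuine gap. Concentrating $\widehat{\xi}_n$ at a single point $p_0$ gives $\int_{\mathbb{R}^3}\widehat{\xi}_n(q)\,\ud q\neq 0$ for generic $\chi$, and then the convolution term in \eqref{eq:Tlambda} decays only like $|p|^{-2}$ as $|p|\to\infty$; this function fails (logarithmically) to belong to $H^{\frac{1}{2}}(\mathbb{R}^3)$, so $(T_\lambda+\alpha\mathbbm{1})\xi_n\notin H^{\frac{1}{2}}(\mathbb{R}^3)$. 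By \eqref{eq:DHab} such a $\xi_n$ is not an admissible charge: no $F_n^\lambda\in H^2_{\mathrm{f}}$ can have the required trace, so the ``bounded right inverse of the TMS trace map'' cannot be applied and $\psi_n\notin\mathcal{D}(H_\alpha)$. This is precisely the failure of $T_\lambda$ on the $\ell=0$ sector recorded in Proposition \ref{prop:regpropTlambda}. The paper's sequence concentrates on the sphere $|p|=r_0$ with an $\ell=1$ spherical harmonic exactly so that the $\ell=0$ component (and hence the offending $|p|^{-2}$ tail) vanishes; your construction needs a definite angular symmetry $\ell\geqslant 1$ (or at least vanishing-moment conditions on $\chi$), and the resulting verification of $\|(T_\lambda+\alpha\mathbbm{1})\xi_n\|_{H^{1/2}}\to 0$ -- the $H^{\frac12}$ norm, not merely $L^2$ -- is the non-trivial computation carried out in the proof of Proposition \ref{prop:ess-spec-alphanegative_1}.

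The more serious gap is in the converse inclusion. You claim that the estimates underpinning \eqref{eq:inf_spec_Halpha} give $\sigma_{\mathrm{ess}}(T_\lambda+\alpha\mathbbm{1})\subseteq[2\pi^2\sqrt{\lambda}+\alpha,+\infty)$ for every $\lambda>\tfrac{\alpha^2}{4\pi^4}$. They do not: the bound \eqref{eq:chargesform_temp_est} controls the (negative, non-compact) convolution part of $T_\lambda$ on the odd sector only at the price of the factor $\Lambda(m)$, and yields positivity of $T_\lambda+\alpha\mathbbm{1}$ only for $\lambda\geqslant\tfrac{\alpha^2}{4\pi^4(1-\Lambda(m)^2)}$. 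In the window between $\tfrac{\alpha^2}{4\pi^4}$ and that larger threshold -- which is exactly where the discrete spectrum lives -- nothing in the cited estimates prevents essential spectrum of $T_\lambda+\alpha\mathbbm{1}$ at $0$; the convolution part is a genuinely non-compact singular integral operator, so Fredholmness is not automatic. Establishing that at $\lambda=\tfrac{\alpha^2}{4\pi^4}$ the form of the charges is positive up to a finite-dimensional defect is the heart of the matter: the paper does this in Section \ref{sec:finite_disc_spec} via the decomposition \eqref{Tlambda_decomposition}, the Hilbert--Schmidt property of $Q_{\lambda,n}$ (Lemma \ref{lem:Qlambdan}), the bound $\|L_{\lambda,n}\|<2\pi^2$ on the odd sector (Lemma \ref{lem:Llambdan}, where $\Lambda(m)<1$ enters), and the reduction to odd charges (Lemma \ref{lem:only_odd_symm}). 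You explicitly defer this as ``the truly delicate point'' without supplying an argument, so the inclusion $\sigma_{\mathrm{ess}}(H_\alpha)\subseteq[-\tfrac{\alpha^2}{4\pi^4},+\infty)$, and with it \eqref{eq:where_disc_spec}, remains unproved.
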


Next, we show that even when the discrete spectrum is not a priori empty, it can only consist of finitely many eigenvalues. This excludes the accumulation of eigenvalues to the bottom of the essential spectrum, a spectral feature that for three-body interacting systems is also known as the Efimov effect \cite{Braaten-Hammer-2006,Petrov-Ultracoldgases-LH2010}.

\begin{theorem}[Finiteness of the discrete spectrum]\label{thm:disc-spec-finite}
 Let $m>m^*$ and $\alpha<0$. Then $\sigma_{\mathrm{disc}}(H_\alpha)$ consists of only finitely many eigenvalues (each of which has of course finite multiplicity). 
\end{theorem}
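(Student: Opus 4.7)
The plan is to translate the counting of eigenvalues of $H_\alpha$ below the essential-spectrum threshold $-\lambda_*$, where $\lambda_*:=\alpha^2/(4\pi^4)$, into a spectral counting problem for a self-adjoint operator on the charge space $L^2(\mathbb{R}^3)$, and then to exploit the sub-criticality encoded in $\Lambda(m)<1$, which holds precisely because $m>m^*$. By Theorem \ref{thm:ess-spec} the entire $\sigma_{\mathrm{disc}}(H_\alpha)$ lies in the compact interval $[-\alpha^2/(4\pi^4(1-\Lambda(m)^2)),-\lambda_*)$, so finiteness amounts to bounding the number of eigenvalues accumulating (a priori) at $-\lambda_*$ from below.

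First I would use the closed quadratic form \eqref{eq:DHab_form} at the specific choice $\lambda=\lambda_*$, which after a direct manipulation rewrites the shifted form as
\begin{equation*}
H_\alpha[g] + \lambda_*\|g\|_\cH^2 \;=\; \lambda_*\|F^{\lambda_*}\|^2 + H_{\mathrm{free}}[F^{\lambda_*}] + 2\,\langle \xi,(T_{\lambda_*}+\alpha\mathbbm{1})\xi\rangle
\end{equation*}
for every $g=F^{\lambda_*}+u_\xi^{\lambda_*}\in\mathcal{D}[H_\alpha]$. Since the $F^{\lambda_*}$-contribution is non-negative, while the decomposition $g\leftrightarrow(F^{\lambda_*},\xi)\in H^1_{\mathrm{f}}(\mathbb{R}^3\times\mathbb{R}^3)\oplus H^{1/2}(\mathbb{R}^3)$ is a topological direct sum (from the injectivity $u_\xi^\lambda=u_\eta^\lambda\Rightarrow\xi=\eta$ recalled after \eqref{eq:u_xi}), the Glazman / min--max principle gives the identity $N_-(H_\alpha+\lambda_*\mathbbm{1})=n_-(T_{\lambda_*}+\alpha\mathbbm{1})$, where $N_-$ counts eigenvalues of $H_\alpha$ strictly below $-\lambda_*$ (with multiplicity) and $n_-$ counts the strictly negative eigenvalues of the charge operator. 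The task thus reduces to proving $n_-(T_{\lambda_*}+\alpha\mathbbm{1})<\infty$.

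To handle the charge operator I would split $T_{\lambda_*}=M_{\lambda_*}+A_{\lambda_*}$ according to \eqref{eq:Tlambda}, noting that $V(p):=M_{\lambda_*}(p)+\alpha=2\pi^2(\sqrt{\nu p^2+\lambda_*}-\sqrt{\lambda_*})$ is non-negative and vanishes only quadratically at $p=0$. For $\beta>0$ the operator $V+\beta$ is strictly positive, and the Birman--Schwinger principle provides
\begin{equation*}
n_-(T_{\lambda_*}+\alpha\mathbbm{1}) \;=\; \lim_{\beta\downarrow 0}\#\bigl\{\text{eigenvalues of }B_\beta\text{ in }(1,+\infty),\text{ counted with multiplicity}\bigr\},
\end{equation*}
where $B_\beta:=-(V+\beta)^{-1/2}A_{\lambda_*}(V+\beta)^{-1/2}$. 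It then suffices to show that, uniformly in $\beta\in(0,1]$, $B_\beta$ decomposes as a compact operator plus a bounded operator of norm strictly less than $1$. The bounded part is extracted at high momenta: by the Mellin-transform / partial-wave analysis of $T_0$ used in \cite{CDFMT-2012,MO-2017}, which is exactly what produces the Efimov transcendental function \eqref{eq:Lambdam}, one has the scale-invariant form bound $-A_0\leq\Lambda(m)\,M_0$, so that projecting $B_\beta$ onto $\{|p|,|q|>R\}$ one obtains operator norm $\leq\Lambda(m)+o_R(1)<1$ as $R\to\infty$, since the differences $A_{\lambda_*}-A_0$ and $V-M_0$ are small on that region. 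The complementary low-momentum piece of $B_\beta$ together with the cross terms is Hilbert--Schmidt uniformly in $\beta$, because on $\{|p|\leq R\}$ the kernel of $A_{\lambda_*}$ is smooth and bounded thanks to the infrared regularisation by $\lambda_*>0$, while the mild singularity $(V+\beta)^{-1/2}\asymp|p|^{-1}$ at the origin is square-integrable in dimension three.

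The main technical obstacle lies in this high/low-momentum splitting: the integral operator $A_{\lambda_*}$ is not compact relative to $M_{\lambda_*}$, since its kernel is asymptotically scale-invariant, so Weyl's theorem on essential spectra cannot be invoked off the shelf. Overcoming this requires leveraging the strict inequality $\Lambda(m)<1$ (which fails precisely at the stability threshold $m=m^*$ and would allow the Efimov-type accumulation being excluded here) to absorb the non-compact high-momentum tail into a contraction, and then isolating a genuinely compact low-momentum remainder. Once $B_\beta$ is presented uniformly in $\beta$ as a contraction of operator norm $<1$ plus a compact operator, Weyl's theorem implies that only finitely many of its eigenvalues can exceed $1$, and the finiteness of $\sigma_{\mathrm{disc}}(H_\alpha)$ follows at once.
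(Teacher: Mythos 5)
Your proposal is in substance the same argument as the paper's: the paper also works at $\lambda_*=\alpha^2/(4\pi^4)$, reduces finiteness of $\sigma_{\mathrm{disc}}(H_\alpha)$ to showing that the charge form $\langle\xi,(T_{\lambda_*}+\alpha\mathbbm{1})\xi\rangle$ is non-negative off a finite-dimensional subspace, and establishes this by conjugating with the square root of the diagonal part $\theta_{\lambda_*}(p)^2=\sqrt{\nu p^2+\lambda_*}-\sqrt{\lambda_*}$ (your $(V+\beta)^{-1/2}$, with $\beta=0$) and splitting the resulting operator into a high-momentum scale-invariant piece controlled via the Mellin transform by $\Lambda(m)<1$ plus a Hilbert--Schmidt low-momentum remainder -- this is exactly the Minlos--Yoshitomi decomposition $T_\lambda=2\pi^2\sqrt{\lambda}\,\mathbbm{1}+R_\lambda(2\pi^2\mathbbm{1}+L_{\lambda,n}-Q_{\lambda,n})R_\lambda$ of \eqref{Tlambda_decomposition}; your Birman--Schwinger packaging is only a cosmetic difference, and the $\beta$-regularisation is dispensable since $|p|^{-1}$ is locally square-integrable in $\mathbb{R}^3$. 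One point needs repair: the claim that the high-momentum tail of $B_\beta$ has \emph{operator norm} $\leqslant\Lambda(m)+o_R(1)<1$ is false on the even angular-momentum sectors, where the symmetrised scale-invariant operator $\mathcal{L}_{\ell=0}$ has norm $(m+1)\arcsin(\frac{1}{m+1})/\sqrt{\nu}$, which exceeds $1$ for $m$ near $m^*$; there the operator is, however, \emph{negative}, so either you must replace the norm bound by the one-sided form bound $B_\beta\leqslant\Lambda(m)\mathbbm{1}+(\text{compact})$ (which still yields finitely many eigenvalues above $1$ by min--max), or first discard the even sectors using $\Psi_{\lambda,\ell}\geqslant 0$ for even $\ell$, which is how the paper proceeds (Lemma \ref{lem:only_odd_symm} and the restriction to $L^2_-(\mathbb{R}^3)$ in Lemma \ref{lem:Llambdan}). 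With that one-line fix, and a uniform-in-$\beta$ Hilbert--Schmidt bound on the compact part to control the eigenvalue count in the limit $\beta\downarrow 0$ (or simply setting $\beta=0$), your argument goes through.
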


That the portion of discrete spectrum below $-\frac{\alpha^2}{4\pi^4}$ is finite and that $\inf\sigma_{\mathrm{ess}}(H_\alpha)=-\frac{\alpha^2}{4\pi^4}$ when $\alpha<0$ was recently proved by Yoshitomi \cite{Yoshitomi_MathSlov2017} only for very large masses. Our Theorems \ref{thm:ess-spec} and \ref{thm:disc-spec-finite} are valid for arbitrary mass parameter above the threshold $m^*$.

We then move to the question of the existence of bound states for $H_\alpha$ for an attractive contact interaction ($\alpha<0$).

In a previous work by one of us in collaboration with Schmidbauer \cite{michelangeli-schmidbauer-2013} the eigenvalue problem for $H_\alpha$ was turned into an equivalent eigenvalue problem for a radial bounded operator whose much more manageable numerical diagonalisability gave evidence that $H_\alpha$ admits a ground state for any value $m\in(m^*,m^{**})$ of the mass parameter. The threshold $m^*\approx(13.607)^{-1}$ is the one introduced in \eqref{eq:Lambdam}. The threshold $m^{**}\approx(8.62)^{-1}$, that is implicitly given by the exact formula \eqref{eq:m**root}, appears to have a central relevance in the family of Ter-Martirosyan-Skornyakov Hamiltonians of contact interaction for the 2+1 fermionic trimer: it is indeed the threshold that is expected to discriminate between the presence of a unique TMS Hamiltonian ($m>m^{**}$) and an infinite multiplicity of TMS Hamiltonians ($m<m^{**}$) \cite{CDFMT-2015,MO-2017}.

Our next result proves that in a slightly larger mass regime, $m\in(m^*,m_\star)$, there exist indeed bound states below the bottom of the essential spectrum. In addition, we prove that the corresponding eigenfunction is of the form $u_\xi^\lambda$ where the charge $\xi$ carries a specific angular symmetry, and we also prove that there are no bound states for sufficiently large masses.

\begin{theorem}[Existence and non-existence of eigenvalues]\label{thm:EVexist} Let $\alpha<0$.
\begin{itemize}
 \item[(i)] All eigenvalues $-E$ of $H_\alpha$ below $\inf\sigma_{\mathrm{ess}}(H_\alpha)$ have eigenfunctions of the form $u_\xi^{E}$ defined in \eqref{eq:u_xi}, where the charge $\xi$ belongs to the eigenspace of angular momentum quantum number $\ell=1$.
 \item[(ii)] $H_\alpha$ has eigenvalues below $\inf\sigma_{\mathrm{ess}}(H_\alpha)$ for any value $m\in(m^*,m_\star)$ of the mass parameter, the threshold $m_\star\approx(8.587)^{-1}$ being defined in \eqref{eq:defmsubstar} and satisfying $m_\star>m^{**}$. 
 \item[(iii)] There exists a mass $M_\star\leqslant(2.617)^{-1}$ such that when $m>M_\star$ the operator $H_\alpha$ has no eigenvalues below $\inf\sigma_{\mathrm{ess}}(H_\alpha)$.  
\end{itemize}
\end{theorem}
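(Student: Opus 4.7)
\emph{Plan.} The common starting point for all three parts is the reduction of the eigenvalue equation to a spectral problem on the charge. Suppose $H_\alpha g=-E g$ with $-E<\inf\sigma_{\mathrm{ess}}(H_\alpha)=-\alpha^2/(4\pi^4)$, so $E>\alpha^2/(4\pi^4)>0$. Choosing $\lambda:=E$ in the decomposition \eqref{eq:DHab} and writing $g=F^E+u_\xi^E$, \eqref{eq:action_DHab} turns the eigenvalue equation into $(H_{\mathrm{free}}+E)F^E=0$; since $H_{\mathrm{free}}\geqslant\mathbbm{O}$ and $E>0$, this forces $F^E=0$, so $g=u_\xi^E$. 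The condition $(\textsc{tms}')$ then collapses to $(T_E+\alpha)\xi=0$ in $H^{1/2}(\mathbb{R}^3)$. Eigenvalues of $H_\alpha$ below $\inf\sigma_{\mathrm{ess}}(H_\alpha)$ thus correspond to pairs $(E,\xi)$ with $E>\alpha^2/(4\pi^4)$ and $\xi\in H^{1/2}(\mathbb{R}^3)\setminus\{0\}$ lying in $\ker(T_E+\alpha)$.

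For part (i), I would decompose $T_E=\bigoplus_{\ell\geqslant 0}T_E^{(\ell)}$ along spherical harmonics, using rotational invariance. Via the Neumann expansion $(z+\cos\theta)^{-1}=\sum_\ell(2\ell+1)(-1)^\ell Q_\ell(z)P_\ell(\cos\theta)$ (with $Q_\ell$ the Legendre functions of the second kind and $z=(p^2+q^2+E)/(\mu pq)>1$), the integral part of $T_E^{(\ell)}$ acts on the radial factor with kernel of sign $(-1)^\ell$. On even-$\ell$ sectors the kernel is strictly positive, so $T_E^{(\ell)}+\alpha\geqslant 2\pi^2\sqrt{\nu p^2+E}-|\alpha|\geqslant 2\pi^2\sqrt{E}-|\alpha|>0$, excluding eigenvectors directly. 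On odd-$\ell$ sectors with $\ell\geqslant 3$, the equation rewrites as the Birman--Schwinger identity $1\in\sigma(\mathcal{B}^{(\ell)}_\gamma)$ for the compact positive operator $\mathcal{B}^{(\ell)}_\gamma:=(D_E-|\alpha|)^{-1/2}|K_E^{(\ell)}|(D_E-|\alpha|)^{-1/2}$; the pointwise monotonicity $Q_\ell(z)<Q_1(z)$ for $\ell\geqslant 2$ and $z>1$, combined with a Perron--Frobenius-type strict comparison for positive-kernel compact operators and the Minlos bound $\|\mathcal{B}^{(\ell)}_\gamma\|\leqslant\Lambda_\ell(m)<\Lambda(m)<1$ valid for $m>m^*$, rules out eigenvectors in those sectors as well. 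Only $\ell=1$ remains.

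In the $\ell=1$ sector, writing $\xi(p)=f(|p|)Y_{1,m}(\widehat p)$ and passing unitarily to the radial profile on $L^2(\mathbb{R}_+)$ via $\widetilde f(\widetilde p)\propto\widetilde p\,f(\sqrt{E}\widetilde p)$, the rescaling $p=\sqrt{E}\widetilde p$ reduces the eigenvalue equation to the $E$-independent one-parameter problem
\begin{equation*}
\bigl(\sqrt{\nu\widetilde p^2+1}-\gamma\bigr)\widetilde f(\widetilde p)\;=\;\frac{2}{\pi\mu}\int_0^\infty Q_1\!\!\left(\frac{\widetilde p^2+\widetilde q^2+1}{\mu\widetilde p\widetilde q}\right)\widetilde f(\widetilde q)\,\mathrm{d}\widetilde q\,,\qquad\gamma:=\frac{|\alpha|}{2\pi^2\sqrt{E}}\in(0,1)\,,
\end{equation*}
equivalently the Birman--Schwinger identity $\mathcal{B}(m,\gamma)\widetilde f=\widetilde f$ with $\mathcal{B}(m,\gamma):=\widetilde M_\gamma^{-1/2}\widetilde W\,\widetilde M_\gamma^{-1/2}$, $\widetilde M_\gamma(\widetilde p):=\sqrt{\nu\widetilde p^2+1}-\gamma$, and $\widetilde W$ a compact positive integral operator. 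I would then show that $\gamma\mapsto\|\mathcal{B}(m,\gamma)\|$ is continuous and monotone non-decreasing on $[0,1]$, with continuous extension at $\gamma=1$ thanks to the $\widetilde f\sim\widetilde p$ behaviour at the origin (imposed by $\ell=1$) regularising the $\widetilde p^{-1}$ singularity of $\widetilde M_1^{-1/2}$, and the $(\widetilde p\widetilde q)^2$ vanishing of the kernel from $Q_1(z)\sim(3z^2)^{-1}$ at infinity. Since $\|\mathcal{B}(m,0)\|=\Lambda(m)<1$ for $m>m^*$, an eigenvalue of $H_\alpha$ below $\inf\sigma_{\mathrm{ess}}$ exists if and only if $\mathcal{N}(m):=\|\mathcal{B}(m,1)\|\geqslant 1$, the crossover $\gamma^*(m)\in(0,1)$ being furnished by the intermediate value theorem.

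Part (ii) is then reduced to proving $\mathcal{N}(m)>1$ throughout $(m^*,m_\star)$: I would evaluate the Rayleigh quotient $\langle\widetilde f,\mathcal{B}(m,1)\widetilde f\rangle/\|\widetilde f\|^2$ on a judicious trial function $\widetilde f(\widetilde p)=\widetilde p\,\phi(\widetilde p)$ with $\phi$ a smooth rapidly decaying profile, reducing the inequality $\mathcal{N}(m)>1$ to the transcendental criterion \eqref{eq:defmsubstar} defining $m_\star$, which holds strictly below $m_\star$ by monotonicity of the underlying integrals in $m$. Part (iii) is reduced to a uniform upper bound $\|\mathcal{B}(m,\gamma)\|\leqslant c(m)$ for all $\gamma\in[0,1]$: Schur's test applied to the kernel with a weight of the form $\omega(\widetilde p)=\widetilde p(1+\widetilde p^2)^{-1/2}$, combined with $Q_1(z)\lesssim z^{-2}$ and $z\geqslant(m+1)(\widetilde p^2+\widetilde q^2+1)/(2\widetilde p\widetilde q)$, yields $c(m)=O((m+1)^{-1})$, strictly less than $1$ at the quantitative threshold $M_\star\leqslant(2.617)^{-1}$. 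The main obstacle is the quantitative side of (ii): producing a trial function sharp enough to realise $\mathcal{N}(m)>1$ uniformly across $(m^*,m_\star)$ and to match the threshold \eqref{eq:defmsubstar} at the endpoint, while rigorously controlling the $\gamma\uparrow 1$ passage so that the top eigenvalue of $\mathcal{B}(m,\gamma)$ converges to that of the formal limit $\mathcal{B}(m,1)$.
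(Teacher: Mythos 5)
Your overall architecture matches the paper's: reduction of the eigenvalue equation to $(T_E+\alpha)\xi=0$ (so $g=u_\xi^E$), exclusion of even $\ell$ by positivity of the off-diagonal part, exclusion of odd $\ell\geqslant 3$, a variational/trial-function argument for existence at small mass, and a Schur test for non-existence at large mass. However, your treatment of the sectors $\ell=3,5,\dots$ has a genuine gap. You claim that the pointwise domination $|Q_\ell(z)|<|Q_1(z)|$ together with ``the Minlos bound $\|\mathcal{B}^{(\ell)}_\gamma\|\leqslant\Lambda_\ell(m)<\Lambda(m)<1$'' rules out eigenvectors. The quantity $\Lambda_\ell(m)$ controls the \emph{scale-invariant} operator with weight $(\sqrt{\nu}\,|p|)^{-1/2}$, whereas your Birman--Schwinger operator at $\gamma=1$ carries the weight $(\sqrt{\nu p^2+1}-1)^{-1/2}\sim |p|^{-1}$ near the origin, which is strictly more singular; no pointwise domination gives $\|\mathcal{B}^{(\ell)}_1\|\leqslant\Lambda_\ell(m)$. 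Indeed the analogous inequality is \emph{false} for $\ell=1$: for $m\in(m^*,m_\star)$ one has $\Lambda(m)<1$ yet bound states exist, i.e.\ $\|\mathcal{K}_1^{(m)}\|>1>\Lambda(m)$. So comparison with $\Lambda_\ell(m)$, or with the $\ell=1$ kernel, cannot close the $\ell=3$ case. The paper instead proves $\|\mathcal{K}_3^{(m)}\|<1$ for \emph{all} $m\geqslant m^*$ by an explicit Schur test with weight $\theta_1(r)/r$ and the sharp constant $C_3\approx 6.07$ in $|\phi_3(z)|\leqslant C_3\cdot\frac{4}{35}z^{-4}$, landing at $C(m^*)\approx 0.87<1$; this quantitative computation is the actual content of the $\ell\geqslant 3$ exclusion and cannot be replaced by the soft comparison you propose.

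Two further points. First, your $\widetilde W$ is \emph{not} compact: the kernel is asymptotically homogeneous and the operator is, after a Mellin transform, a multiplier with non-vanishing symbol (this is precisely why the paper needs the Yoshitomi-type splitting $2\pi^2\mathbbm{1}+L_{\lambda,n}-Q_{\lambda,n}$ with only $Q_{\lambda,n}$ Hilbert--Schmidt). Consequently your ``eigenvalue exists iff $\mathcal{N}(m)\geqslant 1$'' equivalence via the intermediate value theorem needs the extra input that the essential spectrum of $\mathcal{B}(m,\gamma)$ lies below $1$; the paper avoids this entirely by the direct min-max argument of its variational lemma (exhibit $u_{\xi_\circ}^{\lambda_\circ}$ with Rayleigh quotient below $-\alpha^2/(4\pi^4)$). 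Second, for part (ii) the threshold $m_\star$ is \emph{defined} by the root of $\mathcal{B}(m)=2\pi^2$ for the concrete trial charge $f_\star(r)=e^{-br^2}/(r\ln(r+a))$ of \eqref{eq:choice}; leaving the trial function unspecified means the quantitative claims $m_\star\approx(8.587)^{-1}$ and $m_\star>m^{**}$ are not established, as you yourself flag. The monotonicity of $m\mapsto\mathcal{B}(m)$ that you invoke to propagate the inequality across $(m^*,m_\star)$ is correct and is the paper's Lemma on monotonicity of $T_\lambda$ in $m$, via the representation \eqref{eq:offdiagint}.
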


Our last result concerns the behaviour of the eigenvalues in the discrete spectrum of $H_\alpha$ in terms of the mass parameter $m$. The analytical/numerical discussion of the above-mentioned work \cite{michelangeli-schmidbauer-2013} showed that the ground state of $H_\alpha$ is monotone increasing in $m$. Here we establish the following.

\begin{theorem}[Monotonicity of eigenvalues]\label{thm:EV_monotonicity}
  Let $m>m^*$ and $\alpha<0$. The eigenvalues of $H_\alpha$ below $\inf\sigma_{\mathrm{ess}}(H_\alpha)$ are strictly monotone increasing in the mass parameter $m$.
\end{theorem}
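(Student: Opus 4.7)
The plan is to reduce the monotonicity of the eigenvalues of $H_\alpha$ to the monotonicity in $(m,E)$ of the eigenvalues of the charge operator $T_E$ restricted to the angular-momentum $\ell=1$ sector, which by Theorem \ref{thm:EVexist}(i) is the only one that can carry bound states. An eigenvalue $-E$ of $H_\alpha$ below $\inf\sigma_{\mathrm{ess}}(H_\alpha)$ has an eigenfunction of the form $u_\xi^E$; inserting it into $H_\alpha g=-Eg$ using the decomposition \eqref{eq:DHab} at $\lambda=E$ forces $F^E=0$ and collapses $(\textsc{tms}')$ to $(T_E+\alpha)\xi=0$, with $\xi$ in the $\ell=1$ subspace of $H^{\frac{1}{2}}(\mathbb{R}^3)$. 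Writing $\tau_k(m,E)$ for the $k$-th eigenvalue of $T_E|_{\ell=1}$, the discrete eigenvalues of $H_\alpha$ are thus implicitly defined by $\tau_k(m,E_k(m))=-\alpha$. The whole theorem therefore reduces to showing that $\tau_k(m,E)$ is strictly monotone increasing in both arguments, for then the implicit equation forces $E_k(\cdot)$ to be strictly decreasing, i.e.\ $-E_k(\cdot)$ strictly increasing, in $m$.

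By min-max, strict monotonicity of $\tau_k$ follows from strict monotonicity of the form $\xi\mapsto\langle\xi,T_E\xi\rangle$ on the $\ell=1$ subspace. Direct differentiation in \eqref{eq:Tlambda}, using $d\nu/dm=2/(m+1)^3>0$ and $d\mu/dm=-2/(m+1)^2<0$, yields
\begin{equation*}
  \frac{d}{dm}\langle\xi,T_E\xi\rangle \;=\; \frac{2\pi^{2}}{(m+1)^{3}}\!\int\!\frac{p^{2}|\widehat{\xi}|^{2}}{\sqrt{\nu p^{2}+E}}\,dp \,+\, \frac{2}{(m+1)^{2}}\!\iint\!\frac{(p\cdot q)\,\overline{\widehat{\xi}(p)}\widehat{\xi}(q)}{(p^{2}+q^{2}+\mu p\cdot q+E)^{2}}\,dp\,dq
\end{equation*}
and
\begin{equation*}
  \frac{d}{dE}\langle\xi,T_E\xi\rangle \;=\; \pi^{2}\!\int\!\frac{|\widehat{\xi}|^{2}}{\sqrt{\nu p^{2}+E}}\,dp \,-\, \iint\!\frac{\overline{\widehat{\xi}(p)}\widehat{\xi}(q)}{(p^{2}+q^{2}+\mu p\cdot q+E)^{2}}\,dp\,dq\,.
\end{equation*}
In both formulas the first term is manifestly positive for $\xi\ne 0$, so the task reduces to showing that, for charges of the form $\widehat{\xi}(p)=f(|p|)Y(\hat{p})$ with $Y$ a degree-one spherical harmonic, the cross integral in the $m$-derivative is non-negative while the one in the $E$-derivative is non-positive.

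Both signs can be extracted by a partial-wave decomposition followed by a Laplace-transform/series argument. Funk--Hecke projects the angular part of the first kernel onto $2\pi\int_{-1}^{1}t^{2}(A+Bt)^{-2}\,dt$ and that of the second onto $2\pi\int_{-1}^{1}t(A+Bt)^{-2}\,dt$, with $A=p^{2}+q^{2}+E$ and $B=\mu pq<A$; expanding both in powers of $B/A$ produces series whose general term is $\mu^{2k}p^{a}q^{a}(p^{2}+q^{2}+E)^{-b}$, with strictly positive coefficients in the first case and strictly negative ones in the second. Using the Laplace representation
\[
  \frac{1}{(p^{2}+q^{2}+E)^{b+1}}\;=\;\frac{1}{b!}\!\int_{0}^{\infty}\!s^{b}e^{-sE}\bigl(e^{-sp^{2}}\bigr)\bigl(e^{-sq^{2}}\bigr)\,ds\,,
\]
each such term becomes an $s$-integral of the rank-one kernel $(p^{a}e^{-sp^{2}})\otimes(q^{a}e^{-sq^{2}})$ on the radial $L^{2}$, which is positive semi-definite. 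Summing then shows that the radial integral operator associated with the $m$-derivative cross integral is strictly positive semi-definite and the one associated with the $E$-derivative cross integral is strictly negative semi-definite, on the $\ell=1$ sector. Together with the positive diagonal contributions this gives strict monotonicity of $\langle\xi,T_E\xi\rangle$ in both $m$ and $E$, hence of each $\tau_k(m,E)$, and the theorem follows.

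The main obstacle is precisely this signed semi-definiteness step. Pointwise sign information on the two angular integrals is straightforward, but the decisive fact is that the resulting radial integral operators carry a definite sign \emph{as forms}, not merely a definite pointwise sign of the kernel. The Laplace/series route above appears unavoidable: a direct Cauchy--Schwarz comparison does not close the argument, and indeed the six-dimensional kernel $(p^{2}+q^{2}+\mu p\cdot q+E)^{-2}$ is not itself sign-definite as an operator on the whole of $L^{2}(\mathbb{R}^{3})$, the desired sign being genuinely an effect of the $\ell=1$ projection.
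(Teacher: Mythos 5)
Your core mechanism is correct, and it is in fact the same engine the paper runs on: the strict monotonicity in $m$ of $\langle\xi,T_\lambda\xi\rangle$ on odd-$\ell$ charges, obtained from the Legendre expansion of the off-diagonal kernel followed by the Laplace representation of $(p^2+q^2+E)^{-b}$ as an integral of rank-one Gaussian kernels. This is exactly the paper's representation \eqref{eq:offdiagint} and Lemma \ref{lem:T_monotonicity}, and your sign bookkeeping checks out: on $\ell=1$ only even powers of $B/A$ survive in the $t^2$-projection (positive semi-definite pieces) and only odd powers in the $t$-projection (negative semi-definite pieces), so the $m$-derivative and the $E$-derivative of the form are both strictly positive there. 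Your closing remark that the sign is genuinely an effect of the odd-$\ell$ projection, not of any pointwise positivity of the six-dimensional kernel, is also correct.

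The genuine gap is the sentence ``the discrete eigenvalues of $H_\alpha$ are thus implicitly defined by $\tau_k(m,E_k(m))=-\alpha$.'' The reduction you perform (forcing $F^E=0$, collapsing $(\textsc{tms}')$ to $(T_E+\alpha)\xi=0$) reproduces Lemma \ref{lem:reduction_lemma} together with Lemma \ref{lem:only_odd_symm}, and that gives only a set-level correspondence: $-E\in\sigma_{\mathrm{disc}}(H_\alpha)$ iff $-\alpha$ is an eigenvalue of $T_E$ on the relevant sector, with matching multiplicities. It does \emph{not} give that the $k$-th eigenvalue of $H_\alpha$ from the bottom corresponds to the $k$-th min-max level $\tau_k(m,\cdot)$ of $T_E$. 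That index-matching is a Birman--Schwinger counting identity which, since $H_\alpha$ is not of the form $H_0+V$ with a factorizable $V$, has to be extracted from the quadratic form \eqref{eq:DHab_form} by tracking the branches $E\mapsto\tau_k(m,E)$ across the level $-\alpha$ over the whole admissible window, using their continuity, your $E$-monotonicity, and the compactness input behind Theorem \ref{thm:disc-spec-finite}; none of this is in your write-up. The paper bypasses the issue entirely: after Lemma \ref{lem:T_monotonicity} it runs a direct min-max induction, taking the first $k+1$ mutually orthogonal eigenfunctions $u_{\xi_j,m_2}^{\lambda_j(m_2)}$ of $H_\alpha^{(m_2)}$, replacing only the mass in the singular functions to obtain trial functions $u_{\xi_j,m_1}^{\lambda_j(m_2)}\in\mathcal{D}[H_\alpha^{(m_1)}]$ on which the form \eqref{eq:DHab_form} is strictly below $-\lambda_j(m_2)$, and invoking Lemma \ref{lem:linear_independence} to guarantee that these deformed functions still span a $(k+1)$-dimensional space. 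That route needs neither the $E$-monotonicity of $T$ nor any branch counting. To make your proof complete you must either prove the counting identity, or convert your $m$-monotonicity of the form directly into such a trial subspace for the smaller mass --- in which case the linear-independence of the mass-deformed $u_\xi^\lambda$'s is the one point you cannot skip.
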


Let us conclude by remarking that our findings, based on the rigorous model for $H_\alpha$ presented in this Section, are consistent with the numerical and experimental evidence of the physical literature on the binding properties of the 2+1 fermionic trimer -- see, e.g., \cite[Sec.~2-3]{Kartavtsev-Malykh-2007}, \cite[Sec.~3]{Endo-Naidon-Ueda-2011}, as well as the above-mentioned works
\cite{Braaten-Hammer-2006,Petrov-Ultracoldgases-LH2010,Castin-Werner-2011_-_review,Kartavtsev-Malykh-2016,Kartavtsev-Malykh-2016-proc}, based on very similar, yet only formal models.

\section{Arbitrariness of the parameter $\lambda>0$}\label{sec:lambda_arbitrary}

As announced in Section \ref{sec:setting_and_main_results}, let us show here that the definition \eqref{eq:DHab}-\eqref{eq:action_DHab} of the operator $H_\alpha$, as well as the definition \eqref{eq:DHab_form} of the quadratic form of $H_\alpha$, are independent of $\lambda$, and hence that Theorem \ref{thm:properties_of_Halpha} and Proposition \ref{prop:TMS_condition} that we recalled from the previous literature, as well as our new results, are all valid irrespective of the auxiliary parameter $\lambda$ chosen in the definition.

Let us start with the operator's definition.

\begin{lemma}\label{lem:samedomainshift-1}
 Assume that the operator $H_\alpha$ is only defined by \eqref{eq:DHab}-\eqref{eq:action_DHab} for a given $\lambda>0$. Then $H_\alpha$ satisfies \eqref{eq:DHab}-\eqref{eq:action_DHab} for any other $\lambda'>0$.
\end{lemma}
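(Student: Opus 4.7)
The plan is to fix $g = F^\lambda + u_\xi^\lambda \in \mathcal{D}(H_\alpha)$ as defined by \eqref{eq:DHab}-\eqref{eq:action_DHab} at scale $\lambda$ and to exhibit an analogous decomposition at the new parameter $\lambda'$ \emph{with the same charge $\xi$}; by the symmetry between $\lambda$ and $\lambda'$ this will give both inclusions between the two candidate domains as well as the agreement of the actions. Concretely I set $F^{\lambda'} := g - u_\xi^{\lambda'} = F^\lambda + (u_\xi^\lambda - u_\xi^{\lambda'})$ and reduce every requirement to properties of the difference $u_\xi^\lambda - u_\xi^{\lambda'}$. From \eqref{eq:u_xi} one has the resolvent-type identity
\begin{equation*}
\widehat{u_\xi^\lambda}(p_1,p_2) - \widehat{u_\xi^{\lambda'}}(p_1,p_2) \;=\; \frac{(\lambda'-\lambda)\bigl(\widehat{\xi}(p_1)-\widehat{\xi}(p_2)\bigr)}{(p_1^2+p_2^2+\mu p_1\cdot p_2+\lambda)(p_1^2+p_2^2+\mu p_1\cdot p_2+\lambda')}\,,
\end{equation*}
and the extra inverse-quadratic decay is enough to place $u_\xi^\lambda - u_\xi^{\lambda'}$ in $H^2_{\mathrm{f}}(\mathbb{R}^3\times\mathbb{R}^3)$ by a weighted $L^2$-estimate (only $\xi\in H^{-1/2}$ is actually needed for this). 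In particular $F^{\lambda'}\in H^2_{\mathrm{f}}$.

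Next I check the two conditions on the charge. Since $\xi$ is unchanged, $\xi\in H^{1/2}(\mathbb{R}^3)$ is automatic. For $(T_{\lambda'}+\alpha)\xi\in H^{1/2}$ I compute directly from \eqref{eq:Tlambda} that
\begin{equation*}
\widehat{(T_{\lambda'}-T_\lambda)\xi}(p) = 2\pi^2\bigl(\sqrt{\nu p^2+\lambda'}-\sqrt{\nu p^2+\lambda}\bigr)\widehat{\xi}(p) + (\lambda-\lambda')\!\int_{\mathbb{R}^3}\!\frac{\widehat{\xi}(q)\,\mathrm{d}q}{(p^2+q^2+\mu p\cdot q+\lambda)(p^2+q^2+\mu p\cdot q+\lambda')}\,.
\end{equation*}
The first summand is bounded on $H^{1/2}$ because the difference of square roots is $O(\langle p\rangle^{-1})$; the second is an integral operator with a kernel strictly more regularising than the one in \eqref{eq:Tlambda}, hence maps $H^{1/2}$ into $H^{1/2}$ (and better) by the regularity statement of Proposition \ref{prop:regpropTlambda}. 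Adding this to $(T_\lambda+\alpha)\xi\in H^{1/2}$ gives $(T_{\lambda'}+\alpha)\xi\in H^{1/2}$.

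For the boundary condition $(\textsc{tms}')$ at $\lambda'$ I exploit the fact that, while $\int \widehat{u_\xi^\lambda}(p_1,\cdot)\,\mathrm{d}p_2$ is only defined through the linearly divergent cut-off asymptotics \eqref{eq:TMS_pre-asymptotics}, the \emph{difference} $\int (\widehat{u_\xi^\lambda}-\widehat{u_\xi^{\lambda'}})(p_1,p_2)\,\mathrm{d}p_2$ is absolutely convergent thanks to the extra decay recorded above. Applying \eqref{eq:TMS_pre-asymptotics} to $\lambda$ and $\lambda'$ and subtracting, the divergent $4\pi R\widehat{\xi}(p_1)$ contributions cancel and one obtains
\begin{equation*}
\int_{\mathbb{R}^3}\bigl(\widehat{u_\xi^\lambda}-\widehat{u_\xi^{\lambda'}}\bigr)(p_1,p_2)\,\mathrm{d}p_2 \;=\; \widehat{(T_{\lambda'}-T_\lambda)\xi}(p_1)\,.
\end{equation*}
Summing this with $(\textsc{tms}')$ at $\lambda$ produces $(\textsc{tms}')$ at $\lambda'$.

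Finally, since $u_\xi^\lambda-u_\xi^{\lambda'}\in H^2$, the free operator $H_{\mathrm{free}}$ acts on it as a plain Fourier multiplier, and a one-line cancellation in momentum space using the resolvent identity above yields $(H_{\mathrm{free}}+\lambda')(u_\xi^\lambda-u_\xi^{\lambda'}) = (\lambda'-\lambda)\,u_\xi^\lambda$. Consequently $(H_{\mathrm{free}}+\lambda')F^{\lambda'} = (H_{\mathrm{free}}+\lambda)F^\lambda + (\lambda'-\lambda)(F^\lambda+u_\xi^\lambda) = (H_\alpha+\lambda)g + (\lambda'-\lambda)g = (H_\alpha+\lambda')g$, so the action \eqref{eq:action_DHab} at $\lambda'$ reproduces that at $\lambda$. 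The main obstacle I foresee is the verification of $(\textsc{tms}')$ at $\lambda'$: the pointwise integrals of $\widehat{u_\xi^\lambda}$ and $\widehat{u_\xi^{\lambda'}}$ are individually divergent, so one must keep the cancellation explicit at each finite cut-off $R$ before passing to the limit; this is precisely what the pre-asymptotic identity \eqref{eq:TMS_pre-asymptotics} is designed to enable.
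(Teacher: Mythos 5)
Your proposal is correct and follows essentially the same route as the paper: the same ansatz $F^{\lambda'}=F^\lambda+u_\xi^\lambda-u_\xi^{\lambda'}$ with $u_\xi^\lambda-u_\xi^{\lambda'}\in H^2_{\mathrm{f}}$, the same splitting of $(T_{\lambda'}-T_\lambda)\xi$ into a bounded Fourier multiplier plus a more regularising integral operator (the paper bounds the latter by a direct Cauchy--Schwarz kernel estimate rather than by invoking Proposition \ref{prop:regpropTlambda}, which strictly speaking concerns $T_\lambda$ itself and not the difference kernel), and the same cancellation of the $4\pi R\,\widehat{\xi}$ terms in \eqref{eq:TMS_pre-asymptotics} to propagate $(\textsc{tms}')$. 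Your explicit check that $(H_{\mathrm{free}}+\lambda')F^{\lambda'}=(H_\alpha+\lambda')g$ is a worthwhile addition that the paper leaves implicit.
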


\begin{proof}
Let us decompose a generic $g\in\mathcal{D}(H_\alpha)$ as $g=F^{\lambda}+u^{\lambda}_\xi$ according to \eqref{eq:DHab}.

Now for another $\lambda'>0$ let
\[
 F^{\lambda'}\;:=\;F^{\lambda}+u^{\lambda}_\xi-u^{\lambda'}_\xi\,,
\]
where functions of the form $u_\xi$ are defined in \eqref{eq:u_xi}. Since $u^{\lambda}_\xi-u^{\lambda'}_\xi\in H^2_\mathrm{f}(\mathbb{R}^3\times\mathbb{R}^3)$, $F^{\lambda'}$ too belongs to $H^2_\mathrm{f}(\mathbb{R}^3\times\mathbb{R}^3)$. Thus, the identity
\[
 g\;=\;F^{\lambda}+u^{\lambda}_\xi\;=\;F^{\lambda'}+u^{\lambda'}_\xi
\]
shows that the decomposition of $g$ into a regular part in $H^2_\mathrm{f}(\mathbb{R}^3\times\mathbb{R}^3)$ and a singular part of the type $u_\xi$ is valid also for $\lambda'$.

It remains to show that for the new parameter $\lambda'$ both the condition $(T_{\lambda'}+\alpha\mathbbm{1})\xi\in H^{\frac{1}{2}}(\mathbb{R}^3)$ and the boundary condition ($\textsc{tms}')$ of \eqref{eq:DHab} are preserved. Concerning the former, we observe from \eqref{eq:Tlambda} that
\[
 (T_{\lambda}-T_{\lambda'})\xi\;=\;Q\xi + R\xi
\]
where
\[
 \begin{split}
  \widehat{(Q\xi)}(p)\;&:=\;2\pi^2\big(\sqrt{\nu p^2+\lambda}-\sqrt{\nu p^2+\lambda'} \big)\,\widehat{\xi}(p) \\
  &=\; (\lambda-\lambda')\,\frac{2\pi^2}{\sqrt{\nu p^2+\lambda}+\sqrt{\nu p^2+\lambda'}}\,\widehat{\xi}(p)\\
  \widehat{(R\xi)}(p)\;&:=\;(\lambda'-\lambda)\int_{\mathbb{R}^3}\frac{\widehat{\xi}(q)}{(p^2+q^2+\mu p\cdot q+\lambda)(p^2+q^2+\mu p\cdot q+\lambda')}\,\ud q\,.
 \end{split}
\]
Since $Q$ is the Fourier multiplier by a bounded function, one has $\|Q\xi\|_{H^{\frac{1}{2}}}\lesssim\|\xi\|_{H^{\frac{1}{2}}}$. Moreover, using H\"older's inequality, 
\begin{align*}
&\left\lVert R \xi \right\rVert^2_{H^{\frac{1}{2}}(\mathbb{R}^3)}\;=\;\big\|\langle p \rangle^{\frac{1}{2}} \widehat{R\xi}\,\big\|_2^2  \nonumber\\ 
&\leqslant \;\left\lVert \xi \right\rVert^2_{H^{\frac{1}{2}}}(\lambda-\lambda')^2\!\iint_{\mathbb R^6}\frac{\langle p \rangle }{\langle q \rangle(p^2+q^2+\mu pq+\lambda)^2(p^2+q^2+\mu pq+\lambda')^2}\,\ud q\, \ud p
\end{align*}
and the latter integral is finite. This shows that $(T_{\lambda}-T_{\lambda'})\xi\in H^{\frac{1}{2}}(\mathbb{R}^3)$, whence also $(T_{\lambda'}+\alpha\mathbbm{1})\xi\in H^{\frac{1}{2}}(\mathbb{R}^3)$.

Last, concerning the boundary condition $(\textsc{tms}')$, one has
\[
 \begin{split}
  \int_{\mathbb{R}^3}\widehat{F^{\lambda'}}(p_1,p_2)\,\ud p_2\;&=\;\int_{\mathbb{R}^3}\Big(\widehat{F^{\lambda}}(p_1,p_2)+\widehat{u_\xi^{\lambda}}(p_1,p_2)-\widehat{u_\xi^{\lambda'}}(p_1,p_2)\Big)\,\ud p_2 \\
  &=\;\widehat{(T_\lambda\xi)}(p_1)+\alpha\,\widehat{\xi}(p_1)+\big(\,\widehat{(T_{\lambda'}\xi)}(p_1)-\widehat{(T_\lambda\xi)}(p_1) \,\big) \\
   &=\;\widehat{(T_{\lambda'}\xi)}(p_1)+\alpha\,\widehat{\xi}(p_1)
 \end{split}
\]
where we applied $(\textsc{tms}')$ to $F^{\lambda}$ and the asymptotics \eqref{eq:TMS_pre-asymptotics} in the second identity. Therefore, $(\textsc{tms}')$ is preserved.

The conclusion is that the overall decomposition \eqref{eq:DHab} of $\mathcal{D}(H_\alpha)$ is invariant under the change of $\lambda$, for arbitrary $\lambda>0$.
\end{proof}

Let us prove now, through an analogous, yet independent argument, that the quadratic form too for $H_\alpha$ is given by \eqref{eq:DHab_form} for arbitrary shift parameter.

\begin{lemma}\label{lem:samedomainshift-2}
 Assume that the quadratic form $H_\alpha$ is only defined by \eqref{eq:DHab_form} for a given $\lambda>0$. Then the form $H_\alpha$ satisfies \eqref{eq:DHab_form} for any other $\lambda'>0$.
\end{lemma}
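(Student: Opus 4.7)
The strategy mirrors that of Lemma \ref{lem:samedomainshift-1}: fix $g=F^\lambda+u^\lambda_\xi$ in the set defined by \eqref{eq:DHab_form} for the given $\lambda>0$, choose another $\lambda'>0$, and set
\[
 F^{\lambda'}\;:=\;F^\lambda+u^\lambda_\xi-u^{\lambda'}_\xi\,,\qquad w\;:=\;u^\lambda_\xi-u^{\lambda'}_\xi\,.
\]
As in the previous lemma, the Fourier transform \eqref{eq:u_xi} immediately gives $w\in H^2_{\mathrm{f}}(\mathbb{R}^3\times\mathbb{R}^3)\subset H^1_{\mathrm{f}}(\mathbb{R}^3\times\mathbb{R}^3)$, so $F^{\lambda'}\in H^1_{\mathrm{f}}(\mathbb{R}^3\times\mathbb{R}^3)$ and $g=F^{\lambda'}+u^{\lambda'}_\xi$ exhibits $g$ as an element of the form domain defined by \eqref{eq:DHab_form} at level $\lambda'$, with the same charge $\xi\in H^{\frac{1}{2}}(\mathbb{R}^3)$. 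It thus remains to prove that the functional on the r.h.s.~of \eqref{eq:DHab_form}, which I denote by $K_\lambda[g]$, satisfies $K_\lambda[g]=K_{\lambda'}[g]$; since the term $2\alpha\|\widehat\xi\|_2^2$ is $\lambda$-independent it suffices to prove this for the $\alpha=0$ part.

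The key observation comes from \eqref{eq:u_xi}: multiplying $\widehat w$ by $p_1^2+p_2^2+\mu p_1\cdot p_2+\lambda'$ one gets
\[
 (H_{\mathrm{free}}+\lambda')\,w\;=\;(\lambda'-\lambda)\,u^\lambda_\xi\,,
\]
which lives in $\cH$ because $u^\lambda_\xi\in\cH$. Expanding the quadratic form $\mathfrak h_{\lambda'}[F]:=\langle F,(H_{\mathrm{free}}+\lambda')F\rangle_\cH$ at $F^{\lambda'}=F^\lambda+w$, exploiting this identity to rewrite all cross terms as $L^2$-scalar products against $u^\lambda_\xi$, and finally expanding $\|g\|_\cH^2=\|F^\lambda+u^\lambda_\xi\|_\cH^2$, straightforward algebra reduces the desired identity $K_\lambda[g]=K_{\lambda'}[g]$ to the single scalar identity
\begin{equation}\label{eq:key-identity-plan}
 2\,\langle\xi,(T_{\lambda'}-T_\lambda)\xi\rangle_{L^2(\mathbb R^3)}\;=\;(\lambda'-\lambda)\,\langle u^{\lambda'}_\xi,u^\lambda_\xi\rangle_\cH\,.
\end{equation}

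To prove \eqref{eq:key-identity-plan} I would split $T_\lambda$ according to \eqref{eq:Tlambda} into its Fourier-multiplier part $A_\lambda$ (with symbol $2\pi^2\sqrt{\nu p^2+\lambda}$) and its integral part $B_\lambda$ (with kernel $(p^2+q^2+\mu p\cdot q+\lambda)^{-1}$). On the right-hand side, expanding $|\widehat\xi(p_1)-\widehat\xi(p_2)|^2$ and using the $p_1\leftrightarrow p_2$ symmetry of the denominator one recognises precisely $2\langle\xi,(B_{\lambda'}-B_\lambda)\xi\rangle$ as the contribution of the cross terms $\widehat\xi(p_1)\overline{\widehat\xi(p_2)}$ (plus conjugate), so that \eqref{eq:key-identity-plan} collapses to
\[
 \langle\xi,(A_{\lambda'}-A_\lambda)\xi\rangle\;=\;(\lambda'-\lambda)\iint_{\mathbb R^3\times\mathbb R^3}\frac{|\widehat\xi(p_1)|^2}{(\tau+\lambda)(\tau+\lambda')}\,\ud p_2\,\ud p_1\,,
\]
with $\tau=p_1^2+p_2^2+\mu p_1\cdot p_2$. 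Completing the square in $p_2$ and using the classical (regularised) integral $\int_{\mathbb R^3}\!\big((|k|^2+a)^{-1}-(|k|^2+b)^{-1}\big)\,\ud k=2\pi^2(\sqrt b-\sqrt a)$ with $a=\nu p_1^2+\lambda$, $b=\nu p_1^2+\lambda'$ converts the inner integral into $2\pi^2(\sqrt{\nu p_1^2+\lambda'}-\sqrt{\nu p_1^2+\lambda})$, which by definition of $A_\lambda$ recovers exactly the left-hand side.

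The main obstacle is precisely this last computation: individually the integrals $\int(\tau+\lambda)^{-1}\ud p_2$ diverge, and the algebraic identity \eqref{eq:key-identity-plan} works only because the integral and multiplier parts of $T_\lambda$ are not independent but are linked by this $\sqrt{\,\cdot\,}$ identity --- which is, in a sense, the very reason the definition \eqref{eq:Tlambda} of $T_\lambda$ contains those two terms in that precise balance. All the rest is bookkeeping of cross terms and standard expansions of the sesquilinear form $\mathfrak h_{\lambda'}$, which make sense because $F^\lambda,w,F^{\lambda'}\in H^1$ (indeed $w\in H^2$).
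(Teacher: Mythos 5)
Your plan is correct and follows essentially the same route as the paper's proof: the same change of decomposition $F^{\lambda'}=F^{\lambda}+u^{\lambda}_\xi-u^{\lambda'}_\xi$, the same reduction of the difference of the two evaluations of the form to the single identity $2\langle\xi,(T_{\lambda'}-T_\lambda)\xi\rangle=(\lambda'-\lambda)\langle u^{\lambda'}_\xi,u^{\lambda}_\xi\rangle_{\cH}$, and the same verification of that identity by symmetrising in $p\leftrightarrow q$ and performing the explicit $\ud q$ integration that converts the difference of the integral kernels into the difference of the $2\pi^2\sqrt{\nu p^2+\cdot}$ multipliers. The only (cosmetic) difference is that you handle the cross terms via $(H_{\mathrm{free}}+\lambda')(u^{\lambda}_\xi-u^{\lambda'}_\xi)=(\lambda'-\lambda)u^{\lambda}_\xi$, whereas the paper invokes $\mathring{H}^*u^{\lambda}_\xi=-\lambda u^{\lambda}_\xi$ together with $\mathring{H}\subset H_{\mathrm{free}}\subset\mathring{H}^*$ -- these are the same fact.
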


\begin{proof}
Let us decompose a generic $g\in\mathcal{D}[H_\alpha]$ as $g=F^{\lambda}+u^{\lambda}_\xi$ according to \eqref{eq:DHab_form}.
For another $\lambda'>0$ let
\[
 F^{\lambda'}\;:=\;F^{\lambda}+u^{\lambda}_\xi-u^{\lambda'}_\xi\,.
\]
Since $u^{\lambda}_\xi-u^{\lambda'}_\xi\in H^2_\mathrm{f}(\mathbb{R}^3\times\mathbb{R}^3)$, then $F^{\lambda'}\in H^1_\mathrm{f}(\mathbb{R}^3\times\mathbb{R}^3)$. Thus, the identity
\[\tag{i}
 g\;=\;F^{\lambda}+u^{\lambda}_\xi\;=\;F^{\lambda'}+u^{\lambda'}_\xi
\]
shows that the decomposition of $g$ into a regular part in $H^1_\mathrm{f}(\mathbb{R}^3\times\mathbb{R}^3)$ and a singular part of the type $u_\xi$ is valid also for $\lambda'$.

It remains to prove that the evaluation of $H_\alpha[g]$ is the same both with the $\lambda$-decomposition and with the $\lambda'$-decomposition of $g$. Explicitly, from \eqref{eq:DHab_form} one has
\[
 \begin{split}
  H_\alpha&[F^{\lambda'}+u^{\lambda'}_\xi]\;=\;H_\alpha[F^{\lambda}+u^{\lambda}_\xi] \\
  &=\;\lambda\|F^{\lambda'}\|_\cH^2-\lambda\|F^{\lambda'}+u_\xi^{\lambda'}\|_\cH^2+H_\mathrm{free}[F^{\lambda'}]+2\langle\xi,(T_{\lambda'}+\alpha\mathbbm{1}) \xi\rangle_{H^{\frac{1}{2}},H^{-\frac{1}{2}}}+\mathcal{K}_\alpha[g]\,,
 \end{split}
\]
where
\[\tag{ii}
 \begin{split}
  \mathcal{K}_\alpha[g]\;:=&\;\lambda\|F^{\lambda}\|_\cH^2-\lambda'\|F^{\lambda'}\|_{\cH}^2-\lambda\|F^{\lambda}+u_\xi^{\lambda}\|_\cH^2+\lambda'\|F^{\lambda'}+u_\xi^{\lambda'}\|_\cH^2 \\
  &\quad+H_\mathrm{free}[F^{\lambda}]-H_\mathrm{free}[F^{\lambda'}]+2\langle\xi,(T_{\lambda}-T_{\lambda'}) \xi\rangle_{H^{\frac{1}{2}},H^{-\frac{1}{2}}} \,,
 \end{split}
\]
and we need to prove that $\mathcal{K}_\alpha[g]= 0$.

We re-write the first line in the r.h.s.~above as
\[\tag{iii}
 \begin{split}
  &\lambda\|F^{\lambda}\|_\cH^2-\lambda'\|F^{\lambda'}\|_{\cH}^2-\lambda\|F^{\lambda}+u_\xi^{\lambda}\|_\cH^2+\lambda'\|F^{\lambda'}+u_\xi^{\lambda'}\|_\cH^2 \\
  &\quad =\;\langle F^{\lambda'},\lambda'u_\xi^{\lambda'}\rangle_{\cH}+\langle\lambda'u_\xi^{\lambda'},F^{\lambda'}\rangle_{\cH}+\lambda'\|u_\xi^{\lambda'}\|_{\cH}^2 \\
  &\qquad\quad-\langle F^{\lambda},\lambda u_\xi^{\lambda}\rangle_{\cH}-\langle\lambda u_\xi^{\lambda},F^{\lambda}\rangle_{\cH}-\lambda\|u_\xi^{\lambda}\|_{\cH}^2\,.
 \end{split}
\]
Moreover,
\[
 \begin{split}
  &H_\mathrm{free}[F^{\lambda}]-H_\mathrm{free}[F^{\lambda'}]\;=\;\langle H_{\mathrm{free}}^{\frac{1}{2}}F^\lambda,H_{\mathrm{free}}^{\frac{1}{2}}F^\lambda\rangle_{\cH}-\langle H_{\mathrm{free}}^{\frac{1}{2}}F^{\lambda'},H_{\mathrm{free}}^{\frac{1}{2}}F^{\lambda'}\rangle_{\cH} \\
  &\quad=\;\langle H_{\mathrm{free}}^{\frac{1}{2}}(F^\lambda-F^{\lambda'}),H_{\mathrm{free}}^{\frac{1}{2}}F^\lambda\rangle_{\cH}-\langle H_{\mathrm{free}}^{\frac{1}{2}}F^{\lambda'},H_{\mathrm{free}}^{\frac{1}{2}}(F^\lambda-F^{\lambda'})\rangle_{\cH}\,.
 \end{split}
\]
One has
\[
 \begin{split}
  \langle H_{\mathrm{free}}^{\frac{1}{2}}(F^\lambda-F^{\lambda'}),H_{\mathrm{free}}^{\frac{1}{2}}F^\lambda\rangle_{\cH}\;&=\;\langle H_{\mathrm{free}}^{\frac{1}{2}}(u_\xi^{\lambda'}-u_\xi^{\lambda}),H_{\mathrm{free}}^{\frac{1}{2}}F^\lambda\rangle_{\cH} \\
  &=\;\langle H_{\mathrm{free}}(u_\xi^{\lambda'}-u_\xi^{\lambda}),F^\lambda\rangle_{\cH} \\
  &=\;\langle \mathring{H}^*(u_\xi^{\lambda'}-u_\xi^{\lambda}),F^\lambda\rangle_{\cH} \\
  &=\;\langle(\lambda u_\xi^{\lambda}-\lambda' u_\xi^{\lambda'}),F^\lambda\rangle_{\cH}\,,
 \end{split}
\]
having used (i) for the first identity, the fact that $u^{\lambda}_\xi-u^{\lambda'}_\xi\in H^2_\mathrm{f}(\mathbb{R}^3\times\mathbb{R}^3)=\mathcal{D}(H_{\mathrm{free}})$ for the second identity, the inclusion $\mathring{H}\subset H_{\mathrm{free}}\subset \mathring{H}^*$ for the third identity, and the property \eqref{eq:kerHstar} for the last identity. Analogously,
\[
 \langle H_{\mathrm{free}}^{\frac{1}{2}}F^{\lambda'},H_{\mathrm{free}}^{\frac{1}{2}}(F^\lambda-F^{\lambda'})\rangle_{\cH}\;=\;\langle F^\lambda,(\lambda u_\xi^{\lambda}-\lambda' u_\xi^{\lambda'})\rangle_{\cH}\,,
\]
whence
\[\tag{iv}
 H_\mathrm{free}[F^{\lambda}]-H_\mathrm{free}[F^{\lambda'}]\;=\;\langle(\lambda u_\xi^{\lambda}-\lambda' u_\xi^{\lambda'}),F^\lambda\rangle_{\cH}-\langle F^\lambda,(\lambda u_\xi^{\lambda}-\lambda' u_\xi^{\lambda'})\rangle_{\cH}\,.
\]

Now, plugging (iii) and (iv) into (ii) yields
\[
\begin{split}
 \mathcal{K}_\alpha[g&]-2\langle\xi,(T_{\lambda}-T_{\lambda'}) \xi\rangle_{H^{\frac{1}{2}},H^{-\frac{1}{2}}}\;= \\
 &=\;\langle (F^{\lambda'}-F^{\lambda}),\lambda u_\xi^{\lambda}\rangle_{\cH}+\langle\lambda' u_\xi^{\lambda'},(F^{\lambda'}-F^{\lambda})\rangle_{\cH}-\lambda\|u_\xi^{\lambda}\|_{\cH}^2+\lambda'\|u_\xi^{\lambda'}\|_{\cH}^2 \\
 &=\;\langle (u_\xi^{\lambda}-u_\xi^{\lambda'}),\lambda u_\xi^{\lambda}\rangle_{\cH}+\langle\lambda' u_\xi^{\lambda'},(u_\xi^{\lambda}-u_\xi^{\lambda'})\rangle_{\cH}-\lambda\|u_\xi^{\lambda}\|_{\cH}^2+\lambda'\|u_\xi^{\lambda'}\|_{\cH}^2\,,
\end{split}
\]
having used again (i) in the last identity above, whence
\[\tag{v}
 \mathcal{K}_\alpha[g]\;=\;2\langle\xi,(T_{\lambda}-T_{\lambda'}) \xi\rangle_{H^{\frac{1}{2}},H^{-\frac{1}{2}}}-(\lambda-\lambda')\langle u_\xi^{\lambda'},u_{\xi}^\lambda\rangle_{\cH}\,.
\]

Next, for the vanishing of (v), using the computations made in the proof of Lemma \ref{lem:samedomainshift-1}, we have
\[\tag{vi}
 \begin{split}
  \langle\xi,&(T_{\lambda}-T_{\lambda'}) \xi\rangle_{H^{\frac{1}{2}},H^{-\frac{1}{2}}}\;=\;(\lambda-\lambda')\int_{\mathbb{R}^3}\ud p\,\overline{\widehat{\xi}(p)}\Big(\frac{2\pi^2\,\widehat{\xi}(p)}{\sqrt{\nu p^2+\lambda}+\sqrt{\nu p^2+\lambda'}}\,- \\
  &\qquad\qquad\qquad\qquad -\int_{\mathbb{R}^3}\ud q\,\frac{\widehat{\xi}(q)}{(p^2+q^2+\mu p\cdot q+\lambda)(p^2+q^2+\mu p\cdot q+\lambda')}\Big)\,.
 \end{split}
\]
On the other hand, through analogous computations, one has
\begin{equation*} \tag{vii}
\begin{split}
\langle u_{\xi}^{\lambda'}&, u_{\xi}^{\lambda}\rangle_\cH\;=\;\iint_{\mathbb{R}^3\times\mathbb{R}^3}{\frac{\overline{\,\widehat{\xi}(p)}-\overline{\,\widehat{\xi}(q)}}{p^2+q^2+\mu p\cdot q+\lambda'}\:\frac{\widehat \xi (p)-\widehat \xi (q)}{p^2+q^2+\mu p\cdot q+\lambda}\,\ud p\,\ud q} \\
&=\;2\iint_{\mathbb{R}^3\times\mathbb{R}^3}{\frac{\overline{\,\widehat{\xi}(p)}\,\widehat\xi(p)-\overline{\,\widehat{\xi}(p)}\,\widehat\xi(q)}{(p^2+q^2+\mu p\cdot q+\lambda)(p^2+q^2+\mu p\cdot q+\lambda')}\,\ud p \,\ud q} \\
&=\;2\int_{\mathbb{R}^3}\ud p\,\overline{\widehat{\xi}(p)}\Big(\frac{2\pi^2\,\widehat{\xi}(p)}{\sqrt{\nu p^2+\lambda}+\sqrt{\nu p^2+\lambda'}}\,- \\
  &\qquad\qquad\qquad -\int_{\mathbb{R}^3}\ud q\,\frac{\widehat{\xi}(q)}{(p^2+q^2+\mu p\cdot q+\lambda)(p^2+q^2+\mu p\cdot q+\lambda')}\Big),
\end{split}
\end{equation*}
where we used the symmetry under exchange $p\leftrightarrow q$ in the second step and we computed an explicit integration in $\ud q$ in the last step. Plugging (vi) and (vii) into (v) proves indeed that $\mathcal{K}_{\alpha}[g]=0$, which completes the proof.
\end{proof}

\section{Improved lower bound for $H_\alpha$}\label{sec:improved_lower_bound}

In this Section we show how, based on the findings of \cite{CDFMT-2012}, the lower bound determined therein for the operator $H_\alpha$ when $\alpha<0$, that is, the estimate \eqref{eq:old_lower_bound}, can be improved to the actual lower bound reported in Theorem \ref{thm:properties_of_Halpha}(ii), estimate \eqref{eq:inf_spec_Halpha}.

In either case the lower bound is far from optimal (and diverges when $m\downarrow m^*$), since it is obtained by  merely requiring that in the expression \eqref{eq:DHab_form} for the quadratic form $H_{\alpha}[F^\lambda+u_\xi^\lambda]$ the auxiliary parameter $\lambda$ is chosen sufficiently large, say, $\lambda\geqslant\lambda_\natural$, so as to guarantee the \emph{positivity of the form of the charges},
\begin{equation}\label{eq:lower_bound_charge_form}
 \int_{\mathbb{R}^3}\overline{\,\widehat{\xi}(p)}\,\big(\widehat{(T_\lambda\xi)}(p)+\alpha\,\widehat{\xi}(p)\big)\,\ud p\;\geqslant\;0\qquad\qquad (\lambda\geqslant\lambda_\natural)\,.
\end{equation}
When this is the case, \eqref{eq:DHab_form} and \eqref{eq:lower_bound_charge_form} imply at once the lower bound
\begin{equation}\label{eq:cons_nat}
 H_{\alpha}[F^\lambda+u_\xi^\lambda]\;\geqslant\;-\lambda_\natural\,\|F^\lambda+u_\xi^\lambda\|_\cH^2\,.
\end{equation}

The detailed analysis made in \cite{CDFMT-2012} of $T_\lambda$, as an operator on $L^2(\mathbb{R}^3)$, shows that
\begin{equation}\label{eq:chargesform_temp_est}
 \begin{split}
 \int_{\mathbb{R}^3}\overline{\,\widehat{\xi}(p)}\,&\widehat{(T_\lambda\xi)}(p)\,\ud p\;\geqslant\;2\pi^2(1-\Lambda(m))\!\!\int_{\mathbb{R}^3}\sqrt{{\textstyle\frac{m}{m+1}}p^2+\lambda}\;|\widehat{\xi}(p)|^2\,\ud p \\
 &\geqslant\;2\pi^2\!\!\int_{\mathbb{R}^3}\sqrt{\nu\,p^2+\lambda}\;|\widehat{\xi}(p)|^2\,\ud p-2\pi^2\Lambda(m)\sqrt{\nu}\!\int_{\mathbb{R}^3}|p|\,|\widehat{\xi}(p)|^2\,\ud p
 \end{split}
 \end{equation}
(the first inequality above comes from \cite[Eq.~(3.51)]{CDFMT-2012}). Considering only the first inequality in \eqref{eq:chargesform_temp_est}, and requiring that $2\pi^2(1-\Lambda(m))\sqrt{\lambda}\geqslant|\alpha|$, implies 
\eqref{eq:lower_bound_charge_form}-\eqref{eq:cons_nat} precisely in the form of the lower bound  \eqref{eq:old_lower_bound} available in the literature.

Instead, considering the second inequality in \eqref{eq:chargesform_temp_est}, one has
\[
  \int_{\mathbb{R}^3}\overline{\,\widehat{\xi}(p)}\,\big(\widehat{(T_\lambda\xi)}(p)+\alpha\,\widehat{\xi}(p)\big)\,\ud p\;\geqslant\;\int_{\mathbb{R}^3}f(|p|)\,|\widehat{\xi}(p)|^2\,\ud p
\]
with
\[
 f(\rho)\;:=\;2\pi^2\sqrt{\nu\rho^2+\lambda}-2\pi^2\rho\,\Lambda(m)\sqrt{\nu}-|\alpha|\,.
\]
It is easy to see that $f(\rho)$ attains its minimum at $\rho=\rho_0:=\sqrt{\frac{\lambda}{\nu}}\,\frac{\Lambda(m)}{\sqrt{1-\Lambda(m)^2}}$ and 
\[
 f(\rho)\;\geqslant\;f(\rho_0)\;=\;2\pi^2\sqrt{\lambda}\,\sqrt{1-\Lambda(m)^2}-|\alpha|\,.
\]
Thus, $f(\rho_0)\geqslant 0$ for $\lambda\geqslant \lambda_\natural:=\frac{\alpha^2}{4\pi^4}\,\frac{1}{1-\Lambda(m)^2}$. This implies \eqref{eq:lower_bound_charge_form}-\eqref{eq:cons_nat} for such $\lambda_\natural$, whence indeed the ameliorated lower bound \eqref{eq:inf_spec_Halpha}.

\section{Angular decomposition}\label{sec:angular_decomp}

The preparatory material of this Section is an adaptation of the analysis developed by one of us in \cite{CDFMT-2012} in collaboration with Dell'Antonio, Correggi, Finco, and Teta, and concerns the reduction of the operator $T_\lambda$ with respect to the canonical decomposition
 \begin{equation}\label{eq:can_decomp}
  L^2(\mathbb{R}^3)\;\cong\;\bigoplus_{\ell=0}^\infty L^2_{\ell}(\mathbb{R}^3)\,,\quad L^2_{\ell}(\mathbb{R}^3)\,\cong\,L^2(\mathbb{R}^+,r^2\,\ud r)\otimes \mathrm{span}\{Y_{\ell,-\ell},\dots, Y_{\ell,\ell}\}
 \end{equation}
 into subspaces of definite angular symmetry (here the $Y_{\ell,M}$'s are the spherical harmonics on $\mathbb{S}^2$). The useful notation
\begin{equation}
 H^s_+(\mathbb{R}^3)\;:=\;\bigoplus_{\mathrm{even}\:\ell}H_\ell^{s}(\mathbb{R}^3)\,,\qquad H^s_-(\mathbb{R}^3)\;:=\;\bigoplus_{\mathrm{odd}\:\ell}H_\ell^{s}(\mathbb{R}^3)
\end{equation}
will be also used in the following. We refer to \cite[Sec.~3]{CDFMT-2012} for the details.

A generic $\xi\in H^{\frac{1}{2}}(\mathbb{R}^3)$ is decomposed with respect to \eqref{eq:can_decomp} and with polar coordinates $p\equiv|p|\Omega_p\in\mathbb{R}^3$ as
\begin{equation}\label{eq:xi-decomp}
 \widehat{\xi}(p)\;=\;\sum_{\ell=0}^\infty\sum_{M=-\ell}^\ell f_{\ell,M}(|p|)\, Y_{\ell,M}(\Omega_p)
\end{equation}
for some $f_{\ell,M}\in L^2(\mathbb{R}^+,r^2\sqrt{r^2+1}\,\ud r)$. The operator $T_\lambda$ commutes with the angular momentum operator, therefore its expectation on $\xi$ can be decomposed as
\begin{equation}\label{eq:Tform_decomp_1}
 \langle\xi,T_\lambda \xi\rangle_{H^{\frac{1}{2}},H^{-\frac{1}{2}}}\;=\;\sum_{\ell=0}^\infty\sum_{M=-\ell}^\ell\Big(\Phi_\lambda[f_{\ell,M}]+\Psi_{\lambda,\ell}[f_{\ell,M}]\Big)\,,
\end{equation}
where
\begin{equation}\label{eq:Tform_decomp_2}
\begin{split}
 \Phi_\lambda[f]\;&:=\;2\pi^2\!\int_0^{+\infty}\!\!\ud r\,r^2\sqrt{\nu r^2+\lambda}\,|f(r)|^2 \\
 \Psi_{\lambda,\ell}[f]\;&:=\;2\pi\!\int_0^{+\infty}\!\!\ud r\int_0^{+\infty}\!\!\ud r'\,r^2\,\overline{f(r)}\,r'^2f(r')\int_{-1}^1\ud y\,\frac{P_\ell(y)}{r^2+r'^2+\mu r r' y + \lambda}
\end{split}
\end{equation}
and
\begin{equation}\label{eq:LegendrePoly}
  P_\ell(y)\;=\;\frac{1}{2^\ell \ell!}\frac{\ud^\ell}{\ud y^\ell}(y^2-1)^{\ell}
 \end{equation}
 is the $\ell$-th Legendre polynomial, as follows straightforwardly from the addition formula \cite[Eq.~(8.814)]{Gradshteyn-tables-of-integrals-etc}
 \begin{equation}\label{eq:addition_formula}
 \begin{split}
  &\frac{1}{p^2+q^2+\mu p\cdot q+\lambda}\;= \\
  &\qquad\qquad=\;\sum_{\ell=0}^\infty 2\pi\int_{-1}^1\ud y\,\frac{P_\ell(y)}{p^2+q^2+\mu |p| |q| y+\lambda}\sum_{M=-\ell}^\ell \overline{Y_{\ell,M}(\Omega_q)}\,Y_{\ell,M}(\Omega_p)\,.
 \end{split}
 \end{equation}

 .

Simple manipulations \cite[Lemma 3.2]{CDFMT-2012} show that
\begin{equation}\label{eq:offdiagposeven}
   \Psi_{\lambda,\ell}[f]\;\geqslant\; 0\qquad\textrm{for even $\ell$}\,,
\end{equation}
whereas
\begin{equation}\label{eq:offdiagint}
\begin{split}
 \Psi_{\lambda,\ell}[f]\;=&\;-\sum_{\substack{j=\ell \\ j\,\textrm{odd}}}^\infty\mu^j B_{\ell,j}\int_0^{+\infty}\!\!\ud s\,s^j e^{-\lambda s}\Big|\int_0^{+\infty}\!\!\ud r f(r) \,r^{2+j}\,e^{-sr^2} \Big|^2\,, \\
 B_{\ell,j}\;:=&\;\frac{2\pi}{\,2^\ell \,\ell!\, j!\,}\int_{-1}^1\ud y\,(1-y^2)^\ell\,\frac{\ud^\ell}{\ud y^\ell}\,y^j\;\geqslant\;0\qquad\textrm{for odd $\ell$ and $j$}\,,
\end{split}
\end{equation}
whence
 \begin{equation}\label{eq:psi0psi1oldineq}
  \Psi_{0,\ell}[f]\;\leqslant\;\Psi_{\lambda,\ell}[f]\;\leqslant\;0\qquad\textrm{for odd $\ell$}\,.
 \end{equation}

Furthermore, by means of a change of variable and the Fourier transform \cite[Lemma 3.2]{CDFMT-2012}, one can re-write
\begin{equation}\label{eq:Psi0repr-1}
 \Psi_{0,\ell}[f]\;=\;-2\pi^2\int_\mathbb{R}\sigma_\ell^{(m)}(k)\,|f^\sharp(k)|^2\qquad\textrm{for odd $\ell$}\,,
\end{equation}
where
\begin{equation}\label{eq:Psi0repr-2}
 f^\sharp(k)\;:=\;\frac{1}{\sqrt{2\pi}}\int_\mathbb{R}\ud x\,e^{-\ii k x}e^{2x}f(e^x)
\end{equation}
and
\begin{equation}\label{eq:Psi0repr-3}
 \sigma_\ell^{(m)}(k)\;:=\;\frac{1}{2}\int_{-1}^1\!\ud y\,P_\ell(y)\,\frac{\sinh\big(k\arcsin\frac{y}{m+1}\big)}{\,\cos\big(\arcsin\frac{y}{m+1}\big)\sinh(\frac{k\pi}{2})\,}\,.
\end{equation}
The map $k\mapsto \sigma_\ell^{(m)}(k)$ is smooth, strictly positive, even, vanishing as $k\to\pm\infty$, and monotone decreasing for $k>0$. Moreover,
\begin{equation}\label{eq:sigmaellk_tower}
 \sigma_1^{(m)}(k)\;>\;\sigma_3^{(m)}(k)\;>\;\sigma_5^{(m)}(k)\;>\;\cdots\;>\;0
\end{equation}
and 
\begin{equation}
 \sigma_\ell^{(m_1)}(k)\;>\;\sigma_\ell^{(m_2)}(k)\qquad\textrm{if } m_1<m_2\,.
\end{equation}

Let us conclude this Section by highlighting a useful consequence of the above analysis.

\begin{lemma}[Monotonicity of $T_\lambda$ in $m$]\label{lem:T_monotonicity}
Let $m>m^*$ and $\lambda>0$. For each $\xi\in H^{1/2}_-(\mathbb{R}^3)$ the map
\[
 m\;\longmapsto\;\langle\xi,T_\lambda\xi\rangle_{H^{\frac{1}{2}},H^{-\frac{1}{2}}}
\]
is continuous and strictly monotone increasing.
\end{lemma}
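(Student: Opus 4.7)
The plan is to combine the angular-momentum decomposition recalled in \eqref{eq:Tform_decomp_1}--\eqref{eq:Tform_decomp_2} with the sign-definite representation \eqref{eq:offdiagint} of the off-diagonal quadratic form $\Psi_{\lambda,\ell}$ for odd $\ell$. Since $\xi\in H^{1/2}_-(\mathbb{R}^3)$, the expansion \eqref{eq:xi-decomp} involves only odd values of $\ell$, so
\[
\langle\xi,T_\lambda\xi\rangle_{H^{\frac{1}{2}},H^{-\frac{1}{2}}}\;=\;\sum_{\ell\;\mathrm{odd}}\sum_{M=-\ell}^{\ell}\big(\Phi_\lambda[f_{\ell,M}]+\Psi_{\lambda,\ell}[f_{\ell,M}]\big)\,,
\]
and the whole $m$-dependence is carried by the two parameters $\mu(m)=2/(m+1)$ and $\nu(m)=1-\mu(m)^2/4=m(m+2)/(m+1)^2$. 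A direct computation gives $d\mu/dm=-2/(m+1)^2<0$ and $d\nu/dm=2/(m+1)^3>0$, so on $(m^*,+\infty)$ the function $\nu$ is smooth and strictly increasing while $\mu$ is smooth and strictly decreasing.

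For the diagonal piece, the integrand $r^2\sqrt{\nu r^2+\lambda}\,|f(r)|^2$ in \eqref{eq:Tform_decomp_2} is, pointwise in $r>0$, strictly increasing in $\nu$; hence $m\mapsto\Phi_\lambda[f]$ is strictly monotone increasing whenever $f\not\equiv 0$. For the off-diagonal piece with odd $\ell$, the representation \eqref{eq:offdiagint} exhibits $\Psi_{\lambda,\ell}[f]$ as a series of non-positive terms of the form $-\mu^j B_{\ell,j}(\cdots)$, where $B_{\ell,j}\geq 0$ and the factors $(\cdots)$ are $m$-independent. Since $\mu^j$ is strictly decreasing in $m$ for every $j\geq 1$, each such non-positive term is monotone increasing in $m$, and so therefore is $m\mapsto\Psi_{\lambda,\ell}[f]$. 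Crucially, the two monotonicities act in the same direction: as $m$ grows, $\nu$ increases (pushing $\Phi_\lambda$ up) and $\mu$ decreases (pushing the non-positive $\Psi_{\lambda,\ell}$ toward zero, hence up as well). For any nonzero $\xi\in H^{1/2}_-(\mathbb{R}^3)$ at least one radial component $f_{\ell,M}$ is nonzero in $L^2(\mathbb{R}^+,r^2\,\ud r)$, so the corresponding $\Phi_\lambda[f_{\ell,M}]$ is \emph{strictly} increasing in $m$ while all remaining $\Phi$- and $\Psi$-contributions are monotone increasing; summing yields strict monotonicity of $m\mapsto\langle\xi,T_\lambda\xi\rangle$.

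Continuity follows from a standard dominated-convergence argument applied to the defining formula \eqref{eq:Tlambda}: for $\xi\in H^{1/2}(\mathbb{R}^3)$ fixed and $m$ ranging in a compact subinterval $[a,b]\subset(m^*,+\infty)$, both integrands in \eqref{eq:Tlambda} are jointly continuous in $m$ and admit $m$-uniform integrable majorants (the multiplier is bounded by $\sqrt{p^2+\lambda}\,|\widehat{\xi}(p)|$, while the convolution kernel $(p^2+q^2+\mu(m)\,p\cdot q+\lambda)^{-1}$ is uniformly controlled since $|\mu(m)\,p\cdot q|\leq\mu(a)(p^2+q^2)/2$ with $\mu(a)<2$). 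Hence $m\mapsto T_\lambda\xi$ is continuous in $H^{-1/2}(\mathbb{R}^3)$, and the pairing with $\xi$ inherits the continuity. The main conceptual ingredient is the sign-definite expansion \eqref{eq:offdiagint} of $\Psi_{\lambda,\ell}$ for odd $\ell$ (taken from \cite{CDFMT-2012}), without which monotonicity of the off-diagonal part would not be transparent; once that representation is in hand, both claims reduce to elementary manipulations.
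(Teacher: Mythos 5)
Your proof is correct and follows essentially the same route as the paper: the angular decomposition \eqref{eq:Tform_decomp_1}--\eqref{eq:Tform_decomp_2} restricted to odd $\ell$, strict monotonicity of $\Phi_\lambda$ through $\nu(m)$, and monotonicity of the non-positive $\Psi_{\lambda,\ell}$ through the sign-definite representation \eqref{eq:offdiagint} and the decrease of $\mu(m)$. You simply supply more detail (on strictness and on continuity via dominated convergence) than the paper's two-sentence argument.
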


\begin{proof}
 Owing to \eqref{eq:Tform_decomp_1}-\eqref{eq:Tform_decomp_2} the statement to prove is equivalent to the continuity and the strictly increasing monotonicity of the map 
 \[
  m\;\longmapsto\;\Phi_\lambda[f]+\Psi_{\lambda,\ell}[f]
 \]
 for every $\ell\in\mathbb{N}_0$ and $f\in L^2(\mathbb{R}^+,r^2\sqrt{r^2+1}\,\ud r)$. Now, the statement for $m\mapsto \Phi_\lambda[f]$ is obvious from the first of \eqref{eq:Tform_decomp_2}, and the statement for $m\mapsto \Psi_{\lambda,\ell}[f]$ follows at once from the representation \eqref{eq:offdiagint}.
 
\end{proof}

\section{Reduction lemmas}\label{eq:reduction_lemma}

In this Section we connect the eigenvalue problem for $H_\alpha$, when $\alpha<0$, with the eigenvalue problem for $T_\lambda$ or also, up to a re-scaling, for $T_1$. Next, we show that the charges of the eigenfunctions of $H_\alpha$ are of special angular symmetry.

The first main result is the following.

\begin{lemma}\label{lem:reduction_lemma}
 Let $m>m^*$, $\alpha<0$, and 
 \begin{equation}\label{eq:amiss_window}
  \lambda\;\in\;\Big(\frac{\alpha^2}{4\pi^4},\frac{\alpha^2}{4\pi^4(1-\Lambda(m)^2)}\Big]\,.
 \end{equation}
The following conditions are equivalent.
 \begin{itemize}
  \item[(i)] $g$ is an eigenfunction of $H_\alpha$ with eigenvalue $-\lambda$.
  \item[(ii)] $g=u_\xi^{\lambda}$, where $\xi\in H^{\frac{1}{2}}(\mathbb{R}^3)$ is an eigenfunction of $T_\lambda$ with eigenvalue $-\alpha$. 
  \item[(iii)] $g=u_\xi^{\lambda}$, where $\widehat{\xi}(p)=\widehat{\widetilde{\xi}}(\frac{p}{\sqrt{\lambda}})$ and $\widetilde{\xi}\in H^{\frac{1}{2}}(\mathbb{R}^3)$ is an eigenfunction of $T_1$ with eigenvalue $\frac{|\alpha|}{\sqrt{\lambda}}$.
 \end{itemize}
\end{lemma}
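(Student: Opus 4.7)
My plan is to split the chain of equivalences into two parts: $(i) \Leftrightarrow (ii)$, reducing the eigenvalue problem for $H_\alpha$ to one for the charge operator $T_\lambda$; and $(ii) \Leftrightarrow (iii)$, further reducing it to $T_1$ by a dilation. For $(i) \Rightarrow (ii)$ I would start with the canonical decomposition $g = F^\lambda + u_\xi^\lambda$ from \eqref{eq:DHab}. The identity $(H_\alpha + \lambda) g = 0$ combined with the action formula \eqref{eq:action_DHab} yields $(H_{\mathrm{free}} + \lambda) F^\lambda = 0$; since $H_{\mathrm{free}}$ is a nonnegative self-adjoint operator on $H^2_{\mathrm{f}}(\mathbb{R}^3 \times \mathbb{R}^3)$, the shifted operator $H_{\mathrm{free}} + \lambda \mathbbm{1}$ has trivial kernel for $\lambda > 0$, forcing $F^\lambda = 0$. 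Plugging this back into the boundary condition $(\textsc{tms}')$ of \eqref{eq:DHab} then gives $((T_\lambda + \alpha)\xi)^{\!\wedge} \equiv 0$, i.e.\ $T_\lambda \xi = -\alpha \xi$ in $H^{1/2}(\mathbb{R}^3)$. The converse $(ii) \Rightarrow (i)$ is obtained by simply setting $F^\lambda := 0$ and $g := u_\xi^\lambda$: the hypothesis $(T_\lambda + \alpha)\xi = 0$ makes every clause in \eqref{eq:DHab} trivially satisfied, and \eqref{eq:action_DHab} then gives $(H_\alpha + \lambda) g = 0$.

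For $(ii) \Leftrightarrow (iii)$ I would proceed by a direct scaling argument on the defining formula \eqref{eq:Tlambda}. Substituting $\widehat{\xi}(p) = \widehat{\widetilde{\xi}}(p/\sqrt{\lambda})$, factoring $\sqrt{\lambda}$ out of $\sqrt{\nu p^2 + \lambda}$, and changing variables $q \mapsto \sqrt{\lambda}\, q'$ in the integral term (whose Jacobian $\lambda^{3/2}$ cancels against one power of $\lambda$ from the denominator $p^2 + q^2 + \mu p \cdot q + \lambda$), the plan is to arrive at the covariance identity $\widehat{(T_\lambda \xi)}(p) = \sqrt{\lambda}\,\widehat{(T_1 \widetilde{\xi})}(p/\sqrt{\lambda})$. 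The eigenvalue equation $T_\lambda \xi = -\alpha \xi$ is then, using $\alpha < 0$, equivalent to $T_1 \widetilde{\xi} = (|\alpha|/\sqrt{\lambda})\, \widetilde{\xi}$. Preservation of the $H^{1/2}$-regularity under the rescaling is immediate, with equivalent norms for each fixed $\lambda > 0$.

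I do not anticipate a serious obstacle: the argument is essentially a careful bookkeeping of the definitions \eqref{eq:DHab}-\eqref{eq:action_DHab} together with the natural dilation covariance of $T_\lambda$. The subtlest point is arguably just noticing that the action \eqref{eq:action_DHab} is written purely in terms of the regular part $F^\lambda$, so that the positivity of $H_{\mathrm{free}}$ immediately annihilates $F^\lambda$ at any negative eigenvalue $-\lambda$ and reduces the problem to the charge operator. I also remark that the window \eqref{eq:amiss_window} for $\lambda$ plays no logical role within the equivalences themselves: it merely pins down the range of $-\lambda$ in which, by Theorems \ref{thm:properties_of_Halpha}(ii) and \ref{thm:ess-spec}, the discrete spectrum of $H_\alpha$ is known to live, so that the reduction to $T_\lambda$ (and then $T_1$) is non-vacuous and indeed captures the whole discrete spectrum.
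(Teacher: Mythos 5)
Your proposal is correct and follows essentially the same route as the paper: the identity $(H_\alpha+\lambda\mathbbm{1})g=(H_{\mathrm{free}}+\lambda\mathbbm{1})F^\lambda$ forces $F^\lambda=0$ at a negative eigenvalue, the $(\textsc{tms}')$ condition then yields $T_\lambda\xi=-\alpha\xi$, and the dilation covariance $\widehat{(T_\lambda\xi)}(p)=\sqrt{\lambda}\,\widehat{(T_1\widetilde{\xi})}(p/\sqrt{\lambda})$ gives the reduction to $T_1$. Your remark that the window \eqref{eq:amiss_window} only localises the admissible eigenvalues, rather than entering the equivalences, matches the paper's opening observation in its proof.
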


The idea behind Lemma \ref{lem:reduction_lemma} is an old one. That the eigenvalue problem for the three-body Hamiltonian with contact interaction could be reformulated as the eigenvalue problem for what we call now the charge operator is a property that was noticed first by Minlos and Faddeev \cite{Minlos-Faddeev-1961-1,Minlos-Faddeev-1961-2}. By means of a self-adjoint extension scheme \`a la Kre{\u\i}n-Vi\v{s}ik-Birman they could reproduce the celebrated Ter-Martirosyan--Skornyakov equation, that had been identified heuristically a few years earlier by Ter-Martirosyan and Skornyakov in the study of the bound states for the three-body system with zero-range interaction \cite{TMS-1956}. In the present (2+1 fermionic) case, the Ter-Martirosyan--Skornyakov equation has precisely the form 
\begin{equation}
 \alpha\,\xi + T_\lambda\xi\;=\;0
\end{equation}
in the unknown $(\xi,\lambda)$.
This equation, even if not derived rigorously, has been the object of intensive numerical investigation in the physical literature (see, e.g., \cite{Endo-Naidon-Ueda-2011}). 
On the mathematical side, the equivalence (i)$\Leftrightarrow$(ii) of Lemma \ref{lem:reduction_lemma} based upon the characterisation of $H_\alpha$ as a quadratic form was shown in \cite[Sec.~5]{DFT-1994} and more recently in  \cite[Sec.~III]{michelangeli-schmidbauer-2013}.


\begin{proof}[Proof of Lemma \ref{lem:reduction_lemma}] Let $g=F^\lambda+u_\xi^\lambda\in\mathcal{D}(H_\alpha)$. The range for possible eigenvalues $-\lambda$ of $H_\alpha$ is precisely the one indicated in the statement, as follows from \eqref{eq:where_disc_spec}.

(i)$\Leftrightarrow$(ii). The identity $H_\alpha g=-\lambda g$ is equivalent to $F^\lambda\equiv 0$, owing to \eqref{eq:action_DHab}, which in turn is equivalent to $T_\lambda\xi=-\alpha\xi$, owing to the condition $(\textsc{tms}')$ in \eqref{eq:DHab}. 

(ii)$\Leftrightarrow$(iii). Let $\widetilde{p}:=p/\sqrt{\lambda}$. One has
\[
 \begin{split}
&\widehat{(T_\lambda\xi)}(p)+\alpha\,\widehat{\xi}(p)\;=\;(2\pi^2\sqrt{\nu p^2+\lambda}+\alpha)\,\widehat{\xi}(p)+\int_{\mathbb{R}^3}\frac{\widehat{\xi}(q)}{p^2+q^2+\mu p\cdot q+\lambda}\,\ud q  \\
&=\;\sqrt{\lambda}\,(2\pi^2\sqrt{\nu \widetilde{p}^{\,2}+1}+{\textstyle\frac{\alpha}{\sqrt{\lambda}}})\,\widehat{\widetilde{\xi}}(\widetilde{p})+\sqrt{\lambda}\int_{\mathbb{R}^3}\frac{\widehat{\widetilde \xi}(q)}{\widetilde{p}^{\,2}+q^2+\mu \widetilde{p}\cdot q+1}\,\ud q \\
&=\;\sqrt{\lambda}\Big((\widehat{T_1\widetilde{\xi}})(\widetilde{p})+{\textstyle\frac{\alpha}{\sqrt{\lambda}}}\,\widehat{\widetilde{\xi}}(\widetilde{p}) \Big)\,,
 \end{split}
\]
which completes the proof.
\end{proof}

The same scaling argument used in the proof above allows one to conclude the following.

\begin{lemma}\label{lem:scaling}
Let $m>m^*$ and $\alpha<0$.
 For a constant $\lambda>0$ and for two functions $\xi,\widetilde{\xi}\in H^{\frac{1}{2}}(\mathbb{R}^3)$ such that $\widehat{\widetilde{\xi}}(p)=\widehat{\xi}(p\sqrt{\lambda})$, $p\in\mathbb{R}^3$, the following conditions are equivalent.
 \begin{itemize}
  \item[(i)] $\langle\xi,(T_\lambda+\alpha\mathbbm{1})\xi\rangle_{H^{\frac{1}{2}},H^{-\frac{1}{2}}}\;=\;0$ 
  \item[(ii)] $\langle\widetilde{\xi},T_1\widetilde{\xi}\rangle_{H^{\frac{1}{2}},H^{-\frac{1}{2}}}\;=\;\varepsilon\|\widetilde{\xi}\|_{L^2(\mathbb{R}^3)}^2$ for $\varepsilon:=\frac{|\alpha|}{\sqrt{\lambda}}$\,.
 \end{itemize}
\end{lemma}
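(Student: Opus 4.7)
The plan is to exploit the very same scaling identity established in the proof of Lemma \ref{lem:reduction_lemma}, lifted from pointwise equality of Fourier-transformed integrands to an equality between the $H^{\frac{1}{2}},H^{-\frac{1}{2}}$ duality pairings. Concretely, represent the pairing by Plancherel as
\[
 \langle\xi,(T_\lambda+\alpha\mathbbm{1})\xi\rangle_{H^{\frac{1}{2}},H^{-\frac{1}{2}}}\;=\;\int_{\mathbb{R}^3}\overline{\widehat{\xi}(p)}\,\bigl(\widehat{(T_\lambda\xi)}(p)+\alpha\widehat{\xi}(p)\bigr)\,\ud p\,,
\]
and substitute the identity already derived in the proof of Lemma \ref{lem:reduction_lemma}, namely
\[
 \widehat{(T_\lambda\xi)}(p)+\alpha\widehat{\xi}(p)\;=\;\sqrt{\lambda}\,\Bigl((\widehat{T_1\widetilde{\xi}})(p/\sqrt{\lambda})+{\textstyle\frac{\alpha}{\sqrt{\lambda}}}\,\widehat{\widetilde{\xi}}(p/\sqrt{\lambda})\Bigr),
\]
which is valid precisely under the relation $\widehat{\widetilde{\xi}}(\widetilde{p})=\widehat{\xi}(\sqrt{\lambda}\,\widetilde{p})$ stipulated in the statement.

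Next, perform the change of variable $p=\sqrt{\lambda}\,\widetilde{p}$, which produces a Jacobian $\lambda^{3/2}$, and use $\overline{\widehat{\xi}(p)}=\overline{\widehat{\widetilde{\xi}}(\widetilde{p})}$ to factor everything through $\widetilde{\xi}$. The pairing then reads
\[
 \langle\xi,(T_\lambda+\alpha\mathbbm{1})\xi\rangle_{H^{\frac{1}{2}},H^{-\frac{1}{2}}}\;=\;\lambda^2\Bigl(\langle\widetilde{\xi},T_1\widetilde{\xi}\rangle_{H^{\frac{1}{2}},H^{-\frac{1}{2}}}-\varepsilon\,\|\widetilde{\xi}\|_{L^2(\mathbb{R}^3)}^2\Bigr),
\]
with $\varepsilon=|\alpha|/\sqrt{\lambda}$. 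Since $\lambda^2>0$, the vanishing of the left-hand side is equivalent to the bracket on the right vanishing, which is precisely the equivalence (i)$\Leftrightarrow$(ii).

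There is essentially no real obstacle: the only point to watch is bookkeeping the powers of $\lambda$ coming from (a) the explicit $\sqrt{\lambda}$ prefactor in the scaling identity, (b) the Jacobian $\lambda^{3/2}$ from the change of variables on $\mathbb{R}^3$, so that their product $\lambda^2$ is a strictly positive scalar that can be discarded. The assumption $\widetilde{\xi}\in H^{1/2}(\mathbb{R}^3)$ guarantees that each term on the right-hand side is finite and the duality pairing with $T_1\widetilde{\xi}\in H^{-1/2}(\mathbb{R}^3)$ is well defined by the mapping properties of $T_\lambda$ recalled after \eqref{eq:Tlambda}, so no separate justification of integrability is required.
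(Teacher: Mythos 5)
Your proof is correct and is exactly the argument the paper intends: it invokes the scaling identity $\widehat{(T_\lambda\xi)}(p)+\alpha\widehat{\xi}(p)=\sqrt{\lambda}\,\big((\widehat{T_1\widetilde{\xi}})(p/\sqrt{\lambda})+\frac{\alpha}{\sqrt{\lambda}}\widehat{\widetilde{\xi}}(p/\sqrt{\lambda})\big)$ already established in the proof of Lemma \ref{lem:reduction_lemma}, integrates it against $\overline{\widehat{\xi}}$, and rescales to extract the positive factor $\lambda^2$ (the paper merely states ``the same scaling argument used in the proof above allows one to conclude'' without writing this out). The bookkeeping of the powers of $\lambda$ and the sign $\frac{\alpha}{\sqrt{\lambda}}=-\varepsilon$ for $\alpha<0$ are both correct.
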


Next, we show that the charge $\xi$ of an eigenfunction for $H_\alpha$, with $\alpha<0$, must have odd angular symmetry, i.e., $\ell\in\{1,3,5,\dots\}$, with respect to the canonical decomposition \eqref{eq:can_decomp}.

\begin{lemma}\label{lem:only_odd_symm}
 Let $m>m^*$ and $\alpha<0$, and let $g=u_\xi^\lambda$ be an eigenfunction of $H_\alpha$ with eigenvalue $-\lambda$ in the admissible spectral window \eqref{eq:amiss_window}, as prescribed by Lemma \ref{lem:reduction_lemma}. Then 
 \[
  \xi\;\in\;H^{1/2}_-(\mathbb{R}^3)\;\equiv\;\bigoplus_{\mathrm{odd}\; \ell} H^{1/2}_{\ell}(\mathbb{R}^3)\,.
 \]
\end{lemma}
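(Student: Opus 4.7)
The plan is to reduce to the eigenvalue problem for $T_\lambda$ and then exploit the positivity of the even-$\ell$ contribution to the angular decomposition of its quadratic form.

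First, by Lemma \ref{lem:reduction_lemma}, the eigenfunction $g=u_\xi^\lambda$ of $H_\alpha$ corresponds to an eigenfunction $\xi\in H^{1/2}(\mathbb{R}^3)$ of $T_\lambda$ with eigenvalue $-\alpha>0$, i.e.\ $T_\lambda\xi=-\alpha\,\xi$. Since the integral kernel in \eqref{eq:Tlambda} depends on $p,q$ only through the rotation-invariants $p^2$, $q^2$ and $p\cdot q$, the operator $T_\lambda$ commutes with the action of $SO(3)$ and hence preserves each angular sector $L^2_\ell(\mathbb{R}^3)$ of \eqref{eq:can_decomp}. Thus if we write $\widehat\xi(p)=\sum_{\ell,M}f_{\ell,M}(|p|)\,Y_{\ell,M}(\Omega_p)$ as in \eqref{eq:xi-decomp}, the eigenvalue equation decouples, and testing on each component gives
\[
 \Phi_\lambda[f_{\ell,M}]+\Psi_{\lambda,\ell}[f_{\ell,M}]\;=\;|\alpha|\,\|f_{\ell,M}\|^2_{L^2(\mathbb{R}^+,r^2\,\ud r)}
\]
for every pair $(\ell,M)$, by \eqref{eq:Tform_decomp_1}--\eqref{eq:Tform_decomp_2}.

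The key step is to show that $f_{\ell,M}\equiv 0$ whenever $\ell$ is even. For such $\ell$, the bound \eqref{eq:offdiagposeven} yields $\Psi_{\lambda,\ell}[f_{\ell,M}]\geqslant 0$, while the trivial pointwise estimate $\sqrt{\nu r^2+\lambda}\geqslant\sqrt{\lambda}$ applied to the first of \eqref{eq:Tform_decomp_2} gives $\Phi_\lambda[f_{\ell,M}]\geqslant 2\pi^2\sqrt{\lambda}\,\|f_{\ell,M}\|^2_{L^2(\mathbb{R}^+,r^2\,\ud r)}$. Combining these inequalities with the sector-wise eigenvalue identity above,
\[
 2\pi^2\sqrt{\lambda}\,\|f_{\ell,M}\|^2_{L^2(\mathbb{R}^+,r^2\,\ud r)}\;\leqslant\;|\alpha|\,\|f_{\ell,M}\|^2_{L^2(\mathbb{R}^+,r^2\,\ud r)}\,.
\]
If $f_{\ell,M}\not\equiv 0$, dividing through forces $\lambda\leqslant\frac{\alpha^2}{4\pi^4}$, in direct contradiction with the lower bound $\lambda>\frac{\alpha^2}{4\pi^4}$ imposed by the admissible window \eqref{eq:amiss_window}. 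Hence $f_{\ell,M}\equiv 0$ for every even $\ell$ and every $M$, so $\xi\in H^{1/2}_-(\mathbb{R}^3)$, as claimed.

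I do not anticipate a serious obstacle: the argument is essentially a spectral-window vs.\ positivity argument, and the two main ingredients (the rotation-invariance of $T_\lambda$ and the sign of $\Psi_{\lambda,\ell}$ for even $\ell$) are already available from Section \ref{sec:angular_decomp}. The only minor point requiring care is that the sesquilinear form identity $\Phi_\lambda[f_{\ell,M}]+\Psi_{\lambda,\ell}[f_{\ell,M}]=|\alpha|\,\|f_{\ell,M}\|^2$ truly follows from the operator eigenvalue equation restricted to a single sector; this is ensured by the commutation of $T_\lambda$ with the orthogonal projector onto $L^2_\ell(\mathbb{R}^3)$, so that $\xi_{\ell,M}\in H^{1/2}_\ell(\mathbb{R}^3)$ is itself an eigenvector of $T_\lambda$ with the same eigenvalue $-\alpha$.
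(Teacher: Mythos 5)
Your proof is correct and follows essentially the same route as the paper: decompose the charge into angular sectors, use the positivity \eqref{eq:offdiagposeven} of $\Psi_{\lambda,\ell}$ for even $\ell$ together with the trivial lower bound $\Phi_\lambda[f]\geqslant 2\pi^2\sqrt{\lambda}\,\|f\|^2$, and contradict the admissible window \eqref{eq:amiss_window}. The only cosmetic difference is that the paper first rescales to $T_1$ via Lemma \ref{lem:scaling} (so the threshold appears as $\varepsilon<2\pi^2$) while you work directly with $T_\lambda$, and your sector-by-sector treatment of a possibly mixed-parity charge is, if anything, slightly more explicit than the paper's.
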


\begin{proof}
Let us decompose $\xi$, and hence also $\widetilde{\xi}$, as in \eqref{eq:xi-decomp}, and then let us express the expectation $\langle\widetilde{\xi},T_1\widetilde{\xi}\rangle_{H^{\frac{1}{2}},H^{-\frac{1}{2}}}$ through the decomposition \eqref{eq:Tform_decomp_1}-\eqref{eq:Tform_decomp_2}.
If $\xi\in H^{1/2}_+(\mathbb{R}^3)\equiv\bigoplus_{\mathrm{even}\: \ell} H^{1/2}_{\ell}(\mathbb{R}^3)$, then \eqref{eq:Tform_decomp_2} and \eqref{eq:offdiagposeven} imply
\[
 \begin{split}
  \langle\widetilde{\xi},T_1\widetilde{\xi}\rangle_{L^2(\mathbb{R}^3)}\;&=\;\sum_{\substack{\ell,M \\ \ell\:\mathrm{even}}}\Big(\Phi_1[f_{\ell,M}]+\Psi_{1,\ell}[f_{\ell,M}]\Big)\;\geqslant\;\sum_{\substack{\ell,M \\ \ell\:\mathrm{even}}}\Phi_1[f_{\ell,M}] \\
  &\geqslant\;2\pi^2\sum_{\substack{\ell,M \\ \ell\:\mathrm{even}}}\|f_{\ell,M}\|^2_{L^2(\mathbb{R}^+,r^2\ud r)}\;=\;2\pi^2\|\widetilde{\xi}\|_{L^2(\mathbb{R}^3)}^2\,,
 \end{split}
\]
which prevents the equation $T_1\widetilde{\xi}=\varepsilon\widetilde{\xi}$ to hold with eigenvalue $\varepsilon<2\pi^2$. Owing to Lemma \ref{lem:reduction_lemma}, such $\xi$ cannot be the charge of an eigenfunction of $H_\alpha$, so the alternative is necessarily the one stated in the thesis.
\end{proof}

\section{Essential spectrum }\label{sec:ess_spec}

In this Section we start the proof Theorem \ref{thm:ess-spec}, which will be completed in the end of Section \ref{sec:finite_disc_spec}. In particular, here we prove that all the real numbers above an explicit $\alpha$-dependent threshold belong to  $\sigma_{\mathrm{ess}}(H_\alpha)$, and in the end of Section \ref{sec:finite_disc_spec} we show that such threshold is precisely the bottom of $\sigma_{\mathrm{ess}}(H_\alpha)$.

For the latter step we need first to prove the finiteness of the discrete spectrum below the considered $\alpha$-dependent threshold, a result that we shall establish in Section \ref{sec:finite_disc_spec}.
For the former, in order to show that a given real number $\mu$ belongs to $\sigma_{\mathrm{ess}}(H_\alpha)$, 
 we shall produce a \emph{singular sequence} \cite[Sec.~8.4]{schmu_unbdd_sa} for $H_\alpha$ at the value $\mu$.

In fact, the case $\alpha\geqslant 0$ is straightforward as compared to the case $\alpha<0$, let us discuss it first.

\begin{proposition}\label{prop:ess-spec-alphapositive}
 Let $m>m^*$ and $\alpha\geqslant 0$. Then $\sigma_{\mathrm{ess}}(H_\alpha)=[0,+\infty)$.
\end{proposition}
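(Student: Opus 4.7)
The plan is to split $\sigma_{\mathrm{ess}}(H_\alpha)=[0,+\infty)$ into two inclusions. The upper inclusion $\sigma_{\mathrm{ess}}(H_\alpha)\subset[0,+\infty)$ is immediate from Theorem \ref{thm:properties_of_Halpha}(ii): when $\alpha\geqslant 0$ one has $H_\alpha\geqslant\mathbbm{O}$, so $\sigma(H_\alpha)\subset[0,+\infty)$. The substantive task is therefore the reverse inclusion $[0,+\infty)\subset\sigma_{\mathrm{ess}}(H_\alpha)$, which I would establish \emph{irrespective of the sign of $\alpha$} via Weyl's criterion: for each prescribed $E\geqslant 0$ I will exhibit a singular sequence $(\Phi_n)_n\subset\mathcal{D}(H_\alpha)$ with $\|\Phi_n\|_\cH=1$, $\Phi_n\rightharpoonup 0$ in $\cH$, and $\|(H_\alpha-E)\Phi_n\|_\cH\to 0$.

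The decisive simplification is the inclusion $\mathcal{D}(\mathring{H})\subset\mathcal{D}(H_\alpha)$ granted by Theorem \ref{thm:properties_of_Halpha}(iii): any sequence built inside $\mathcal{D}(\mathring{H})$ trivially fulfils the boundary condition $(\textsc{tms}')$ with zero charge $\xi=0$, and $H_\alpha$ acts on it simply as the free differential operator $H_{\mathrm{free}}$. The task thus reduces to constructing antisymmetric, $C^\infty_0$-approximate eigenvectors of $H_{\mathrm{free}}$ at energy $E$, supported away from the coincidence hyperplanes $\Gamma_1\cup\Gamma_2$ and escaping to infinity. Since the Fourier symbol $P(p_1,p_2):=p_1^2+p_2^2+\frac{2}{m+1}\,p_1\!\cdot p_2$ is non-negative and surjects onto $[0,+\infty)$, for each $E\geqslant 0$ I can pick $P_1\neq P_2$ in $\mathbb{R}^3$ with $P(P_1,P_2)=E$. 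Then, fixing $\chi\in C^\infty_0(\mathbb{R}^3\times\mathbb{R}^3)$ supported in the unit ball with $\|\chi\|_\cH=1$ and translation vectors $a_n,b_n\in\mathbb{R}^3$ satisfying $n^{-1}\min\{|a_n|,|b_n|,|a_n-b_n|\}\to+\infty$, I set
\[
 \varphi_n(x_1,x_2)\,:=\,n^{-3}\chi\bigl(\tfrac{x_1-a_n}{n},\tfrac{x_2-b_n}{n}\bigr)\,e^{\ii(P_1\cdot x_1+P_2\cdot x_2)},\qquad \Phi_n\,:=\,\tfrac{1}{\sqrt{2}}(\varphi_n-\tau\varphi_n),
\]
with $\tau$ the swap $x_1\leftrightarrow x_2$. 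The scale separation makes $\varphi_n$ and $\tau\varphi_n$ disjointly supported, and both disjoint from $\Gamma_1\cup\Gamma_2$, so $\Phi_n\in C^\infty_0((\mathbb{R}^3\times\mathbb{R}^3)\setminus(\Gamma_1\cup\Gamma_2))\cap\cH\subset\mathcal{D}(\mathring{H})$, $\|\Phi_n\|_\cH=1$, and $\Phi_n\rightharpoonup 0$ (the support escapes to infinity in position space). Expanding $(H_{\mathrm{free}}-E)\varphi_n$ by Leibniz, the pure plane-wave contribution is $(P(P_1,P_2)-E)\varphi_n=0$, while the remaining terms, in which at least one derivative falls on the slowly varying envelope, are $O(n^{-1})$ in $\cH$-norm; hence $\|(H_\alpha-E)\Phi_n\|_\cH\to 0$ and Weyl's criterion gives $E\in\sigma_{\mathrm{ess}}(H_\alpha)$.

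The main technical point requiring some care is the simultaneous tuning of the translations $(a_n,b_n)$: they must separate from each other (so that $\varphi_n$ and $\tau\varphi_n$ are orthogonal, giving correct normalisation and weak convergence to zero), stay far from zero in each coordinate (so that the supports avoid $\Gamma_1\cup\Gamma_2$), and do so on a scale much larger than the envelope width $n$. None of these conditions is deep, but they must all be imposed together; crucially, it is only the extension property $H_\alpha\supset\mathring{H}$ that allows one to bypass entirely the Ter-Martirosyan--Skornyakov boundary condition $(\textsc{tms}')$ in the construction.
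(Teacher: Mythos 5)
Your proof is correct and follows essentially the same route as the paper: both establish the lower inclusion by producing a singular sequence for $H_{\mathrm{free}}$ that vanishes at the coincidence hyperplanes and therefore lies in $\mathcal{D}(H_\alpha)$ with zero charge, on which $H_\alpha$ acts as the free operator — you merely make the construction explicit by routing it through the extension property $H_\alpha\supset\mathring{H}$ rather than through the condition $\int_{\mathbb{R}^3}\widehat{F}_n(p,q)\,\ud p=0$ of $(\textsc{tms}')$. One cosmetic point: at $E=0$ the symbol $P$ vanishes only at $P_1=P_2=0$ (since $\mu<2$), so the requirement $P_1\neq P_2$ cannot be met there; but that requirement is never actually used — antisymmetry, normalisation and weak vanishing all come from the disjointness of the translated supports, not of the momenta — so it should simply be dropped.
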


\begin{proof}
As well known, $\sigma_{\mathrm{ess}}(H_{\mathrm{free}})=[0,+\infty)$ and for an arbitrary $\mu\geqslant 0$ it is always possible to find a sequence $(F_n)_n$ in $H^2_\mathrm{f}(\mathbb{R}^3\times\mathbb{R}^3)$ that is a singular sequence for $H_{\mathrm{free}}$ at $\mu$, i.e.,
\[
 \liminf_n\|F_n\|_{\cH}>0\,,\qquad F_n\rightharpoonup 0\,,\qquad \|(H_{\mathrm{free}}-\mu\mathbbm{1})F_n\|_{\cH}\to 0\,,
\]
and such that additionally $\int_{\mathbb{R}^3}\widehat{F}_n(p,q)\ud p=0$.
The ``vanishing condition'' for $F_n$ at the coincidence hyperplanes is clearly non-restrictive for the construction of a singular sequence for $H_{\mathrm{free}}$, yet it guarantees that $F_n\in\mathcal{D}(H_\alpha)$, owing to \eqref{eq:DHab}. Thus, for arbitrary $\lambda>0$,
\[
 \begin{split}
  (H_\alpha-\mu\mathbbm{1})F_n\;&=\;(H_\alpha+\lambda\mathbbm{1})F_n-(\mu+\lambda)F_n \\
  &=\;(H_{\mathrm{free}}+\lambda\mathbbm{1})F_n-(\mu+\lambda)F_n \;=\;(H_{\mathrm{free}}-\mu\mathbbm{1})F_n\,,
 \end{split}
\]
having used \eqref{eq:action_DHab} in the second step. Therefore $\|(H_{\alpha}-\mu\mathbbm{1})F_n\|_{\cH}\to 0$ and $(F_n)_n$ is also a singular sequence for $H_{\alpha}$ at $\mu$, i.e., $\mu\in\sigma_{\mathrm{ess}}(H_\alpha)$. We proved that $[0,+\infty)\subset\sigma_{\mathrm{ess}}(H_\alpha)$; on the other hand we know from Theorem \ref{thm:properties_of_Halpha}, formula \eqref{eq:inf_spec_Halpha}, that $\sigma(H_\alpha)\subset[0,+\infty)$. The conclusion is then $\sigma_{\mathrm{ess}}(H_\alpha)=\sigma(H_\alpha)=[0,+\infty)$.
\end{proof}

The case $\alpha<0$ is more subtle. The counterpart to Proposition \ref{prop:ess-spec-alphapositive} that we shall prove in this Section is the following.

\begin{proposition}\label{prop:ess-spec-alphanegative_1}
 Let $m>m^*$ and $\alpha< 0$. Then $\sigma_{\mathrm{ess}}(H_\alpha)\supset [-\frac{\alpha^2}{4\pi^4},+\infty)$.
\end{proposition}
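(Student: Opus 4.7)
The plan is to exhibit, for every $\mu\in[-\alpha^2/(4\pi^4),+\infty)$, a singular Weyl sequence $(g_n)_n\subset\mathcal{D}(H_\alpha)$ at $\mu$, namely bounded away from zero in $\cH$, weakly null, and with $(H_\alpha-\mu)g_n\to 0$ in $\cH$. For $\mu\geqslant 0$ the construction of Proposition \ref{prop:ess-spec-alphapositive} carries over verbatim: a singular sequence $(F_n)_n$ for $H_{\mathrm{free}}$ at $\mu$ can be chosen inside $H^2_{\mathrm{f}}(\mathbb{R}^3\times\mathbb{R}^3)$ with $\int\widehat{F_n}(\cdot,p_2)\,\ud p_2\equiv 0$, which places $F_n\in\mathcal{D}(H_\alpha)$ with trivial charge via $(\textsc{tms}')$ and gives $(H_\alpha-\mu)F_n=(H_{\mathrm{free}}-\mu)F_n\to 0$ by \eqref{eq:action_DHab}.

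The substantive case is $\mu=-\lambda$, $\lambda\in(0,\alpha^2/(4\pi^4)]$. Here, motivated by Lemma \ref{lem:reduction_lemma}, I would seek $g_n:=u_{\xi_n}^\lambda+F_n^\lambda\in\mathcal{D}(H_\alpha)$ with $\xi_n\in H^{\frac{1}{2}}(\mathbb{R}^3)$ an \emph{approximate null vector} of $T_\lambda+\alpha\mathbbm{1}$ and $F_n^\lambda\in H^2_{\mathrm{f}}$ a small correction enforcing $(\textsc{tms}')$. Since $(H_\alpha+\lambda)g_n=(H_{\mathrm{free}}+\lambda)F_n^\lambda$ by \eqref{eq:action_DHab}, and since a standard bounded right inverse to the marginal-trace map $F\mapsto\int\widehat{F}(\cdot,p_2)\,\ud p_2$, which sends $H^2_{\mathrm{f}}(\mathbb{R}^6)$ onto $H^{\frac{1}{2}}(\mathbb{R}^3)$, yields $\|F_n^\lambda\|_{H^2}\lesssim\|(T_\lambda+\alpha)\xi_n\|_{H^{\frac{1}{2}}}$, the whole task boils down to producing $\xi_n$ with $\|(T_\lambda+\alpha)\xi_n\|_{H^{\frac{1}{2}}}\to 0$, $\liminf_n\|u_{\xi_n}^\lambda\|_\cH>0$, and $\xi_n\rightharpoonup 0$ in $H^{-\frac{1}{2}}$ (the last yielding $u_{\xi_n}^\lambda\rightharpoonup 0$ in $\cH$ by continuity of $\xi\mapsto u_\xi^\lambda$).

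To build $\xi_n$, I would use the angular decomposition of Section \ref{sec:angular_decomp} and, guided by Lemma \ref{lem:only_odd_symm}, work in the $\ell=1$ sector. Setting $p_*:=\sqrt{(\alpha^2/(4\pi^4)-\lambda)/\nu}\geqslant 0$ so that $2\pi^2\sqrt{\nu p_*^2+\lambda}+\alpha=0$, pick a fixed $\phi\in C_c^\infty(\mathbb{R})$ and put
\[
 \widehat{\xi_n}(p)\;:=\;n^{\frac{1}{2}}\,\phi\bigl(n(|p|-p_*)\bigr)\,Y_{1,0}(\Omega_p)\,.
\]
Direct computations in spherical coordinates then yield: \textbf{(a)} the diagonal part of $(T_\lambda+\alpha)\xi_n$ is $O(1/n)$ in $H^{\frac{1}{2}}$ because the multiplier $2\pi^2\sqrt{\nu p^2+\lambda}+\alpha$ vanishes on the sphere $|p|=p_*$ and $\widehat{\xi_n}$ is concentrated in a shell of thickness $O(1/n)$; \textbf{(b)} the off-diagonal integral part, once the $\int_{-1}^1 y\,(\cdot)\,\ud y$ angular cancellation in the $\ell=1$ projection of \eqref{eq:addition_formula} is exploited, has a kernel decaying like $|p|^{-3}$ at infinity, which is sufficient to give $H^{\frac{1}{2}}$-integrability and an $O(n^{-\frac{1}{2}})$ norm; \textbf{(c)} the expansion of $\|u_{\xi_n}^\lambda\|_\cH^2$ separates a ``diagonal'' term, which converges to a strictly positive constant proportional to $p_*^2\|\phi\|_{L^2}^2$ by rotational invariance, from a ``cross'' term which is $o(1)$ because of shell shrinkage; \textbf{(d)} pairing $\widehat{\xi_n}$ against a fixed test function and extracting the $n\to\infty$ limit of a Riemann-sum-in-$t=n(|p|-p_*)$ expression gives $\widehat{\xi_n}\rightharpoonup 0$ in $L^2$, hence $\xi_n\rightharpoonup 0$ in $H^{-\frac{1}{2}}$. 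The endpoint $\lambda=\alpha^2/(4\pi^4)$, where $p_*=0$, follows either by the analogous construction with $\widehat{\xi_n}$ centred at the origin or simply by closedness of $\sigma_{\mathrm{ess}}(H_\alpha)$ and passage to the limit from the interior points already treated.

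The main technical obstacle is step \textbf{(b)}: a naive spherically symmetric ansatz for $\widehat{\xi_n}$ would produce an off-diagonal tail $\propto|p|^{-2}$, which is borderline \emph{not} $H^{\frac{1}{2}}$-integrable in three dimensions, so the whole construction would collapse at the trace-extension level. Working in the $\ell=1$ sector — consistent with the restriction of Lemma \ref{lem:only_odd_symm} — is decisive, as the $P_1(y)=y$ angular integral produces the extra order of decay that makes everything fit. The anti-symmetric trace-extension step producing $F_n^\lambda$ is by comparison routine.
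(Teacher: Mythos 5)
Your construction is essentially the paper's own proof: both cases $\mu\geqslant 0$ are handled via free singular sequences with vanishing marginal trace, and for $\mu=-\lambda\in[-\frac{\alpha^2}{4\pi^4},0)$ both arguments build the singular sequence from $\ell=1$ charges concentrated on a shrinking spherical shell at the radius where $2\pi^2\sqrt{\nu r^2+\lambda}=|\alpha|$, with the same three verifications (norm bounded below, weak nullity, $\|(T_\lambda+\alpha\mathbbm{1})\xi_n\|_{H^{\frac{1}{2}}}\to 0$) and the same decisive observation that the $\ell=1$ cancellation upgrades the off-diagonal tail from the borderline $|p|^{-2}$ to $|p|^{-3}$. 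The only differences are cosmetic: the paper assembles $g_n=(H_\mathrm{free}+\lambda\mathbbm{1})^{-1}u^\lambda_{\mathcal{A}_{\lambda,\alpha}\xi_n}+u^\lambda_{\xi_n}$ via the Birman parametrisation (Lemma \ref{lem:strategy-for-essspec}), which is exactly the bounded right inverse of the marginal-trace map you invoke, uses an indicator profile in place of your smooth bump, verifies weak nullity through pairwise $W_\lambda$-orthogonality rather than direct testing, and treats the endpoint $\lambda=\frac{\alpha^2}{4\pi^4}$ by the same construction rather than by closedness of the essential spectrum.
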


For the proof of Proposition \ref{prop:ess-spec-alphanegative_1} we need to recall from the previous literature a few extra technical facts concerning the operator $H_\alpha$ which we did not include in our outlook in Section \ref{sec:setting_and_main_results}, and in fact are at the basis of the very construction of $H_\alpha$ as a self-adjoint extension of $\mathring{H}$ within the Kre{\u\i}n-Vi\v{s}ik-Birman extension scheme.

\begin{proposition}[Space of charges $H^{-1/2}_{W_\lambda}$]\label{prop:space_of_charges}
 Let $m>0$ and $\lambda>0$.
 \begin{itemize}
  \item[(i)] The linear map $W_\lambda:H^{-\frac{1}{2}}(\mathbb{R}^3)\to H^{\frac{1}{2}}(\mathbb{R}^3)$ defined by
  \begin{equation}\label{eq:Wlambda}
\widehat{(W_\lambda\,\xi)}(p)\;:=\;\frac{2\pi^2}{\sqrt{\nu p^2+\lambda}\,}\,\widehat{\xi}(p)-2\!\int_{\mathbb{R}^3}\frac{\widehat{\xi}(q)}{(p^2+q^2+\mu p\cdot q+\lambda)^2}\,\ud q
\end{equation}
is bounded, positive, invertible, and onto.
 \item[(ii)] For generic $u_\xi^\lambda,u_\eta^\lambda\in\ker (\mathring{H}^*+\lambda\mathbbm{1})$ one has
 \begin{equation}\label{eq:scalar_products}
 \langle u_\xi^\lambda,u_\eta^\lambda\rangle_{\cH}\;=\;\langle \xi,W_\lambda\eta\rangle_{H^{-\frac{1}{2}},H^{\frac{1}{2}}}
 \end{equation}
 and the above expression defines a scalar product in $H^{-\frac{1}{2}}(\mathbb{R}^3)$ which is equivalent to the standard $H^{-\frac{1}{2}}$-scalar product.
 \item[(iii)] Denoting by $H^{-1/2}_{W_\lambda}$ the space of the $H^{-\frac{1}{2}}$-functions equipped with the scalar product \eqref{eq:scalar_products}, the map 
 \begin{equation}\label{eq:isomorphism_Ulambda}
\begin{split}
U_\lambda\,:\,\ker (\mathring{H}^*+\lambda\mathbbm{1})\;&\;\xrightarrow[]{\;\;\;\cong\;\;\;}\;H^{-1/2}_{W_\lambda}(\mathbb{R}^3)\,,\qquad u_\xi^\lambda \longmapsto \,\xi
\end{split}
\end{equation}
is an isomorphism between Hilbert spaces, with $\ker (\mathring{H}^*+\lambda\mathbbm{1})$ equipped with  the standard scalar product inherited from $\cH$. 
 \end{itemize}
\end{proposition}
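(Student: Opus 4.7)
\medskip

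\noindent\textbf{Proof plan.} The whole statement is anchored on the explicit computation of the $\cH$-scalar product of two `singular' functions $u_\xi^\lambda,u_\eta^\lambda$, which will identify the kernel of $W_\lambda$ and simultaneously give the matching formula \eqref{eq:scalar_products}. My first step would be to write, using the definition \eqref{eq:u_xi} and the fermionic anti-symmetry,
\[
 \langle u_\xi^\lambda,u_\eta^\lambda\rangle_{\cH}\;=\;2\iint_{\mathbb{R}^6}\frac{\overline{\widehat{\xi}(p)}\,\widehat{\eta}(p)-\overline{\widehat{\xi}(p)}\,\widehat{\eta}(q)}{(p^2+q^2+\mu p\cdot q+\lambda)^2}\,\ud p\,\ud q\,,
\]
and then perform the inner $\ud q$-integral in the diagonal piece by the change of variable $q\mapsto q-\frac{\mu}{2}p$, which turns $p^2+q^2+\mu p\cdot q+\lambda$ into $q^2+\nu p^2+\lambda$ (using $\nu=1-\mu^2/4$); the elementary formula $\int_{\mathbb{R}^3}(q^2+A)^{-2}\ud q=\pi^2/\sqrt{A}$ then produces exactly the Fourier multiplier $2\pi^2/\sqrt{\nu p^2+\lambda}$ appearing in \eqref{eq:Wlambda}. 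The off-diagonal piece is already of the form $\int\overline{\widehat{\xi}(p)}\,\widehat{(W_\lambda\eta)}(p)\,\ud p$ after recognising the integral kernel in \eqref{eq:Wlambda}. This proves the identity in (ii) (as a duality pairing, once boundedness of $W_\lambda$ is established).

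Next I would prove part (i). Boundedness of the multiplier part $\xi\mapsto \mathcal{F}^{-1}(2\pi^2(\nu p^2+\lambda)^{-1/2}\widehat{\xi})$ from $H^{-1/2}$ to $H^{1/2}$ is immediate, since $(\nu p^2+\lambda)^{-1/2}\sim\langle p\rangle^{-1}$. For the integral operator $K_\lambda:\widehat{\xi}\mapsto\int(p^2+q^2+\mu p\cdot q+\lambda)^{-2}\widehat{\xi}(q)\,\ud q$, I would apply a Schur-type test on the kernel $\langle p\rangle^{1/2}\langle q\rangle^{1/2}(p^2+q^2+\mu p\cdot q+\lambda)^{-2}$, using that $p^2+q^2+\mu p\cdot q\geqslant (1-\mu/2)(p^2+q^2)$ to bound the symmetric weighted integrals in $\ud q$ (and $\ud p$) uniformly in $p$ (resp.~$q$); this yields continuity of $K_\lambda$ from $H^{-1/2}$ to $H^{1/2}$ and hence boundedness of $W_\lambda$. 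Positivity is then a tautology from (ii): $\langle\xi,W_\lambda\xi\rangle=\|u_\xi^\lambda\|_{\cH}^2\geqslant 0$, and strict positivity (hence injectivity) comes from the implication $u_\xi^\lambda=0\Rightarrow\xi=0$ already recorded after \eqref{eq:u_xi}.

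The hard step is surjectivity, equivalently the \emph{lower} bound $\|u_\xi^\lambda\|_\cH^2\gtrsim\|\xi\|_{H^{-1/2}}^2$, because it is precisely this coercivity that, together with the symmetry and boundedness of $W_\lambda$, upgrades injectivity to a Hilbert-space isomorphism from $H^{-1/2}$ onto $H^{1/2}$ via Lax--Milgram (or the closed-range theorem). To get it, I would split the Fourier spectrum of $\xi$ into low and high momenta: in the high-momentum regime $|p|\geqslant\rho_0$, the multiplier term alone is bounded below by $\frac{2\pi^2}{\sqrt{\nu}|p|+\sqrt{\lambda}}\gtrsim\langle p\rangle^{-1}$, which dominates the Schur bound of the off-diagonal kernel once $\rho_0$ is chosen large enough; in the low-momentum regime a direct uniform-in-$p$ estimate of the Cauchy--Schwarz/Schur type shows that the integral kernel of $K_\lambda$ is strictly smaller than the multiplier. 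Combining the two regimes, one obtains the two-sided bound $\|\xi\|_{H^{-1/2}}^2\lesssim\langle\xi,W_\lambda\xi\rangle\lesssim\|\xi\|_{H^{-1/2}}^2$, which simultaneously gives the equivalence of norms asserted in (ii) and the invertibility/ontoness asserted in (i).

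Finally (iii) is then a bookkeeping step: the map $u_\xi^\lambda\mapsto\xi$ is well-defined and injective on $\ker(\mathring{H}^\ast+\lambda\mathbbm{1})$ (again by uniqueness of the charge), its range is $H^{-1/2}(\mathbb{R}^3)$ (since every $\xi\in H^{-1/2}$ produces an $u_\xi^\lambda\in\cH$ solving $\mathring{H}^\ast u_\xi^\lambda=-\lambda u_\xi^\lambda$ by \eqref{eq:kerHstar}), and by construction of the scalar product on $H^{-1/2}_{W_\lambda}$ it is a norm-isometry, hence the required Hilbert-space isomorphism. The genuinely non-routine part of the whole programme is the coercivity estimate in paragraph three; everything else is either algebraic manipulation, standard Schur/Fourier-multiplier bookkeeping, or direct consequence of \eqref{eq:scalar_products}.
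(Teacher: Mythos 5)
Your architecture is sound: the symmetrisation of $\langle u_\xi^\lambda,u_\eta^\lambda\rangle_\cH$, the completion of the square $p^2+q^2+\mu p\cdot q=|q+\frac{\mu}{2}p|^2+\nu p^2$ together with $\int_{\mathbb{R}^3}(q^2+A)^{-2}\,\ud q=\pi^2A^{-1/2}$, is exactly how \eqref{eq:scalar_products} is obtained (it reproduces step (vii) in the proof of Lemma \ref{lem:samedomainshift-2}, specialised to $\lambda'=\lambda$), and the Schur bound for boundedness and the Lax--Milgram reduction of (i) and (iii) to the coercivity estimate $\langle\xi,W_\lambda\xi\rangle\gtrsim\|\xi\|_{H^{-1/2}}^2$ are both correct. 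Note that the paper itself does not prove this Proposition: it defers entirely to \cite{Minlos-2011-preprint_May_2010}, \cite{MO-2016}, \cite{MO-2017}.

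The gap is precisely in the step you flag as non-routine. The high/low momentum splitting plus Schur test cannot deliver coercivity, because the identity you use to compute the diagonal term shows that the row sums of the off-diagonal kernel coincide with the multiplier \emph{exactly at every} $p$:
\[
 2\!\int_{\mathbb{R}^3}\frac{\ud q}{(p^2+q^2+\mu p\cdot q+\lambda)^2}\;=\;\frac{2\pi^2}{\sqrt{\nu p^2+\lambda\,}}\,.
\]
Hence the natural Schur bound for the multiplier-normalised off-diagonal operator is exactly $1$, uniformly in momentum; no cut-off $\rho_0$ creates a gap, since for $|p|\geqslant\rho_0$ the bulk of the $q$-integral still lives at $|q|\sim|p|$, and the "low-momentum" claim fails for the same reason. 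This is the same borderline phenomenon as the Cauchy--Schwarz estimate $\big|\iint\overline{\widehat{\xi}(p)}\,\widehat{\xi}(q)(\cdots)^{-2}\big|\leqslant\frac{1}{2}\iint(|\widehat{\xi}(p)|^2+|\widehat{\xi}(q)|^2)(\cdots)^{-2}$, which only yields $\langle\xi,W_\lambda\xi\rangle\geqslant 0$. The actual proof that the relevant norm is strictly below $1$ for every $m>0$ requires the decomposition into spherical harmonics and the Mellin-transform diagonalisation of each partial-wave component, in the same spirit as the paper's treatment of $T_\lambda$ in Section \ref{sec:angular_decomp} and of $L$ in Lemma \ref{lem:Llambdan}: one exhibits the sector-$\ell$ symbol explicitly and checks its supremum over $k$ and $\ell$ is $<1$. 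Without this quantitative input the chain coercivity $\Rightarrow$ surjectivity $\Rightarrow$ (iii) is not established.
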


The proof of Proposition \ref{prop:space_of_charges} is presented in \cite[Sec.~2]{Minlos-2011-preprint_May_2010}, \cite[Sec.~4.2]{MO-2016}, and \cite[Sec.~3]{MO-2017}.

\begin{proposition}[Birman parametrisation of $H_\alpha$]\label{prop:Birman_parametr_Halpha}
 Let $m>m^*$ and $\lambda>0$.
 \begin{itemize}
  \item[(i)] The linear operator $\mathcal{A}_{\lambda,\alpha}$ defined by
  \begin{equation}\label{eq:DA_op}
\begin{split}
\mathcal{D}(\mathcal{A}_{\lambda,\alpha})\;&:=\;\{\xi\in H^{\frac{1}{2}}(\mathbb{R}^3)\,|\,(T_\lambda+\alpha\mathbbm{1})\xi\in  H^{\frac{1}{2}}(\mathbb{R}^3)\} \\
\mathcal{A}_{\lambda,\alpha}\,\xi\;&:=\;2\,W_\lambda^{-1}(T_\lambda+\alpha\mathbbm{1})\,\xi
\end{split}
\end{equation}
 is self-adjoint on the Hilbert space $H^{-1/2}_{W_\lambda}$ introduced in Prop.~\ref{prop:space_of_charges}(iii).
 \item[(ii)] One has
 \begin{equation}\label{eq:Birman_parametr_Halpha}
  \mathcal{D}(H_\alpha)\;=\;\left\{ g=f+(H_\mathrm{free}+\lambda\mathbbm{1})^{-1} u^\lambda_{\mathcal{A}_{\lambda,\alpha}\xi}+u^\lambda_\xi\left|
  \!\begin{array}{c}
   f\in\mathcal{D}(\mathring{H}) \\
   \xi\in\mathcal{D}(\mathcal{A}_{\lambda,\alpha})
  \end{array}\!\!\!\right.\right\}.
 \end{equation}
 \end{itemize}
\end{proposition}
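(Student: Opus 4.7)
The plan is to recast the two assertions within the Kre\u{\i}n--Vi\v{s}ik--Birman (KVB) extension scheme, with $\mathring{H}$ as the symmetric base operator, $H_{\mathrm{free}}$ as the reference self-adjoint extension containing $\mathring{H}$, and the deficiency subspace $\ker(\mathring{H}^*+\lambda\mathbbm{1})$ transplanted to $H^{-1/2}_{W_\lambda}(\mathbb{R}^3)$ via the isomorphism $U_\lambda$ of Proposition \ref{prop:space_of_charges}(iii). The natural candidate for the KVB ``extension parameter'' on the deficiency subspace is $B:=U_\lambda^{-1}\mathcal{A}_{\lambda,\alpha}U_\lambda$, acting as $Bu^\lambda_\xi=u^\lambda_{\mathcal{A}_{\lambda,\alpha}\xi}$, and the two statements become: (ii) $H_\alpha$ is the KVB extension associated to $B$; (i) such $B$ is self-adjoint.

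I would establish (ii) first. Given $g=F^\lambda+u^\lambda_\xi\in\mathcal{D}(H_\alpha)$, set $f:=F^\lambda-(H_{\mathrm{free}}+\lambda\mathbbm{1})^{-1}u^\lambda_{\mathcal{A}_{\lambda,\alpha}\xi}$; since $\mathcal{A}_{\lambda,\alpha}\xi\in H^{-1/2}(\mathbb{R}^3)$ by definition of $W_\lambda^{-1}$, the subtracted term belongs to $\mathcal{D}(H_{\mathrm{free}})=H^2_\mathrm{f}(\mathbb{R}^3\times\mathbb{R}^3)$, hence $f\in H^2_\mathrm{f}$. The key point is that $f\in\mathcal{D}(\mathring{H})$, i.e.\ $f$ vanishes (in the $H^2$-trace sense) on the coincidence hyperplanes $\Gamma_1,\Gamma_2$. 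Via the Fourier characterisation of the trace at $\Gamma_2$, this amounts to the identity
\begin{equation*}
\int_{\mathbb{R}^3}\!\widehat{F^\lambda}(p_1,p_2)\,\ud p_2 \;=\;\int_{\mathbb{R}^3}\!\widehat{(H_{\mathrm{free}}+\lambda\mathbbm{1})^{-1}u^\lambda_{\mathcal{A}_{\lambda,\alpha}\xi}}(p_1,p_2)\,\ud p_2\,.
\end{equation*}
The right-hand side can be computed explicitly using \eqref{eq:u_xi} and the square-completion $p_1^2+p_2^2+\mu p_1\cdot p_2+\lambda=(p_2+\tfrac{\mu}{2}p_1)^2+\nu p_1^2+\lambda$, which reproduces precisely $\tfrac{1}{2}\widehat{(W_\lambda \mathcal{A}_{\lambda,\alpha}\xi)}(p_1)=\widehat{(T_\lambda+\alpha\mathbbm{1})\xi}(p_1)$ by the definitions \eqref{eq:Wlambda} and \eqref{eq:DA_op}. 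Comparing with the boundary condition $(\textsc{tms}')$ of \eqref{eq:DHab} yields exactly the required identity; analysis of the trace at $\Gamma_1$ is identical after exchange $p_1\leftrightarrow p_2$ and use of the antisymmetry of $F^\lambda$. The converse inclusion is the same calculation run backwards: any $g$ of the form \eqref{eq:Birman_parametr_Halpha} has $F^\lambda:=f+(H_{\mathrm{free}}+\lambda\mathbbm{1})^{-1}u^\lambda_{\mathcal{A}_{\lambda,\alpha}\xi}\in H^2_\mathrm{f}$, and the same integration identity shows that $F^\lambda$ fulfils $(\textsc{tms}')$, so $g\in\mathcal{D}(H_\alpha)$.

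For (i), once (ii) is in hand, the parametrisation \eqref{eq:Birman_parametr_Halpha} exhibits $H_\alpha$ as the KVB extension of $\mathring{H}$ associated to the operator $B=U_\lambda^{-1}\mathcal{A}_{\lambda,\alpha}U_\lambda$ on $\ker(\mathring{H}^*+\lambda\mathbbm{1})$. Combined with the consistent action $(H_\alpha+\lambda\mathbbm{1})g=(\mathring{H}+\lambda\mathbbm{1})f+u^\lambda_{\mathcal{A}_{\lambda,\alpha}\xi}$ (which follows from \eqref{eq:action_DHab} and $\mathring{H}\subset H_{\mathrm{free}}$), the KVB bijection between self-adjoint extensions of $\mathring{H}$ and self-adjoint operators on the deficiency subspace, applied in the direction ``extension self-adjoint $\Rightarrow$ parameter self-adjoint'', together with the self-adjointness of $H_\alpha$ from Theorem \ref{thm:properties_of_Halpha}(i), forces $B$ to be self-adjoint on $\ker(\mathring{H}^*+\lambda\mathbbm{1})$ equipped with its $\cH$-inner product. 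By Proposition \ref{prop:space_of_charges}(ii)--(iii) this inner product is carried by $U_\lambda$ exactly to $\langle\cdot,W_\lambda\cdot\rangle_{H^{-1/2},H^{1/2}}$, so $\mathcal{A}_{\lambda,\alpha}=U_\lambda B U_\lambda^{-1}$ is self-adjoint on $H^{-1/2}_{W_\lambda}(\mathbb{R}^3)$, as required.

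The main obstacle is the verification that the pointwise-in-$p_1$ identity derived in (ii) really encodes the functional condition $f\in\mathcal{D}(\mathring{H})$, rather than merely a weaker $L^2$-coincidence. Here one must invoke the description of $H^2_0(\mathbb{R}^6\setminus(\Gamma_1\cup\Gamma_2))$ as the kernel of the two natural $H^2$-trace operators on $\Gamma_1$ and $\Gamma_2$, whose Fourier expression is precisely the partial integral $\int\widehat{(\cdot)}(p_1,p_2)\,\ud p_2$ (and its mirror), regularised as in \eqref{eq:TMS_pre-asymptotics}. The explicit computation of $\int\ud p_2$ of the kernel of $(H_{\mathrm{free}}+\lambda)^{-1}u^\lambda_\eta$ producing $\tfrac{1}{2}W_\lambda\eta$ is the technical heart of the matching, and is what realises the abstract KVB parameter $B$ as the concrete operator $\mathcal{A}_{\lambda,\alpha}$.
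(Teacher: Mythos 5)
Your reconstruction is essentially correct, and it is worth noting that the paper itself supplies no proof of this Proposition: it only points to \cite[Sec.~4.3]{MO-2016} and \cite[Sections 3--6]{MO-2017}, where the statement is established within the same Kre{\u\i}n--Vi\v{s}ik--Birman framework you adopt. Your technical core checks out: completing the square, $p_1^2+p_2^2+\mu\,p_1\cdot p_2+\lambda=(p_2+\frac{\mu}{2}p_1)^2+\nu p_1^2+\lambda$, gives $\int_{\mathbb{R}^3}(p_1^2+p_2^2+\mu\,p_1\cdot p_2+\lambda)^{-2}\,\ud p_2=\pi^2(\nu p_1^2+\lambda)^{-1/2}$, hence $\int_{\mathbb{R}^3}\widehat{(H_{\mathrm{free}}+\lambda\mathbbm{1})^{-1}u^\lambda_\eta}\,\ud p_2=\frac{1}{2}\widehat{(W_\lambda\eta)}$ by \eqref{eq:Wlambda}, and with $\eta=\mathcal{A}_{\lambda,\alpha}\xi$ this equals $\widehat{(T_\lambda+\alpha\mathbbm{1})\xi}$, so the vanishing of the trace of $f$ is exactly $(\textsc{tms}')$; combined with the identification of $H^2_0((\mathbb{R}^3\times\mathbb{R}^3)\setminus(\Gamma_1\cup\Gamma_2))$ as the kernel of the two $H^2\to H^{1/2}$ trace maps (which you rightly flag as the point to be cited), this yields part (ii) in both directions. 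Where you genuinely diverge from the cited references is part (i): there $\mathcal{A}_{\lambda,\alpha}$ is proved self-adjoint on $H^{-1/2}_{W_\lambda}$ \emph{directly}, from the mapping properties of $T_\lambda$ and $W_\lambda$, and the self-adjointness of $H_\alpha$ is then deduced via the forward KVB correspondence; you run the correspondence backwards, extracting self-adjointness of the Birman parameter from Theorem \ref{thm:properties_of_Halpha}(i). This is admissible only because Theorem \ref{thm:properties_of_Halpha}(i) has an independent quadratic-form proof in \cite{CDFMT-2012}; if it were sourced solely from \cite{MO-2017}, your argument for (i) would be circular, since that reference obtains the self-adjointness of $H_\alpha$ precisely from that of $\mathcal{A}_{\lambda,\alpha}$. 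You should state this dependence explicitly, and also record the uniqueness of the decomposition $\mathcal{D}(\mathring{H}^*)=\mathcal{D}(\mathring{H})\dotplus(H_{\mathrm{free}}+\lambda\mathbbm{1})^{-1}\ker(\mathring{H}^*+\lambda\mathbbm{1})\dotplus\ker(\mathring{H}^*+\lambda\mathbbm{1})$, which is what lets you read off the Birman parameter unambiguously from \eqref{eq:Birman_parametr_Halpha} before invoking the bijection in the direction you need.
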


The proof of Proposition \ref{prop:space_of_charges} is presented in \cite[Sec.~4.3]{MO-2016} and \cite[Sections 3-6]{MO-2017}. In the notation of formulas \eqref{eq:DHab} and \eqref{eq:Birman_parametr_Halpha}, the regular $H^2$-part of $g$ is the function $F^\lambda=f+(H_\mathrm{free}+\lambda\mathbbm{1})^{-1} u^\lambda_{\eta}$ with charge $\eta=\mathcal{A}_{\lambda,\alpha}\xi$, and in fact the condition $\eta=\mathcal{A}_{\lambda,\alpha}\xi$ is tantamount as the Ter-Martirosyan-Skornyakov condition $(\textsc{tms}')$. Moreover, from \eqref{eq:kerHstar} and \eqref{eq:Birman_parametr_Halpha}, and consistently with \eqref{eq:action_DHab}, 
\begin{equation}\label{eq:actionHalpha2}
 (H_\alpha+\lambda\mathbbm{1}) g\;=\;(\mathring{H}+\lambda\mathbbm{1})f+u^\lambda_{\mathcal{A}_{\lambda,\alpha}\xi}\,.
\end{equation}

\begin{proposition}[Additional regularity properties of $T_\lambda$]\label{prop:regpropTlambda}
Let $m>0$ and $\lambda>0$. Then
\begin{equation}\label{eq:T-ell-3/2-1/2_excluded}
\|T_\lambda\xi\|_{H^{s-1}}\;\lesssim\;\|\xi\|_{H^{s}}\qquad s\in\,\textstyle{(-\frac{1}{2},\frac{3}{2})}
\end{equation}
and in addition
\begin{equation}\label{eq:T-ell-3/2-1/2_included}
\|T_\lambda\xi\|_{H^{s-1}}\;\lesssim\;\|\xi\|_{H^{s}}\qquad\forall\xi\in H^{s}_\ell(\mathbb{R}^3) \qquad 
\begin{cases}
\;s\in\,\textstyle{[-\frac{1}{2},\frac{3}{2}]} \\
\quad \ell\geqslant 1\,,
\end{cases}
\end{equation}
whereas $T_\lambda$ fails to map continuously $H^{3/2}_{\ell=0}(\mathbb{R}^3)$ into $H^{1/2}_{\ell=0}(\mathbb{R}^3)$ or $H^{-1/2}_{\ell=0}(\mathbb{R}^3)$ into $H^{-3/2}_{\ell=0}(\mathbb{R}^3)$.
\end{proposition}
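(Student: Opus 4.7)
The plan is to decompose $T_\lambda = D_\lambda + K_\lambda$, where $D_\lambda$ is the Fourier multiplier with symbol $2\pi^2\sqrt{\nu|p|^2+\lambda}$ (pointwise equivalent to $\langle p\rangle$) and $K_\lambda$ is the integral operator with kernel $(p^2+q^2+\mu p\cdot q+\lambda)^{-1}$. Since $\|D_\lambda\xi\|_{H^{s-1}}\lesssim\|\xi\|_{H^s}$ trivially for every real $s$, each assertion of the proposition reduces at once to the corresponding statement for $K_\lambda$. The elementary bound $p^2+q^2+\mu p\cdot q\geqslant\nu(p^2+q^2)$ (valid since $\mu<2$ for $m>0$) shows that the kernel of $K_\lambda$ is non-negative and dominated, for $\lambda>0$ fixed, by $C_\lambda(p^2+q^2+\lambda)^{-1}$.

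For the general continuity \eqref{eq:T-ell-3/2-1/2_excluded} I would apply Schur's test to the conjugated kernel
\[
\mathcal{K}_s(p,q) \;:=\; \frac{\langle p\rangle^{s-1}\langle q\rangle^{-s}}{\,p^2+q^2+\mu p\cdot q+\lambda\,}\,,
\]
with the weight $h(p)=\langle p\rangle^c$. Direct polar-coordinate estimates yield $\int \mathcal{K}_s(p,q)\langle q\rangle^c\,\ud q\lesssim\langle p\rangle^c$ for $c\in(s-3,s-1)$ and, dually, $\int \mathcal{K}_s(p,q)\langle p\rangle^c\,\ud p\lesssim\langle q\rangle^c$ for $c\in(-s-2,-s)$. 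The admissible range for $c$ is the intersection of these two intervals, which is non-empty iff $\max\{s-3,-s-2\}<\min\{s-1,-s\}$, i.e.~iff $s\in(-\tfrac12,\tfrac32)$, precisely the claimed range.

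For the improved bound \eqref{eq:T-ell-3/2-1/2_included} on $H^s_\ell(\mathbb{R}^3)$ with $\ell\geqslant1$, I would exploit the angular reduction of Section \ref{sec:angular_decomp}. Via \eqref{eq:addition_formula}, the restriction of $K_\lambda$ to $H^s_\ell(\mathbb{R}^3)$ is the radial integral operator with kernel $k_\ell(r,r')=2\pi\int_{-1}^1 P_\ell(y)\,(r^2+r'^2+\mu rr'y+\lambda)^{-1}\,\ud y$. For $\ell\geqslant1$ the orthogonality of $P_\ell$ to constants, combined with a Neumann expansion of the denominator in powers of $\mu rr'y/(r^2+r'^2+\lambda)$, gives
\[
|k_\ell(r,r')|\;\lesssim_\ell\;\frac{(rr')^\ell}{(r^2+r'^2+\lambda)^{\ell+1}}\,,
\]
one factor of $rr'/(r^2+r'^2+\lambda)$ better than the $\ell=0$ kernel. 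This extra decay is exactly what is required to extend the Schur estimate to the closed endpoints $s\in\{-\tfrac12,\tfrac32\}$ on each angular sector; summing over $\ell\geqslant1$, with Schur constants growing at most polynomially in $\ell$ against the natural $H^s_\ell$-weights, yields \eqref{eq:T-ell-3/2-1/2_included}.

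Finally, to disprove the endpoint continuity on the $\ell=0$ sector I would exhibit a logarithmically critical radial test charge. Take $\widehat\xi(p)=|p|^{-3}(\log|p|)^{-\alpha}\chi(|p|)$, where $\chi$ is a smooth cut-off supported in $|p|\geqslant2$. Standard radial Sobolev computations show $\xi\in H^{3/2}_{\ell=0}$ precisely when $\alpha>\tfrac12$. Scaling $q=|p|u$ in $\widehat{K_\lambda\xi}(p)$ reveals that the limiting angular integral $\int|u|^{-3}(1+|u|^2+\mu\hat p\cdot u)^{-1}\ud u$ has a logarithmic divergence at $u=0$, regularised only by the cut-off $|q|\geqslant2$ and reappearing as a $\log|p|$ factor in the asymptotics $\widehat{K_\lambda\xi}(p)\sim|p|^{-2}(\log|p|)^{1-\alpha}$ as $|p|\to\infty$; this lies in $H^{1/2}_{\ell=0}$ only when $\alpha>\tfrac32$. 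Any choice $\alpha\in(\tfrac12,\tfrac32]$ therefore breaks continuity $H^{3/2}_{\ell=0}\to H^{1/2}_{\ell=0}$, and the dual failure $H^{-1/2}_{\ell=0}\not\to H^{-3/2}_{\ell=0}$ follows from the symmetry of $T_\lambda$ on an appropriate dense domain. The most delicate step is exactly this endpoint analysis: tracking the non-scale-invariance introduced by the logarithm through the conjugated kernel with the precision required to separate the $H^{3/2}$ and $H^{1/2}$ thresholds by exactly one logarithm, beyond the purely homogeneous scaling used in \eqref{eq:T-ell-3/2-1/2_excluded}.
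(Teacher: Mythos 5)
The paper offers no proof of this proposition; it defers to [MO-2016, Prop.~3] and [MO-2017, Prop.~3], and your argument follows essentially that standard route: split off the multiplier part, run a weighted Schur test on the conjugated kernel for the open range $s\in(-\tfrac12,\tfrac32)$, gain one factor of $rr'/(r^2+r'^2+\lambda)$ per angular sector $\ell\geqslant1$ from the orthogonality of $P_\ell$ to low-degree polynomials so as to reach the closed endpoints, and exhibit a radial charge to break the $\ell=0$ endpoint. The scheme is sound, but three details should be repaired. First, the inequality $p^2+q^2+\mu\,p\cdot q\geqslant\nu(p^2+q^2)$ is false: at $q=-p$ the left-hand side is $(2-\mu)|p|^2$ while $2\nu=2-\tfrac{\mu^2}{2}>2-\mu$; the sharp constant is $1-\tfrac{\mu}{2}>0$, which serves the same purpose. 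Second, since the operator is block-diagonal over the angular sectors, a bound on $\bigoplus_{\ell\geqslant1}H^{s}_\ell$ would require the per-sector Schur constants to be \emph{uniformly bounded} in $\ell$, not merely polynomially growing as you assert; this does in fact hold, because $(rr')^{\ell}(r^2+r'^2+\lambda)^{-\ell-1}\leqslant 2^{-(\ell-1)}\,rr'(r^2+r'^2+\lambda)^{-2}$, and in any case \eqref{eq:T-ell-3/2-1/2_included} is stated for a single fixed $\ell$, so the summation step is not needed. Third, the logarithmic test charge is more elaborate than necessary, and your asymptotics $\widehat{(T_\lambda\xi)}(p)\sim|p|^{-2}(\log|p|)^{1-\alpha}$ is only correct for $\alpha<1$ (for $\alpha>1$ the inner integral converges and the leading behaviour is a constant times $|p|^{-2}$), so if you keep that construction you should restrict to $\alpha\in(\tfrac12,1)$; more simply, for \emph{any} radial $\widehat{\xi}\geqslant0$, $\widehat{\xi}\not\equiv0$, in $H^{3/2}_{\ell=0}$ (a Gaussian will do) both terms of $\widehat{(T_\lambda\xi)}(p)$ are nonnegative and the bound $p^2+q^2+\mu\,p\cdot q+\lambda\leqslant2(p^2+q^2+\lambda)$ forces $\widehat{(T_\lambda\xi)}(p)\gtrsim\langle p\rangle^{-2}$, which already just fails to lie in $H^{1/2}(\mathbb{R}^3)$; the dual failure then follows from the symmetry of the kernel exactly as you indicate.
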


The proof of Proposition \ref{prop:regpropTlambda} is presented in \cite[Prop.~3]{MO-2016} and \cite[Prop.~3]{MO-2017}.

When $\alpha<0$, a convenient strategy for the choice of a singular sequence for $H_\alpha$ at some \emph{negative} value $-\lambda$ is described by the following Lemma.

\begin{lemma}\label{lem:strategy-for-essspec}
 Let $m>m^*$ and $\alpha<0$. Assume that a sequence $(\xi_n)_n$ in $\mathcal{D}(\mathcal{A}_{\lambda,\alpha})$, the domain of the operator \eqref{eq:DA_op}, and a number $\lambda>0$ are given such that
 \begin{itemize}
  \item[(a)] there exist constants $c_1,c_2>0$ such that $c_1<\|\xi_n\|_{H^{-\frac{1}{2}}}<c_2$ for all $n$,
  \item[(b)] $\langle\xi_n,W_\lambda\xi_m\rangle_{H^{-\frac{1}{2}},H^{\frac{1}{2}}}\to 0$ as $n,m\to +\infty$ with $n\neq m$,
  \item[(c)] $\|(T_\lambda+\alpha\mathbbm{1})\xi_n\|_{H^{\frac{1}{2}}}\to 0$ as $n\to +\infty$.
 \end{itemize}
 Then the sequence $(g_n)_n$ in $\mathcal{D}(H_\alpha)$ defined by
 \[
  g_n\;:=\;(H_\mathrm{free}+\lambda\mathbbm{1})^{-1} u^\lambda_{\mathcal{A}_{\lambda,\alpha}\xi_n}+u^\lambda_{\xi_n}
 \]
 is a singular sequence for $H_\alpha$ at $-\lambda$, and hence $-\lambda\in\sigma_{\mathrm{ess}}(H_\alpha)$.
\end{lemma}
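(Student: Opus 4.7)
My plan is to verify directly the three defining properties of a singular sequence for $H_\alpha$ at $-\lambda$, namely: (A) $(g_n)_n$ is bounded in $\cH$ with $\liminf_n\|g_n\|_\cH>0$; (B) $g_n\rightharpoonup 0$; (C) $\|(H_\alpha+\lambda\mathbbm{1})g_n\|_\cH\to 0$. The key structural observation is that the prescription for $g_n$ is exactly the Birman parametrisation of Proposition~\ref{prop:Birman_parametr_Halpha}(ii) with $f=0$, so $g_n\in\mathcal{D}(H_\alpha)$, and formula \eqref{eq:actionHalpha2} immediately yields $(H_\alpha+\lambda\mathbbm{1})g_n=u^\lambda_{\mathcal{A}_{\lambda,\alpha}\xi_n}$. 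The entire proof is then a book-keeping exercise translating the hypotheses (a)--(c) through the $H^{-1/2}_{W_\lambda}$-isometry of Proposition~\ref{prop:space_of_charges}.

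For (C) I would use the scalar-product identity \eqref{eq:scalar_products} and the norm equivalence between $H^{-1/2}_{W_\lambda}$ and $H^{-1/2}$ (Proposition~\ref{prop:space_of_charges}(ii)-(iii)) to write
\[
\|u^\lambda_{\mathcal{A}_{\lambda,\alpha}\xi_n}\|_\cH^{\,2}\;=\;\langle \mathcal{A}_{\lambda,\alpha}\xi_n,W_\lambda\mathcal{A}_{\lambda,\alpha}\xi_n\rangle_{H^{-\frac12},H^{\frac12}}\;\asymp\;\|\mathcal{A}_{\lambda,\alpha}\xi_n\|_{H^{-\frac12}}^{\,2}\,.
\]
Then I would invoke the definition $\mathcal{A}_{\lambda,\alpha}\xi_n=2W_\lambda^{-1}(T_\lambda+\alpha\mathbbm{1})\xi_n$ together with the bounded invertibility of $W_\lambda:H^{-1/2}\to H^{1/2}$ (Proposition~\ref{prop:space_of_charges}(i)) to bound the right-hand side by a constant times $\|(T_\lambda+\alpha\mathbbm{1})\xi_n\|_{H^{\frac12}}^2$, which tends to zero by hypothesis (c). A useful by-product is that the first summand of $g_n$, namely $(H_\mathrm{free}+\lambda\mathbbm{1})^{-1}u^\lambda_{\mathcal{A}_{\lambda,\alpha}\xi_n}$, has $\cH$-norm at most $\lambda^{-1}\|u^\lambda_{\mathcal{A}_{\lambda,\alpha}\xi_n}\|_\cH$ and therefore vanishes strongly in $\cH$.

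For (A) I would then focus on the surviving singular summand $u^\lambda_{\xi_n}$ and use once more the identity \eqref{eq:scalar_products} to get $\|u^\lambda_{\xi_n}\|_\cH^2=\langle\xi_n,W_\lambda\xi_n\rangle\asymp\|\xi_n\|_{H^{-\frac12}}^2$, which is bounded above and away from zero by hypothesis (a); combined with the strong vanishing of the first summand, this gives both the uniform bound and $\liminf_n\|g_n\|_\cH>0$. For (B), since the first summand is already strongly null, it suffices to prove $u^\lambda_{\xi_n}\rightharpoonup 0$; hypothesis (b) rewrites as $\langle u^\lambda_{\xi_n},u^\lambda_{\xi_m}\rangle_\cH\to 0$ for $n\neq m$, and a bounded sequence in a Hilbert space whose pairwise inner products vanish in the limit must converge weakly to zero --- any weak subsequential limit $y$ satisfies $\|y\|^2=\lim_k\langle u^\lambda_{\xi_{n_k}},y\rangle=\lim_k\lim_j\langle u^\lambda_{\xi_{n_k}},u^\lambda_{\xi_{n_j}}\rangle=0$ by a standard diagonal extraction.

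No step is genuinely hard; the conceptual content is entirely concentrated in the identification of the right parametrisation of $\mathcal{D}(H_\alpha)$. The most delicate point to execute carefully is (C), where the chain
\[
\|u^\lambda_{\mathcal{A}_{\lambda,\alpha}\xi_n}\|_\cH\;\asymp\;\|\mathcal{A}_{\lambda,\alpha}\xi_n\|_{H^{-\frac12}}\;\lesssim\;\|(T_\lambda+\alpha\mathbbm{1})\xi_n\|_{H^{\frac12}}
\]
must be established with constants uniform in $n$; this is guaranteed by the global bounded invertibility of $W_\lambda$ asserted in Proposition~\ref{prop:space_of_charges}(i). Once this translation is in place, hypotheses (a), (b), (c) correspond transparently to the three singular-sequence requirements, and the conclusion $-\lambda\in\sigma_{\mathrm{ess}}(H_\alpha)$ follows from the Weyl criterion.
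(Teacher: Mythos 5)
Your proposal is correct and follows essentially the same route as the paper: identify $g_n$ via the Birman parametrisation \eqref{eq:Birman_parametr_Halpha} with $f=0$ so that \eqref{eq:actionHalpha2} gives $(H_\alpha+\lambda\mathbbm{1})g_n=u^\lambda_{\mathcal{A}_{\lambda,\alpha}\xi_n}$, then translate (a)--(c) through the identity \eqref{eq:scalar_products}, the norm equivalence of $H^{-1/2}_{W_\lambda}$, and the bounded invertibility of $W_\lambda$ to verify the three singular-sequence conditions. The only cosmetic difference is that the paper bounds $\|u^\lambda_{\mathcal{A}_{\lambda,\alpha}\xi_n}\|_\cH^2$ directly by $4\|W_\lambda^{-1}\|\,\|(T_\lambda+\alpha\mathbbm{1})\xi_n\|_{H^{1/2}}^2$ rather than passing through $\|\mathcal{A}_{\lambda,\alpha}\xi_n\|_{H^{-1/2}}$, which is immaterial.
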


\begin{proof}
 That $g_n\in\mathcal{D}(H_\alpha)$ is clear from \eqref{eq:Birman_parametr_Halpha}. 
 
 From the assumption (a) and Proposition \ref{prop:space_of_charges}(ii)-(iii),
 \[\tag{i}\label{i-here}
  \begin{split}
   \liminf_n\|u_{\xi_n}^\lambda\|_{\cH}\;&\approx\;\liminf_n\|\xi_n\|_{H^{-\frac{1}{2}}}\;>\;0 \\
   \limsup_n\|u_{\xi_n}^\lambda\|_{\cH}\;&\approx\;\limsup_n\|\xi_n\|_{H^{-\frac{1}{2}}}<\;+\infty\,.
  \end{split}
 \]
 Moreover,
 \[\tag{ii}\label{ii-here}
  \begin{split}
   \big\| u^\lambda_{\mathcal{A}_{\lambda,\alpha}\xi_n}\big\|_{\cH}^2\;&=\;\big\langle \mathcal{A}_{\lambda,\alpha}\xi_n,W_\lambda \mathcal{A}_{\lambda,\alpha}\xi_n \big\rangle_{H^{-\frac{1}{2}},H^{\frac{1}{2}}} \\
   &=\;4\,\big\langle W_\lambda^{-1}(T_\lambda+\alpha\mathbbm{1})\xi_n,(T_\lambda+\alpha\mathbbm{1})\xi_n \big\rangle_{H^{-\frac{1}{2}},H^{\frac{1}{2}}} \\
   &\leqslant\; 4\;\big\|W_\lambda^{-1}\big\|_{H^{\frac{1}{2}}\to H^{-\frac{1}{2}}}\|(T_\lambda+\alpha\mathbbm{1})\xi_n\|_{H^{\frac{1}{2}}}^2\;\xrightarrow[]{\:n\to +\infty\:}\;0
  \end{split}
 \]
having used \eqref{eq:scalar_products} in the first step, \eqref{eq:DA_op} in the second, Proposition \ref{prop:space_of_charges}(i) and the Inverse Mapping Theorem in the third, and assumption (c) in the last. Since, by continuity,
\[\tag{iii}\label{iii-here}
 \big\| (H_\mathrm{free}+\lambda\mathbbm{1})^{-1} u^\lambda_{\mathcal{A}_{\lambda,\alpha}\xi_n}\big\|_{\cH}\;\xrightarrow[]{\:n\to +\infty\:}\;0\,,
\]
then from the first of \eqref{i-here} and from \eqref{iii-here} we deduce that
\[\tag{iv}\label{iv-here}
 \liminf_n\|g_n\|_{\cH}\;>\;0\,.
\]


Next, from the second of \eqref{i-here}, and from the property
\[
 \big\langle u_{\xi_n}^\lambda,u_{\xi_m}^\lambda\big\rangle_\cH\;=\;\langle\xi_n,W_\lambda\xi_m\rangle_{H^{-\frac{1}{2}},H^{\frac{1}{2}}}\;\xrightarrow[n\neq m]{\:n,m\to +\infty\:}\,0
\]
that follows from \eqref{eq:scalar_products} and assumption (b), it is a standard exercise to deduce that
\[
 u_{\xi_n}^{\lambda}\;\rightharpoonup\;0\qquad\textrm{in $\cH$ \quad as $n\to +\infty$}\,,
\]
which, together with \eqref{iii-here}, yields
\[\tag{v}\label{v-here}
 g_n\;\rightharpoonup\;0\qquad\textrm{in $\cH$ \quad as $n\to +\infty$}\,.
\]

Last, from \eqref{eq:actionHalpha2} and \eqref{ii-here},
\[\tag{vi}\label{vi-here}
 \begin{split}
  \|(H_\alpha-(-\lambda)\mathbbm{1}) g_n\|_{\cH}\;&=\;\big\| u^\lambda_{\mathcal{A}_{\lambda,\alpha}\xi_n}\big\|_{\cH}\;\xrightarrow[]{\:n\to +\infty\:}\;0\,.
 \end{split}
\]

Owing to the properties \eqref{iv-here}, \eqref{v-here}, and \eqref{vi-here}, the sequence $(g_n)_n$ is indeed a singular sequence for $H_\alpha$ at $-\lambda$, and hence $-\lambda\in\sigma_{\mathrm{ess}}(H_\alpha)$.
\end{proof}

We can now come to the proof of the main result of this Section.

\begin{proof}[Proof of  Proposition \ref{prop:ess-spec-alphanegative_1}]
 In either case $\mu\in[0,+\infty)$ and $\mu\in[-\frac{\alpha^2}{4\pi^4},0)$ we prove that $\mu\in\sigma_{\mathrm{ess}}(H_\alpha)$  by means of a suitable singular sequence for $H_\alpha$ at $\mu$.

 In the case $\mu\in[0,+\infty)$ we proceed exactly as in the proof of Proposition \ref{prop:ess-spec-alphapositive}, thus considering a sequence $(F_n)_n$ in $H^2_\mathrm{f}(\mathbb{R}^3\times\mathbb{R}^3)$, with $\int_{\mathbb{R}^3}\widehat{F}_n(p,q)\ud p=0$, that is a singular sequence for $H_{\mathrm{free}}$ at $\mu$, and which is therefore also a singular sequence for $H_\alpha$ at $\mu$.
 
 In the case $\mu\in[-\frac{\alpha^2}{4\pi^4},0)$ we set $\lambda:=-\mu$ and we produce a sequence $(\xi_n)_n$ in $H^{\frac{1}{2}}(\mathbb{R}^3)$ with the properties prescribed by Lemma \ref{lem:strategy-for-essspec}, thus deducing that $\mu=-\lambda\in\sigma_{\mathrm{ess}}(H_\alpha)$.

 Let $r_0\geqslant 0$ be the unique root of
 \[
  2\pi^2\sqrt{\nu r_0^2+\lambda}\;=\;|\alpha|
 \]
 for the considered $\lambda\in(0,\frac{\alpha^2}{4\pi^4}]$.
 Correspondingly, let $\xi_n\in H^{1/2}_{\ell=1}(\mathbb{R}^3)$ be defined in polar coordinates $p\equiv|p|\Omega_p$ by
 \[
  \begin{split}
   \widehat{\xi}_n(p)\;&:=\;Y_{1,1}(\Omega_p)f_n(|p|) \\
   f_n(r)\;&:=\;\frac{\,\sqrt{n}\,}{\:r}\,\mathbf{1}_{[r_0+\frac{1}{n},r_0+\frac{2}{n}]}\,.
  \end{split}
 \]
  The charge $\xi_n$ is constructed in view of two crucial properties, the relevance of which will be clear in a moment: the fact that $\xi_n$ has definite angular symmetry $\ell=1$ (any other $\ell\geqslant 1$ would do the job, yet \emph{not} $\ell=0$) and the fact that the sequence $(r^2f_n^2)_n$ is an approximate Dirac distribution in $L^2(\mathbb{R}^+,\ud r)$ centred at $r_0$:
 \[
  r^2f_n(r)^2\;=\;n\,\mathbf{1}_{[r_0+\frac{1}{n},r_0+\frac{2}{n}]}\;\xrightarrow[\:n\to +\infty\:]{\mathcal{D}'(\mathbb{R})}\;\delta(r-r_0)\,.
 \]

 Let us check that $(\xi_n)_n$ satisfies the properties (a)-(b)-(c) of Lemma \ref{lem:strategy-for-essspec}. Since
 \[
  \begin{split}
  \|\xi_n\|_{H^{-\frac{1}{2}}(\mathbb{R}^3)}^2\;&=\;\big\|(r^2+1)^{-\frac{1}{4}}f_n \big\|^2_{L^2(\mathbb{R}^+,r^2\ud r)} \\
  &=\;\int_0^{+\infty}\frac{r^2 f_n(r)^2}{(r^2+1)^{\frac{1}{2}}}\,\ud r\;\xrightarrow[]{\:n\to +\infty\:}\;(r_0^2+1)^{-\frac{1}{2}}\,,
  \end{split}
 \]
 then property (a) of Lemma \ref{lem:strategy-for-essspec} is satisfied. Moreover, analogously to \eqref{eq:Tform_decomp_1}-\eqref{eq:Tform_decomp_2},
 \[
  \begin{split}
   \langle\xi_n,W_\lambda&\xi_m\rangle_{H^{-\frac{1}{2}},H^{\frac{1}{2}}}\;=\;2\pi^2\int_0^{+\infty}\!\!\ud r\,\frac{r^2 f_n(r) f_m(r)}{\sqrt{\nu r^2+\lambda}} \\
   &-2\pi\int_0^{+\infty}\!\!\ud r\int_0^{+\infty}\!\!\ud r'\,r^2f_n(r)\,r'^2f_m(r')\int_{-1}^1\ud y\,\frac{P_{\ell=1}(y)}{\,r^2+r'^2+\mu r r' y+\lambda\,}\,;
  \end{split}
 \]
 therefore, when $n\neq m$, using the property $f_n(r)f_m(r)=0$, one has
 \[
  \begin{split}
   \sqrt{nm} \,\langle\xi_n,W_\lambda\xi_m\rangle_{H^{-\frac{1}{2}},H^{\frac{1}{2}}}\;&=\;-2\pi n m\int_{r_0+\frac{1}{n}}^{r_0+\frac{2}{n}}\!\!\ud r\int_{r_0+\frac{1}{m}}^{r_0+\frac{2}{m}}\!\!\ud r'\int_{-1}^1\ud y\,\frac{r r' y}{\,r^2+r'^2+\mu r r' y+\lambda\,} \\
   &\;\;\;\;\xrightarrow[(n\neq m)]{\:n,m\to +\infty\:}\;-2\pi\int_{-1}^1\ud y\,\frac{r_0^2y}{\,2r_0^2+\mu r_0^2 y+\lambda\,}\,,
  \end{split}
 \]
 whence 
 \[
  \langle\xi_n,W_\lambda\xi_m\rangle_{H^{-\frac{1}{2}},H^{\frac{1}{2}}}\;\xrightarrow[(n\neq m)]{\:n,m\to +\infty\:}\;0\,.
 \]
 Thus, property (b) of Lemma \ref{lem:strategy-for-essspec} is satisfied.

 It remains to check that $\xi_n\in\mathcal{D}(\mathcal{A}_{\lambda,\alpha})$ and property (c)  of Lemma \ref{lem:strategy-for-essspec}. Owing to \eqref{eq:DA_op} and to the fact that $\xi_n\in H^{s}_{\ell=1}(\mathbb{R}^3)$ for any $s\in\mathbb{R}$, to prove that $\xi_n\in\mathcal{D}(\mathcal{A}_{\lambda,\alpha})$ is equivalent to prove that $(T_\lambda+\alpha\mathbbm{1})\xi_n\in H^{\frac{1}{2}}(\mathbb{R}^3)$, and thus it follows from the bound \eqref{eq:T-ell-3/2-1/2_included} of Proposition \ref{prop:regpropTlambda}.
 
 Again in analogy to \eqref{eq:Tform_decomp_1}-\eqref{eq:Tform_decomp_2},
 \[
  \begin{split}
   \|&(T_\lambda+\alpha\mathbbm{1})\xi_n\|_{H^{\frac{1}{2}}(\mathbb{R}^3)}^2\;=\;\int_0^{+\infty}\!\!\ud r\,r^2(r^2+1)^{\frac{1}{2}} \Big|(2\pi^2\sqrt{\nu r^2+\lambda}-|\alpha|)f_n(r) \:+ \\
   &\qquad\qquad \qquad \qquad \qquad  +2\pi\int_0^{+\infty}\!\!\ud r'\,r'^2f_n(r')\int_{-1}^1\ud y\,\frac{P_{\ell=1}(y)}{\,r^2+r'^2+\mu rr'y+\lambda\,}\Big|^2 \\
   &\leqslant\;2\int_0^{+\infty}\!\!\ud r\,(r^2+1)^{\frac{1}{2}} \big|2\pi^2\sqrt{\nu r^2+\lambda}-|\alpha|\,\big|^2\,r^2f_n(r)^2 \\
    &\qquad +2\int_0^{+\infty}\!\!\ud r\,r^2(r^2+1)^{\frac{1}{2}} \Big|\,2\pi\int_0^{+\infty}\!\!\ud r'\,r'^2f_n(r')\int_{-1}^1\ud y\,\frac{y}{\,r^2+r'^2+\mu rr'y+\lambda\,}\Big|^2 \\
    &=:\;2\,(\mathrm{I}_n)+2\,(\mathrm{II}_n)\,.
  \end{split}
 \]
 Since $r^2f_n(r)^2\approx\delta(r-r_0)$ and $2\pi^2\sqrt{\nu r_0^2+\lambda}=|\alpha|$, then $(\mathrm{I}_n)\to 0$ as $n\to +\infty$. On the other hand,
 \[
  \begin{split}
   n\cdot(\mathrm{II}_n)\;&=\;\int_0^{+\infty}\!\!\ud r\,r^2(r^2+1)^{\frac{1}{2}} \Big|\,2\pi n\int_{r_0+\frac{1}{n}}^{r_0+\frac{2}{n}}\!\!\ud r'\,r'\int_{-1}^1\ud y\,\frac{y}{\,r^2+r'^2+\mu rr'y+\lambda\,}\Big|^2
  \end{split}
 \]
 and 
 \[
  \begin{split}
   h_n(r)\;:=\;2\pi n\int_{r_0+\frac{1}{n}}^{r_0+\frac{2}{n}}\!\!\ud r'\,r'&\int_{-1}^1\ud y\,\frac{y}{\,r^2+r'^2+\mu rr'y+\lambda\,} \\
   \xrightarrow[]{\,n\to +\infty\,}\;h_\infty(r)\;:=&\;2\pi r_0\int_{-1}^1\ud y\,\frac{y}{\,r^2+r_0^2+\mu rr_0y+\lambda\,} \\
   =&\;\frac{2\pi}{\mu}\,\frac{1}{r}\,\bigg(2-{\textstyle\frac{r^2+r_0^2+\lambda}{\mu r_0 r}}\:\ln\frac{\;{\textstyle\frac{r^2+r_0^2+\lambda}{\mu r_0 r}}+1\;}{{\textstyle\frac{r^2+r_0^2+\lambda}{\mu r_0 r}}-1} \bigg)\,.
  \end{split}
 \]
 Applying the expansion
 \[
  2-z\ln\frac{z+1}{z-1}\;=\;-{\textstyle\frac{2}{3}}\,(z-1)^{-2}+o\big((z-1)^{-2}\big)\qquad\textrm{as }\,z\to +\infty
 \]
 to $z=\frac{r^2+r_0^2+\lambda}{\mu r_0 r}$ yields 
 \[
  h_\infty(r)\;\sim\;
  \begin{cases}
    \mathrm{const.}\,\times\, r& \textrm{as } r\downarrow 0 \\
    \mathrm{const.}\,\times\,r^{-3}& \textrm{as } r\to +\infty\,,
  \end{cases}
 \]
 whence, by dominated convergence,
 \[
  n\cdot(\mathrm{II}_n)\;\to\;\frac{2\pi}{\mu}\int_0^{+\infty}\!\!\ud r\,r^2(r^2+1)^{\frac{1}{2}}\,|h_\infty(r)|^2\;<\;+\infty\,,
 \]
  and also $(\mathrm{II}_n)\to 0$ as $n\to +\infty$.
 Thus,  $\|(T_\lambda+\alpha\mathbbm{1})\xi_n\|_{H^{\frac{1}{2}}}\to 0$ as $n\to +\infty$, which completes the proof.
\end{proof}

\section{Finiteness of the discrete spectrum}\label{sec:finite_disc_spec}

In this Section we prove Theorem \ref{thm:disc-spec-finite} and we complete the proof of Theorem \ref{thm:ess-spec}.

The proof of Theorem \ref{thm:disc-spec-finite} is based on the following simple idea. We know that the quadratic form of $H_\alpha$ is the sum of the quadratic form of the free Hamiltonian for the regular part and the quadratic form of the charges, more precisely
\[
 (H_\alpha+\lambda\mathbbm{1})[g]\;=\;(H_{\mathrm{free}}+\lambda\mathbbm{1})[F^{\lambda}]+2\langle \xi,(T_\lambda+\alpha\mathbbm{1})\xi\rangle_{H^{\frac{1}{2}},H^{-\frac{1}{2}}}
\]
for every $g=F^{\lambda}+u_\xi^\lambda\in\mathcal{D}[H_\alpha]$. If one proves that for $\lambda=\lambda_0:=\frac{\alpha^2}{4\pi^4}$ the form of the charges is positive except for a finite-dimensional subspace of the charge space, then given the one-to-one correspondence $\xi\leftrightarrow u_\xi^{\lambda_0}$ and the positivity of $H_{\mathrm{free}}$ one deduces that $H_\alpha$ is bounded from below by $-\lambda_0\mathbbm{1}$ but for a residual, finite-dimensional subspace of its form domain in which the bottom may be even lower than $-\lambda_0$; the residual subspace can then only accommodate a finite number of linearly independent eigenfunctions. When subsequently we shall prove that $\sigma_{\mathrm{ess}}(H_\alpha)=[-\lambda_0,+\infty)$, this would allow us to conclude that $\sigma_{\mathrm{disc}}(H_\alpha)$ is finite.

The problem is then boiled down to the analysis of the form of the charges and hence of $T_\lambda$ as a bounded $H^{\frac{1}{2}}(\mathbb{R}^3)\to H^{-\frac{1}{2}}(\mathbb{R}^3)$ map. In fact, because of its angular symmetry, $T_\lambda$ is also a continuous $H_\ell^{1/2}(\mathbb{R}^3)\to H_\ell^{-1/2}(\mathbb{R}^3)$ map separately for each sector $\ell\in\mathbb{N}_0$ of definite angular momentum (Proposition \ref{prop:regpropTlambda}), and since the only possible eigenfunction of $H_\alpha$ for $\alpha<0$ have charges with odd-$\ell$ symmetry (Lemma \ref{lem:only_odd_symm}) it suffices to consider $T_\lambda$ acting on the Hilbert sub-space $H^s_-(\mathbb{R}^3)=\bigoplus_{\mathrm{odd}\:\ell}H_\ell^{-1/2}(\mathbb{R}^3)$. 

We shall make use of a convenient decomposition of $T_\lambda$, in the same spirit of similar decompositions that for a different purpose (the computation of the deficiency indices of $T_\lambda$ as an operator on $L^2(\mathbb{R}^3)$) were introduced by Minlos and Shermatov \cite{Minlos-Shermatov-1989} and Minlos \cite{Minlos-TS-1994,Minlos-2011-preprint_May_2010}.

The precise decomposition we exploit here was introduced recently by Yoshitomi \cite{Yoshitomi_MathSlov2017}, as a clever modification of Minlos's decomposition \cite[Eq.~(2.18)-(2.20)]{Minlos-2011-preprint_May_2010} by means of a suitable localisation in momentum space.
In \cite{Yoshitomi_MathSlov2017} too the goal is to prove the finiteness of the discrete spectrum, but this is achieved only for sufficiently large $m$: in our approach, instead, we are able to control the whole regime $m>m^*$.

For $\lambda>0$ and $n\in\mathbb{N}$ let
$\mathbf{1}_n$ and $\mathbf{1}_n^{\textsc{c}}$ denote, respectively, the characteristic function of the ball $\{p\in\mathbb{R}^3\,|\,|p|\leqslant n\}$ and of its complement, let
\[
 G(p,q)\;:=\;p^2+q^2+\mu\, p\cdot q\,,
\]
and let us introduce the maps $\xi\mapsto R_\lambda\xi$, $\xi\mapsto L_{\lambda,n}\xi$, and $\xi\mapsto Q_{\lambda,n}\xi$ defined on the $\xi$'s in $L^2(\mathbb{R}^3)$ as follows:
\begin{eqnarray}
 \widehat{(R_\lambda\xi)}(p)\!\!\!&:=&\!\!\!\theta_\lambda(p)\,\widehat{\xi}(p) \label{eq:defRlambda}\\
 \widehat{(L_{\lambda,n}\xi)}(p)\!\!\!&:=&\!\!\!\int_{\mathbb{R}^3}\frac{\mathbf{1}_n^{\textsc{c}}(p)\,\mathbf{1}_n^{\textsc{c}}(q)}{\theta_\lambda(p)\,G(p,q)\,\theta_\lambda(q)}\,\widehat{\xi}(q)\,\ud q \\
 \widehat{(Q_{\lambda,n}\xi)}(p)\!\!\!&:=&\!\!\!\int_{\mathbb{R}^3}\mathcal{Q}_{\lambda,n}(p,q)\,\widehat{\xi}(q)\,\ud q\,,
\end{eqnarray}
where
\begin{equation}
\label{eq:rlambda}
 \theta_\lambda(p)\;:=\;\sqrt{\sqrt{\nu p^2+\lambda}-\sqrt{\lambda}\,}
\end{equation}
and
\begin{equation}\label{Q_decomposition}
 \begin{split}
  &\mathcal{Q}_{\lambda,n}(p,q)\;:=\;\lambda\,\frac{\mathbf{1}_n^{\textsc{c}}(p)\,\mathbf{1}_n^{\textsc{c}}(q)}{\,\theta_\lambda(p)\,G(p,q)(G(p,q)+\lambda)\,\theta_\lambda(q)} -\frac{\mathbf{1}_n^{\textsc{c}}(p)\,\mathbf{1}_n(q)}{\,\theta_\lambda(p)\,(G(p,q)+\lambda)\,\theta_\lambda(q)} \!\!\!\! \\
  &\qquad\qquad\qquad -\frac{\mathbf{1}_n(p)\,\mathbf{1}_n^{\textsc{c}}(q)}{\,\theta_\lambda(p)\,(G(p,q)+\lambda)\,\theta_\lambda(q)}-\frac{\mathbf{1}_n(p)\,\mathbf{1}_n(q)}{\,\theta_\lambda(p)\,(G(p,q)+\lambda)\,\theta_\lambda(q)}\,.
 \end{split}
\end{equation}

It is a straightforward check that in terms of the quantities above one has the decomposition
\begin{equation}\label{Tlambda_decomposition}
 T_\lambda\;=\;2\pi^2\sqrt{\lambda}\,\mathbbm{1}+R_\lambda(2\pi^2\mathbbm{1}+L_{\lambda,n}-Q_{\lambda,n})R_\lambda\,.
\end{equation}

Let us collect relevant properties of the maps $R_\lambda$, $L_{\lambda,n}$, and $Q_{\lambda,n}$.

\begin{lemma}\label{lem:Qlambdan}
 For each $\lambda>0$ and $n\in \mathbb{N}$, $Q_{\lambda,n}$ defines a Hilbert-Schmidt and self-adjoint operator on $L^2(\mathbb{R}^3)$.
\end{lemma}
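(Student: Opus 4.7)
The plan is to verify the two assertions directly from the explicit kernel representation. First I would observe that, since $\theta_\lambda$, $G$, and the cut-offs $\mathbf 1_n, \mathbf 1_n^{\textsc c}$ are real and since the cross-terms (the second and third summands of $\mathcal{Q}_{\lambda,n}$) are exchanged into each other under $p\leftrightarrow q$, one has $\mathcal{Q}_{\lambda,n}(p,q)=\overline{\mathcal{Q}_{\lambda,n}(q,p)}$. Once the Hilbert--Schmidt property is established the operator is automatically bounded, and this pointwise kernel symmetry will then yield self-adjointness by the standard criterion for integral operators on $L^2$.

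The main task is thus to prove that the kernel lies in $L^2(\mathbb R^3\times\mathbb R^3)$, which I would do by estimating each of the four pieces of $\mathcal{Q}_{\lambda,n}$ separately. The two analytic inputs I would use are, first, the coercive bound $G(p,q)\geqslant(1-\frac{\mu}{2})(p^2+q^2)$ (which is positive because $\mu=\frac{2}{m+1}<2$ for $m>0$), and second, the two-scale behaviour of $\theta_\lambda$ deducible from \eqref{eq:rlambda}: namely $\theta_\lambda(p)^2\sim\frac{\nu}{2\sqrt\lambda}\,|p|^2$ as $|p|\downarrow 0$, and $\theta_\lambda(p)^2\sim\sqrt\nu\,|p|$ as $|p|\to\infty$. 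The role of the cut-offs $\mathbf 1_n,\mathbf 1_n^{\textsc c}$ is precisely to decouple the two asymptotic regimes in each of the four summands.

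Concretely: for the first summand (both momenta with $|\cdot|>n$), after bounding $\theta_\lambda^{-2}\lesssim|\cdot|^{-1}$ and $G(p,q)+\lambda\gtrsim p^2+q^2$, the squared kernel is controlled by $\lambda^2|p|^{-1}|q|^{-1}(p^2+q^2)^{-4}$, whose integral reduces in polar coordinates to $\iint_n^\infty uv(u^2+v^2)^{-4}\,du\,dv$, which converges. For the mixed summands (say $|p|>n$, $|q|\leqslant n$) I would bound $(G(p,q)+\lambda)^{-2}\leqslant|p|^{-4}$ (using $|p|>n\geqslant|q|$) and $\theta_\lambda(p)^{-2}\lesssim|p|^{-1}$, obtaining a factor $|p|^{-5}\mathbf 1_{|p|>n}$ integrable in $\mathbb R^3$, while $\theta_\lambda(q)^{-2}\lesssim|q|^{-2}$ near the origin is integrable in $\mathbb R^3$ against the three-dimensional Lebesgue measure on the ball $\{|q|\leqslant n\}$. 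The fourth summand (both variables in the compact ball) is bounded by $\lambda^{-2}\theta_\lambda(p)^{-2}\theta_\lambda(q)^{-2}\mathbf 1_n(p)\mathbf 1_n(q)$, which is again integrable by the same near-origin analysis applied in both variables.

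The only delicate point is the near-origin behaviour $\theta_\lambda(p)^{-1}\asymp|p|^{-1}$, but the three-dimensional volume element $|p|^2d|p|$ tames a $|p|^{-2}$ singularity, so none of the four integrals is borderline divergent. All constants depend on $\lambda$, $\nu$, $\mu$, and $n$, but are uniformly finite; summing the four contributions via the triangle inequality in $L^2(\mathbb R^3\times\mathbb R^3)$ gives $\|\mathcal{Q}_{\lambda,n}\|_{L^2}<+\infty$, whence $Q_{\lambda,n}$ is Hilbert--Schmidt, hence compact and bounded, and symmetric by the observation above, thus self-adjoint on $L^2(\mathbb R^3)$.
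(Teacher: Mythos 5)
Your proposal is correct and follows essentially the same route as the paper: the same four-term splitting of $\mathcal{Q}_{\lambda,n}$, the coercive bound $G(p,q)\geqslant(1-\frac{\mu}{2})(p^2+q^2)$, the two-scale behaviour of $\theta_\lambda$ (which the paper encodes via the exact identity $\theta_\lambda(q)^{-2}=(\sqrt{\nu q^2+\lambda}+\sqrt{\lambda})/(\nu q^2)$), and the reduction to convergent radial integrals, followed by the standard ``real symmetric $L^2$-kernel $\Rightarrow$ bounded self-adjoint'' conclusion. No gaps.
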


\begin{proof}
We start by proving that $Q_{\lambda,n}$ is Hilbert-Schmidt. Re-writing \eqref{Q_decomposition} self-explanatorily as
\[
 \mathcal{Q}_{\lambda,n}(p,q)\;=\;\lambda\mathcal{Q}^{(1)}_{\lambda,n}(p,q)-\mathcal{Q}^{(2)}_{\lambda,n}(p,q)-\mathcal{Q}^{(3)}_{\lambda,n}(p,q)-\mathcal{Q}^{(4)}_{\lambda,n}(p,q)\,,
\]
it suffices to show that $\mathcal{Q}^{(j)}_{\lambda,n}\in L^2(\mathbb R^3\times \mathbb R^3)$ for $j\in\{1,2,3,4\}$. In the estimates that follow we will systematically use the bounds
\begin{equation}\tag{*}\label{*}
{\textstyle(1-\frac{\mu}{2})}(p^2+q^2)\;\leqslant\; G(p,q) \;\leqslant\; {\textstyle(1+\frac{\mu}{2})} (p^2+q^2)
\end{equation} 
and
\begin{equation}\tag{**}\label{**}
\frac{1}{\theta_{\lambda}^2(q)}\;=\;\frac{\sqrt{\nu q^2+\lambda}+\sqrt{\lambda}}{\nu q^2}\,.
\end{equation}

For $\mathcal Q_{\lambda,n}^{(1)}$ one has
\begin{align*}
\|\mathcal Q^{(1)}_{\lambda,n}&\|_{L^2(\mathbb R^3\times \mathbb R^3)}^2\;=\;\iint_{\substack{|p|>n \\ |q|>n}}\:\frac{\ud p\,\ud q}{\,G(p,q)^2(G(p,q)+\lambda)^2r_{\lambda}^2(p)r_{\lambda}^2(q)\,} \\
&\lesssim\;\int_{n}^{+\infty}\!\!\ud r\int_{n}^{+\infty}\!\!\ud r'\:\frac{(\sqrt{\nu r^2+\lambda}+\sqrt{\lambda})(\sqrt{\nu r'^2+\lambda}+\sqrt{\lambda})}{(r^2+r'^2)^4} \\
&\lesssim\;\int_{n}^{+\infty}\!\!\ud r\int_{n}^{+\infty}\!\!\ud r'\:\frac{r\,r'}{\,(r^2+r'^2)^4} \;\lesssim\; \int_{n}^{+\infty}\ud \rho\,\rho^{-5}\;<\;+\infty\,,
 \end{align*}
 where we used \eqref{*} and \eqref{**} and three-dimensional polar coordinates separately in $p, q$ in the second step, the bound $\sqrt{\nu r^2+\lambda}+\sqrt{\lambda}\lesssim r$ (valid for $r>n$ when $n$ is sufficiently large) in the third step, and finally two-dimensional polar coordinates $(r,s)\equiv(\rho,\theta)$, the positivity of the integrand, and the inclusion
 \[
  \{(r,r')\in\mathbb{R}^2\,|\,r>n,r'>n\}\;\subset\;\{(\rho,\theta)\in [0,+\infty)\times[0,{\textstyle\frac{\pi}{2}}]\subset\mathbb{R}^2\,|\,\rho>n\}
 \]
in the last step. This proves that $\mathcal Q_{\lambda,n}^{(1)}$ belongs to $L^2(\mathbb R^3\times \mathbb R^3)$.

For the kernels $\mathcal Q_{\lambda,n}^{(2)}$, $\mathcal Q_{\lambda,n}^{(3)}$, and $\mathcal Q_{\lambda,n}^{(4)}$ it is convenient to observe that the function $\mathcal{K}:\mathbb R^2\rightarrow \mathbb R^+$ defined by 
\begin{align*}
\mathcal{K}(r,r')\;:=\;\frac{(\sqrt{\nu r^2+\lambda}+\sqrt{\lambda})(\sqrt{\nu r'^2+\lambda}+\sqrt{\lambda})}{(r^2+r'^2+\frac{\lambda}{1-\frac{\mu}{2}})^2}
\end{align*}
is smooth for positive $\lambda$, and by means of \eqref{*}, \eqref{**} and polar coordinates in $p$ and $q$ one has 
\begin{align*}
\|\mathcal Q^{(j)}_{\lambda,n}\|_{L^2(\mathbb R^3\times \mathbb R^3)}^2\;\lesssim\;\iint_{\Omega_j}\ud r\,\ud r'\,\mathcal{K}(r,r')\qquad j\in\{2,3,4\}\,,
\end{align*}
where
\[
 \begin{split}
  \Omega_2\;&:=\;\{(r,r')\in\mathbb{R}^2\,|\,r\geqslant n,r'\leqslant n\} \\
  \Omega_3\;&:=\;\{(r,r')\in\mathbb{R}^2\,|\,r\leqslant n,r'\geqslant n\} \\
  \Omega_4\;&:=\;\{(r,r')\in\mathbb{R}^2\,|\,r\leqslant n,r'\leqslant n\}\,.
 \end{split}
\]
Now, 
\[
  \iint_{\Omega_3}\ud r\,\ud r'\,\mathcal{K}(r,r')\;=\;\iint_{\Omega_2}\ud r\,\ud r'\,\mathcal{K}(r,r')\;\lesssim\;\int_n^{+\infty}\!\!\ud r\,r^{-3}\;<\;+\infty\,,
\]
the first step following from the exchange symmetry $r\leftrightarrow r'$, the second step following from the bound $\mathcal{K}(r,r')\lesssim r^{-3}$ on $\Omega_2$, since $r'$ ranges on the compact set $[0,n]$. This proves that $\mathcal Q_{\lambda,n}^{(2)}$ and $\mathcal Q_{\lambda,n}^{(3)}$ belong to $L^2(\mathbb R^3\times \mathbb R^3)$. Obviously the same conclusion holds for  $\mathcal Q_{\lambda,n}^{(4)}$, since $\Omega_4$ is compact.

Thus, $Q_{\lambda,n}$ is a Hilbert-Schmidt operator. In particular, $Q_{\lambda,n}$ is bounded, and it is also symmetric because its kernel is real-valued and symmetric under the exchange $p\leftrightarrow q$. Then $Q_{\lambda,n}$ is self-adjoint.
\end{proof}

\begin{lemma}\label{lem:Llambdan}
 For each $\lambda>0$ and $n\in \mathbb{N}$, $L_{\lambda,n}$ defines a bounded, positive definite, self-adjoint operator on $L^2(\mathbb{R}^3)$. $L$ is reduced with respect to $L^2(\mathbb{R}^3)=L^2_+(\mathbb{R}^3)\oplus L^2_-(\mathbb{R}^3)$ and eventually
 \begin{equation}\label{eq:norm_of_L}
  \|L_{\lambda,n}\|_{L^2_-(\mathbb{R}^3)\to L^2_-(\mathbb{R}^3)}<\;2\pi^2
 \end{equation}
 for $n$ sufficiently large. In particular, for $n$ sufficiently large the operator $2\pi^2\mathbbm{1}+L_{\lambda,n}$ is invertible on $L^2_-(\mathbb{R}^3)$ with continuous inverse for $n$ sufficiently large.
\end{lemma}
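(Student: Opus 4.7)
The plan is to establish the four assertions of the lemma in order, the main work being concentrated on the norm bound on $L^2_-(\mathbb{R}^3)$, from which invertibility of $2\pi^2\mathbbm{1}+L_{\lambda,n}$ on that subspace will follow at once.

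\emph{Boundedness, symmetry, reduction.} The kernel of $L_{\lambda,n}$ is real and symmetric under $p\leftrightarrow q$, so the operator is symmetric on its natural domain and self-adjointness will follow once $L^2$-boundedness is secured. Moreover the kernel depends only on $|p|$, $|q|$, and $p\cdot q$, hence commutes with every $SO(3)$-rotation and with the parity $P\xi(p):=\xi(-p)$; this reduces $L_{\lambda,n}$ with respect to each sector $L^2_\ell(\mathbb{R}^3)$ of the decomposition \eqref{eq:can_decomp}, and a fortiori with respect to $L^2=L^2_+\oplus L^2_-$. On each $\ell$-sector, the quadratic form of $L_{\lambda,n}$ on a radial function $f_{\ell,M}$ equals $\Psi_{0,\ell}[\widetilde f_{\ell,M}]$ with $\widetilde f_{\ell,M}(r):=\mathbf{1}_{\{r>n\}}f_{\ell,M}(r)/\theta_\lambda(r)$, so \eqref{eq:offdiagposeven} gives $L_{\lambda,n}\geqslant 0$ on $L^2_+$ directly. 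Boundedness on each $\ell$-sector follows from the Mellin-type representation \eqref{eq:Psi0repr-1}--\eqref{eq:Psi0repr-3}, which realises $\Psi_{0,\ell}[\widetilde f]$ (for odd $\ell$) as the expectation of the bounded multiplier $-2\pi^2\sigma_\ell^{(m)}(k)$ on the Mellin side, and is handled by an analogous bounded multiplier for the even $\ell$; the uniform control $\sigma_\ell^{(m)}(k)\leqslant \sigma_1^{(m)}(0)$ coming from \eqref{eq:sigmaellk_tower} then ensures that the total operator is bounded on all of $L^2$.

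\emph{Norm bound on $L^2_-$.} On $L^2_-$ one has $\Psi_{0,\ell}[\widetilde f]\leqslant 0$ by \eqref{eq:psi0psi1oldineq}, and the Rayleigh quotient of $-L_{\lambda,n}$ reads
\[
 \frac{|\langle\xi,L_{\lambda,n}\xi\rangle|}{\|\xi\|_{L^2}^2}\;=\;\frac{\displaystyle 2\pi^2\sum_{\ell\,\mathrm{odd},M}\int_\mathbb{R}\sigma_\ell^{(m)}(k)\,|\widetilde f_{\ell,M}^{\,\sharp}(k)|^2\,\ud k}{\displaystyle \sum_{\ell\,\mathrm{odd},M}\int_0^{+\infty}r^2|f_{\ell,M}(r)|^2\,\ud r}\,.
\]
The evenness and monotonicity of $k\mapsto\sigma_\ell^{(m)}(k)$ combined with \eqref{eq:sigmaellk_tower} yield $\sigma_\ell^{(m)}(k)\leqslant\sigma_1^{(m)}(0)$ for every odd $\ell$ and every $k\in\mathbb{R}$, and a short explicit computation (substituting $u:=\arcsin(y/(m+1))$ in \eqref{eq:Psi0repr-3} at $\ell=1$, $k=0$) produces the crucial identity $\sigma_1^{(m)}(0)/\sqrt{\nu}=\Lambda(m)$, which directly links the radial spectrum of $L_{\lambda,n}$ to the Efimov transcendental function. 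For the denominator, the pointwise inequality $\theta_\lambda(r)^2\geqslant(1-\varepsilon)\sqrt{\nu}\,r$ (valid for $r>n_\varepsilon$ with $n_\varepsilon$ sufficiently large) combined with the Mellin--Plancherel identity $\int_0^{+\infty}r^3|\widetilde f|^2\,\ud r=\|\widetilde f^{\,\sharp}\|^2_{L^2(\mathbb{R})}$ gives, for $n\geqslant n_\varepsilon$, the lower bound $\int_0^{+\infty}r^2|f_{\ell,M}|^2\,\ud r\geqslant (1-\varepsilon)\sqrt{\nu}\,\|\widetilde f_{\ell,M}^{\,\sharp}\|^2$. Combining these estimates, the Rayleigh quotient is bounded by $\frac{2\pi^2\Lambda(m)}{1-\varepsilon}$, and since $m>m^*\Leftrightarrow\Lambda(m)<1$ one can pick $\varepsilon$ small enough to make this strictly less than $2\pi^2$. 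Invertibility of $2\pi^2\mathbbm{1}+L_{\lambda,n}$ on $L^2_-$ with continuous inverse is then immediate: by self-adjointness and the norm bound, its spectrum is a closed subset of $(0,+\infty)$ bounded away from zero.

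\emph{Main obstacle.} The most delicate point is the uniform matching of numerator and denominator in the Rayleigh-quotient estimate above: the asymptotic $\theta_\lambda(r)^2\sim\sqrt{\nu}\,r$ is only quantitative for $r\gg\sqrt{\lambda/\nu}$, while the Mellin transform of $\widetilde f$ acquires no Paley--Wiener decay merely from the support restriction $\{r>n\}$. The cleanest way out is a scaling argument: the substitution $r\mapsto nr$ rescales the cutoff to $\{r>1\}$ and replaces $\lambda$ by $\lambda/n^2\to 0$, reducing the task to the scale-invariant $\lambda=0$ analysis of Section \ref{sec:angular_decomp}, where the estimates are uniform in $n$.
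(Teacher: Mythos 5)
Your argument is correct and, at its core, the same as the paper's: both proofs reduce the norm bound \eqref{eq:norm_of_L} to the scale-invariant ($\lambda=0$) operator with kernel $\big(\sqrt{|p|}\,G(p,q)\sqrt{|q|}\big)^{-1}$, whose Mellin diagonalisation gives norm $2\pi^2\sqrt{\nu}\,\Lambda(m)$ on the odd sector (your identity $\sigma_1^{(m)}(0)=\sqrt{\nu}\,\Lambda(m)$ is exactly the paper's $|\lambda_{\ell=1}(0)|=2\pi^2\sqrt{\nu}\,\Lambda(m)$), and both absorb the discrepancy between $\theta_\lambda(r)^2$ and $\sqrt{\nu}\,r$ on $\{r>n\}$ into an $\varepsilon$ for $n$ large. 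The only real difference is that you run the comparison at the level of the Rayleigh quotient, whereas the paper runs it at the operator level, estimating $\|L_{\lambda,n}-L_{0,n}\|\leqslant\widetilde{\varepsilon}$ with $L_{0,n}=\nu^{-1/2}\mathbf{1}_n^{\textsc{c}}L\,\mathbf{1}_n^{\textsc{c}}$; this is cosmetic. Your closing ``main obstacle'' is not an obstacle: for fixed $\lambda$ the pointwise bound $\theta_\lambda(r)^2\geqslant(1-\varepsilon)\sqrt{\nu}\,r$ for $r>n_\varepsilon$ is all that enters, no decay of $\widetilde{f}^{\,\sharp}$ is needed in either the numerator or the denominator, and the scaling detour is superfluous. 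One caveat on the remaining clause of the statement: you only establish $L_{\lambda,n}\geqslant\mathbbm{O}$ on $L^2_+(\mathbb{R}^3)$, and in fact your own form identity $\langle\xi,L_{\lambda,n}\xi\rangle=\sum\Psi_{0,\ell}[\widetilde{f}_{\ell,M}]$ combined with \eqref{eq:psi0psi1oldineq} shows that $L_{\lambda,n}\leqslant\mathbbm{O}$ on $L^2_-(\mathbb{R}^3)$, so the blanket ``positive definite on $L^2(\mathbb{R}^3)$'' cannot be proved as stated (the paper's appeal to the pointwise positivity of the kernel does not imply operator positivity either); what is actually used downstream is only that $2\pi^2\mathbbm{1}+L_{\lambda,n}$ is positive and boundedly invertible on $L^2_-(\mathbb{R}^3)$, and your norm bound delivers exactly that.
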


\begin{proof}
 Let us consider the operator $L_{0,n}$ obtained by setting $\lambda=0$ in the definition of $L_{\lambda,n}$, that is, $L_{0,n}=\nu^{-\frac{1}{2}}\mathbf{1}_n^{\textsc{c}}\,L\,\mathbf{1}_n^{\textsc{c}}$, where
 \[
  \widehat{(L\,\xi)}(p)\;:=\;\int_{\mathbb{R}^3}\frac{\widehat{\xi}(q)}{\sqrt{|p|}\,G(p,q)\sqrt{|q|}\,}\,\ud q\,.
 \]

 The map $\xi\mapsto L\,\xi$ is well studied in the literature (see, e.g., \cite[Sec.~4]{Minlos-2012-preprint_30sett2011} or \cite[Sec.~3]{CDFMT-2012}): with respect to the canonical decomposition \eqref{eq:can_decomp}
the operator $L$ is reduced as
 \[
  L\;=\;\bigoplus_{\ell=0}^\infty\;(\mathcal{L}_\ell\otimes\mathbbm{1})\,,
 \]
 namely in components acting non-trivially only on the radial part of $ L^2_{\ell}(\mathbb{R}^3)$. In turn, by means of the Mellin transform
 \[
  \begin{split}
   & \omega:L^2(\mathbb{R}^+,r^2\,\ud r)\stackrel{\cong}{\longrightarrow} L^2(\mathbb{R},\ud \rho) \\
   & (\omega f)(\rho)\,=\,\frac{1}{\sqrt{2\pi}\,}\!\int_0^{+\infty}r^{-\ii\rho+\frac{1}{2}}f(r)\,\ud r\,,
  \end{split}
 \]
 each $\mathcal{L}_\ell$ is unitarily equivalent to $\omega\mathcal{L}_\ell\omega^{-1}$, which is the multiplication operator by the function
 \begin{equation}
 \label{eq:lambda}
  \lambda_\ell(\rho)\;:=\;\begin{cases}
   \displaystyle\;\;\;2\pi^2\!\!\int_0^1 P_\ell(x)\,\frac{\cosh\big(\rho\arcsin(\frac{x}{m+1})\big)}{\,\cosh(\frac{\pi\rho}{2})\cos\big(\arcsin(\frac{x}{m+1})\big)\,}\,\ud x &\quad \textrm{for even $\ell$} \\
   \displaystyle\,-2\pi^2\!\!\int_0^1 P_\ell(x)\,\frac{\sinh\big(\rho\arcsin(\frac{x}{m+1})\big)}{\,\sinh(\frac{\pi\rho}{2})\cos\big(\arcsin(\frac{x}{m+1})\big)\,}\,\ud x & \quad \textrm{for odd $\ell$}\,.
  \end{cases}
 \end{equation}

 Now, $\rho\mapsto\lambda_\ell(\rho)$ is a smooth, even function vanishing at infinity, and moreover when $\ell$ is even (resp., when $\ell$ is odd), it is monotone decreasing (resp., monotone increasing) for $\rho>0$. Thus, $\|\mathcal{L}_\ell\|=|\lambda_\ell(0)|$. A further simple analysis, exploiting the definition of $P_\ell(x)$ and the integration by parts in $x$ (see, e.g., \cite[Lemma 3.5]{CDFMT-2012}), shows that $0\leqslant\lambda_\ell(\rho)\leqslant\lambda_{\ell=0}(\rho)$ for even $\ell$ and 
 $\lambda_{\ell=1}(\rho)\leqslant\lambda_\ell(\rho)\leqslant 0$ for odd $\ell$, and an explicit computation shows that $-\lambda_{\ell=1}(0)\leqslant\lambda_{\ell=0}(0)$. Therefore,
 \[
   \|L\|_{L^2(\mathbb{R}^3)\to L^2(\mathbb{R}^3)}\;=\;\|\mathcal{L}_{\ell=0}\|\;=\;\lambda_{\ell=0}(0)\;=\;2\pi^2(m+1)\arcsin({\textstyle\frac{1}{m+1}})\,.
 \]
 This proves the boundedness of $L$ on $L^2(\mathbb{R}^3)$, as well as its reduction with respect to the sectors of even or of odd angular momenta.
 From the above-mentioned properties of $\lambda_\ell(\rho)$ one also deduces
 \[
 \begin{split}
   \|L\|_{L^2_-(\mathbb{R}^3)\to L^2_-(\mathbb{R}^3)}\;&=\;\|\mathcal{L}_{\ell=1}\|\;=\;|\lambda_{\ell=1}(0)| \\
   &=\;4\pi(m+1)\big(1-\sqrt{m(m+2)}\arcsin({\textstyle\frac{1}{m+1}}) \big) \\
   &=\;2\pi^2\sqrt{\nu}\,\Lambda(m)\,.
 \end{split}
  \]

Next, let us compare the above norm with that of $L_{0,n}$. Since
 \[
  \lim_{|p|\to+\infty}|p|^{-\frac{1}{2}}\theta_\lambda(p)\;=\;\nu^{\frac{1}{4}}\,,
 \]
 then for arbitrary $\varepsilon,\lambda>0$ it is possible to choose $n$ large enough so as
 \[
  \Big\| \mathbf{1}^c_n(p)\mathbf{1}^c_n(q)\Big(\frac{\sqrt{p}}{\theta_\lambda(p)}\,\frac{\sqrt{q}}{\theta_\lambda(q)}-\frac{1}{\sqrt{\nu}}\Big)\Big\|_{L^\infty(\mathbb{R}^3\times\mathbb{R}^3,\ud p\,\ud q)}\leqslant\;\varepsilon\,.
 \]
 Therefore, eventually in $n$,
 \[
  \begin{split}
   \|(L_{\lambda,n}&-L_{0,n})\xi\|_{L^2(\mathbb{R}^3,\ud p)}\;= \\
   &=\;\Big\| \int_{\mathbb{R}^3}\!\ud q\,\frac{\widehat{\xi}(q)}{\sqrt{|p|}\,G(p,q)\sqrt{|q|}\,}\mathbf{1}^c_n(p)\mathbf{1}^c_n(q)\Big(\frac{\sqrt{p}}{\theta_\lambda(p)}\,\frac{\sqrt{q}}{\theta_\lambda(q)}-\frac{1}{\sqrt{\nu}}\Big)\Big\|_{L^2(\mathbb{R}^3,\ud p)} \\
   &\leqslant\;\varepsilon\,\|L\|_{L^2(\mathbb{R}^3)\to L^2(\mathbb{R}^3)}\,\|\xi\|_{L^2(\mathbb{R}^3)}\;\equiv\;\widetilde{\varepsilon}\:\|\xi\|_{L^2(\mathbb{R}^3)}
  \end{split}
 \]
 for arbitrary $\widetilde{\varepsilon}>0$.

 For $\xi\in L^2_-(\mathbb{R}^3)$ we can then conclude
 \[
 \begin{split}
  \|L_{\lambda,n}\xi\|_{L^2_-(\mathbb{R}^3)}\;&\leqslant\; \|(L_{\lambda,n}-L_{0,n})\xi\|_{L^2(\mathbb{R}^3)}+ \|L_{0,n}\xi\|_{L^2_-(\mathbb{R}^3)} \\
  &\leqslant\;\widetilde{\varepsilon}\:\|\xi\|_{L^2_-(\mathbb{R}^3)}+\frac{1}{\sqrt{\nu\,}}\|L\,\xi\|_{L^2_-(\mathbb{R}^3)} \\
  &\leqslant\;(\widetilde{\varepsilon}+2\pi^2\Lambda(m))\|\xi\|_{L^2_-(\mathbb{R}^3)}
 \end{split}
 \]
 eventually for $n$ large: from the arbitrariness of $\widetilde{\varepsilon}$ and the property that $\Lambda(m)<1$ for $m>m^*$, the bound \eqref{eq:norm_of_L} follows.

 Without symmetry restriction on $\xi$, obviously the last estimate above is still true for every $n$, now with a constant that depends on $n$ and is not smaller than $2\pi^2$. This proves the boundedness of $L_{\lambda,n}$ on $L^2(\mathbb{R}^3)$. Its self-adjointness and positive-definiteness is then obvious from the fact that $L_{\lambda,n}$ has a real, positive, symmetric kernel.
 \end{proof}

 Next, let us qualify a convenient Hilbert space in terms of the maps $R_\lambda$ and $L_{\lambda,n}$. To this aim, we observe that the map $(\xi,\eta)\mapsto\langle\xi,\eta\rangle_{H^{1/2}_{R_\lambda}}$, where
   \begin{equation}\label{eq:newscalarproduct}
    \langle\xi,\eta\rangle_{H^{1/2}_{R_\lambda}}\;:=\;\langle R_\lambda\xi,R_\lambda\eta\rangle_{L^2(\mathbb{R}^3)}\,,
   \end{equation}
  defines a positive definite inner product in $H^{\frac{1}{2}}(\mathbb{R}^3)$: indeed,
  \[
   \langle\xi,\eta\rangle_{H^{1/2}_{R_\lambda}}\;=\;\int_{\mathbb{R}^3}\overline{\widehat{\xi}(p)}\,\widehat{\eta}(p)\,\theta_\lambda(p)^2\ud p
  \]
 and $\theta_\lambda(p)>0$ for all $p\neq 0$ and $\theta_\lambda(p)\sim\langle p\rangle^{\frac{1}{4}}$ for large $|p|$'s.

 As a consequence, the space $(H^{1/2}_{R_\lambda}\!(\mathbb{R}^3),\langle\cdot,\cdot\rangle_{H^{1/2}_{R_\lambda}}\!)$ defined as the \emph{completion} of $H^{\frac{1}{2}}(\mathbb{R}^3)$ with respect to \eqref{eq:newscalarproduct} is a Hilbert space, and the Bounded Linear Transformation theorem ensures that the $H^{1/2}(\mathbb{R}^3)\to L^2(\mathbb{R}^3)$ isometry $\xi\mapsto R_\lambda\xi$ lifts to an isomorphism $H^{1/2}_{R_\lambda}\!(\mathbb{R}^3)\to L^2(\mathbb{R}^3)$ between Hilbert spaces, which we shall keep denoting by $R_\lambda$. Moreover, $R_\lambda$ acting on $H^{\frac{1}{2}}(\mathbb{R}^3)$ as the Fourier multiplier by a radial function, each component $H^{1/2}_{R_\lambda,\ell}(\mathbb{R}^3)$ of definite angular symmetry is naturally defined, and with obvious meaning of symbols one has
   \begin{equation}\label{eq:weighted_ang_decomp}
     H^{1/2}_{R_\lambda}\!(\mathbb{R}^3)\;\cong\;\bigoplus_{\ell=0}^\infty H^{1/2}_{R_\lambda,\ell}(\mathbb{R}^3)\;\cong\; H^{1/2}_{R_\lambda,+}(\mathbb{R}^3)\oplus  H^{1/2}_{R_\lambda,-}(\mathbb{R}^3)\,.
   \end{equation}

 We have the following.
 
 \begin{lemma}\label{lem:Hlambdan}
  Let $\lambda>0$.
  \begin{itemize}
   \item[(i)] For sufficiently large $n\in\mathbb{N}$, 
    \begin{equation}\label{eq:ln_scalar_product}
  \langle \xi,\eta\rangle_{\cH_{\lambda,n}}\;:=\;\langle R_\lambda \xi,(2\pi^2\mathbbm{1}+L_{\lambda,n})R_\lambda \eta \rangle_{L^2(\mathbb{R}^3)}
 \end{equation}
 defines an equivalent scalar product in $H^{1/2}_{R_\lambda,-}(\mathbb{R}^3)$.
 \item[(ii)] For sufficiently large $n\in\mathbb{N}$, let $(\cH_{\lambda,n},\langle\cdot,\cdot\rangle_{\cH_{\lambda,n}})$ be the Hilbert space coinciding with $H^{1/2}_{R_\lambda,-}(\mathbb{R}^3)$ as a vector space and equipped with the scalar product \eqref{eq:ln_scalar_product}. Then, the map
 \begin{equation}\label{eq:sp-isom}
  \begin{split}
   \cH_{\lambda,n}\;&\stackrel{\cong}{\longrightarrow}\;H^{1/2}_-(\mathbb{R}^3) \\
   \xi\;&\longmapsto\; (2\pi^2+L_{\lambda,n})^{\frac{1}{2}}R_\lambda\xi
  \end{split}
 \end{equation}
 is a Hilbert space isomorphism.
  \end{itemize}
 \end{lemma}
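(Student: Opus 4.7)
\emph{Plan.} The idea is to derive both statements from the spectral bounds on $L_{\lambda,n}$ proved in Lemma \ref{lem:Llambdan}, combined with the functional calculus for the bounded self-adjoint operator $A_n:=2\pi^2\mathbbm{1}+L_{\lambda,n}$ on $L^2_-(\mathbb{R}^3)$ and the fact that $R_\lambda$ realises, by construction of the completion in \eqref{eq:newscalarproduct}, an isometric isomorphism from $H^{1/2}_{R_\lambda,-}(\mathbb{R}^3)$ to $L^2_-(\mathbb{R}^3)$.

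For part (i) I invoke Lemma \ref{lem:Llambdan}: for sufficiently large $n$, $L_{\lambda,n}$ is bounded, positive, self-adjoint on $L^2_-(\mathbb{R}^3)$ with norm $<2\pi^2$. Hence $A_n$ is bounded, self-adjoint, and boundedly invertible on $L^2_-(\mathbb{R}^3)$, with the spectral-theorem bounds
\[
 2\pi^2\mathbbm{1}\;\leqslant\;A_n\;\leqslant\;\bigl(2\pi^2+\|L_{\lambda,n}\|\bigr)\mathbbm{1}\quad\textrm{on}\;L^2_-(\mathbb{R}^3)\,.
\]
Testing these operator inequalities against $R_\lambda\xi\in L^2_-(\mathbb{R}^3)$ and using $\|R_\lambda\xi\|_{L^2}^2=\|\xi\|_{H^{1/2}_{R_\lambda,-}}^2$ yields
\[
 2\pi^2\|\xi\|^2_{H^{1/2}_{R_\lambda,-}}\;\leqslant\;\|\xi\|^2_{\cH_{\lambda,n}}\;\leqslant\;\bigl(2\pi^2+\|L_{\lambda,n}\|\bigr)\|\xi\|^2_{H^{1/2}_{R_\lambda,-}}\,,
\]
which is the required equivalence of scalar products in the vector space $H^{1/2}_{R_\lambda,-}(\mathbb{R}^3)$.

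For part (ii) the functional calculus for $A_n$ on $L^2_-(\mathbb{R}^3)$ supplies a bounded, positive, self-adjoint, and boundedly invertible square root $B:=A_n^{1/2}$. I then consider the composition $\Phi:=B\circ R_\lambda$, which from part (i) is an isometry with respect to the two scalar products,
\[
 \|\Phi\xi\|_{L^2}^{2}\;=\;\langle R_\lambda\xi,A_n R_\lambda\xi\rangle_{L^2}\;=\;\|\xi\|^2_{\cH_{\lambda,n}}\,,
\]
and is a bijection onto its image because both $R_\lambda:\cH_{\lambda,n}\to L^2_-(\mathbb{R}^3)$ (a Hilbert-space isomorphism by the very construction of the completion) and $B:L^2_-(\mathbb{R}^3)\to L^2_-(\mathbb{R}^3)$ (a homeomorphism by the functional calculus) are bijective.

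The main obstacle lies in the identification of the target space. The construction above naturally exhibits $\Phi$ as a Hilbert-space isomorphism $\cH_{\lambda,n}\to L^2_-(\mathbb{R}^3)$, whereas the lemma asserts the target to be $H^{1/2}_-(\mathbb{R}^3)$. Closing this gap requires upgrading the output of $\Phi$ from $L^2_-(\mathbb{R}^3)$ to $H^{1/2}_-(\mathbb{R}^3)$, which is to be obtained from the smoothing contained in the factors $\theta_\lambda(q)^{-1}\sim|q|^{-1/2}$ at infinity in the kernel of $L_{\lambda,n}$, combined with the factor $\theta_\lambda(p)\sim\nu^{1/4}|p|^{1/2}$ built into $R_\lambda$ at infinity. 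This weighting/smoothing bookkeeping, in the same spirit as the continuity estimates for $T_\lambda$ in Proposition \ref{prop:regpropTlambda}, is the step demanding the most care; once it is checked that the image of $\Phi$ lies in $H^{1/2}_-(\mathbb{R}^3)$, the Hilbert-space isomorphism claim follows.
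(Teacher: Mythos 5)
Your argument is correct and follows essentially the paper's own (very terse) proof: part (i) is established there by exactly the same two-sided norm estimate derived from Lemma \ref{lem:Llambdan}, and part (ii) is dismissed there as an ``obvious consequence of (i)'', which your construction of $\Phi=(2\pi^2\mathbbm{1}+L_{\lambda,n})^{1/2}R_\lambda$ simply makes explicit. The one point where you see an obstacle is not a gap in your reasoning but a misprint in the statement: the codomain in \eqref{eq:sp-isom} should read $L^2_-(\mathbb{R}^3)$, not $H^{1/2}_-(\mathbb{R}^3)$. Indeed $R_\lambda$ maps $\cH_{\lambda,n}=H^{1/2}_{R_\lambda,-}(\mathbb{R}^3)$ \emph{onto} $L^2_-(\mathbb{R}^3)$ by the very completion construction, and $(2\pi^2\mathbbm{1}+L_{\lambda,n})^{1/2}$ is a bijection of $L^2_-(\mathbb{R}^3)$ for $n$ large, so the range of $\Phi$ is all of $L^2_-(\mathbb{R}^3)$ and cannot be contained in $H^{1/2}_-(\mathbb{R}^3)$; the smoothing upgrade you propose is therefore bound to fail (the identity summand $2\pi^2\mathbbm{1}$ inside the square root provides no gain of regularity, and already $R_\lambda\xi$, whose Fourier transform is $\theta_\lambda\widehat{\xi}$ with $\theta_\lambda(p)\sim|p|^{1/2}$ at infinity, is generically only in $L^2$). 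That $L^2_-(\mathbb{R}^3)$ is the intended target is confirmed by the only use of \eqref{eq:sp-isom} in the paper, in the proof of Theorem \ref{thm:disc-spec-finite}, where the isomorphism serves precisely to transport the compactness of $(2\pi^2+L_{\lambda,n})^{-\frac{1}{2}}Q_{\lambda,n}(2\pi^2+L_{\lambda,n})^{-\frac{1}{2}}$, an operator on $L^2_-(\mathbb{R}^3)$, to the operator $Z_{\lambda,n}$ on $\cH_{\lambda,n}$. With the codomain so corrected, your proof is complete as it stands.
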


\begin{proof}
As argued already, $R_\lambda:H^{1/2}_{R_\lambda}\!(\mathbb{R}^3)\to L^2(\mathbb{R}^3)$ is an isomorphism between Hilbert spaces.  Let $n$ be large enough so as to make the positive definite operator $2\pi^2\mathbbm{1}+L_{\lambda,n}$ invertible on $L^2_-(\mathbb{R}^3)$ with bounded inverse (Lemma \ref{lem:Llambdan}). In this case \eqref{eq:ln_scalar_product} defines a positive definite inner product in $H^{1/2}_{R_\lambda}\!(\mathbb{R}^3)$. Moreover,
 \[
  \begin{split}
   \|\xi\|_{\cH_{\lambda,n}}^2\;&=\;\langle\xi,\xi\rangle_{\cH_{\lambda,n}}\;=\;\|(2\pi^2+L_{\lambda,n})^{\frac{1}{2}}R_\lambda\xi\|_{L^2}^2 \\
   &\approx\;\|R_\lambda\xi\|_{L^2}^2 \;=\;\|\xi\|^2_{H^{1/2}_{R_\lambda}}\,,
  \end{split}
 \]
which shows that the scalar products $\langle\cdot,\cdot\rangle_{\cH_{\lambda,n}}$ and $\langle\cdot,\cdot\rangle_{H^{1/2}_{R_\lambda}}$ are equivalent. This proves part (i). Part (ii) is an obvious consequence of (i) and of formula \eqref{eq:ln_scalar_product} in particular.
\end{proof}

We can now prove Theorem \ref{thm:disc-spec-finite}.

\begin{proof}[Proof of Theorem \ref{thm:disc-spec-finite}]
We know from Lemmas \ref{lem:reduction_lemma} and \ref{lem:only_odd_symm} that if $H_\alpha g=-Eg$ for some non-zero $g\in\mathcal{D}(H_\alpha)$ and some $-E<-\frac{\alpha^2}{4\pi^4}$, then necessarily $g=u_\xi^E$ for a suitable charge $\xi\in H^{1/2}_-(\mathbb{R}^3)$, the sector with odd angular momenta. For generic $\lambda>0$ one can decompose $g=u_\xi^E=F^\lambda+u_\xi^\lambda$, the charge $\xi$ being the same, with the same angular symmetry. This will justify why it will suffice to consider, in due time, only the elements $F^\lambda+u_\xi^\lambda\in\mathcal{D}[H_\alpha]$ with $\xi \in H^{1/2}_-(\mathbb{R}^3)$. Let us denote such subspace with $\mathcal{D}_\mathrm{odd}[H_\alpha]$.

Let now $n\in\mathbb{N}$ be large enough as prescribed by Lemmas \ref{lem:Llambdan} and \ref{lem:Hlambdan}.

 $Q_{\lambda,n}$ is a compact operator on $L^2_-(\mathbb{R}^3)$ (Lemma \ref{lem:Qlambdan}), and so is
 \[
  (2\pi^2+L_{\lambda,n})^{-\frac{1}{2}}Q_{\lambda,n}(2\pi^2+L_{\lambda,n})^{-\frac{1}{2}}
 \]
because $(2\pi^2+L_{\lambda,n})^{-\frac{1}{2}}$ is $L^2_-(\mathbb{R}^3)$-bounded (Lemma \ref{lem:Llambdan}). The Hilbert space isomorphism \eqref{eq:sp-isom} then preserves the compactness property and the operator
 \[
  Z_{\lambda,n}\;:=\;R_\lambda^{-1}(2\pi^2+L_{\lambda,n})^{-1}Q_{\lambda,n}\,R_{\lambda}
 \]
 is compact on $\cH_{\lambda,n}$. $Z_{\lambda,n}$ is also self-adjoint, because
 \[
  \langle \xi,Z_{\lambda,n} \eta\rangle_{\cH_{\lambda,n}}\;=\;\langle R_\lambda \xi,Q_{\lambda,n}R_\lambda \eta\rangle_{L^2}
 \]
 and  $Q_{\lambda,n}$ is self-adjoint on $L^2_-(\mathbb{R}^3)$.

 From the latter identity one also deduces
 \[
   \langle \xi,(\mathbbm{1}-Z_{\lambda,n})\, \eta\rangle_{\cH_{\lambda,n}}\;=\;\langle R_\lambda \xi,(2\pi^2\mathbbm{1}+L_{\lambda,n}-Q_{\lambda,n})R_\lambda \eta\rangle_{L^2}\,.
 \]
 Specialising the identity above for $\lambda=\frac{\alpha^2}{4\pi^4}$, which will be assumed for the rest of this proof, and combining it with the decomposition \eqref{Tlambda_decomposition}, yields
 \[
  \langle \xi,(T_\lambda+\alpha\mathbbm{1})\xi\rangle_{H^{\frac{1}{2}},H^{-\frac{1}{2}}}\;=\; \langle \xi,(\mathbbm{1}-Z_{\lambda,n})\, \xi\rangle_{\cH_{\lambda,n}}
 \]
 (recall that $\alpha<0$, thus $2\pi^2\sqrt{\lambda}=-\alpha$).

 Now, since $Z_{\lambda,n}$ is compact and self-adjoint, the space
 \[
  X_n\;:=\;\big\{\xi\in \cH_{\lambda,n}\,|\,\langle \xi,(\mathbbm{1}-Z_{\lambda,n})\, \xi\rangle_{\cH_{\lambda,n}}\leqslant 0  \big\}
 \]
 has finite dimension. It is then convenient to decompose
 \[
  \cH_{\lambda,n}\;=\;X_n^\perp\; \oplus_{\cH_{\lambda,n}}\: X_n
 \]
 and, correspondingly, from \eqref{eq:DHab_form},
 \[
  \begin{split}
   \mathcal{D}_\mathrm{odd}[H_\alpha]\;=&\;\mathcal{D}_n^+\dotplus\mathcal{D}_n^- \\
   \mathcal{D}_n^+\;:=&\;H^1_\mathrm{f}(\mathbb{R}^3) \dotplus \{ u_\xi^\lambda\,|\,\xi\in X_n^\perp\cap H^{\frac{1}{2}}(\mathbb{R}^3)\} \\
   \mathcal{D}_n^-\;:=&\;\{ u_\xi^\lambda\,|\,\xi\in X_n\cap H^{\frac{1}{2}}(\mathbb{R}^3)\}\,.
  \end{split}
 \]
 By construction, for every $F^\lambda+u_\xi^\lambda\in\mathcal{D}_n^+$ one has
 \[
  \begin{split}
    (H_\alpha+\lambda\mathbbm{1})[F^\lambda+u_\xi^\lambda]\;&=\;(H_{\mathrm{free}}+\lambda\mathbbm{1})[F^\lambda]+\langle \xi,(T_\lambda+\alpha\mathbbm{1})\xi\rangle_{H^{\frac{1}{2}},H^{-\frac{1}{2}}} \\
    &=\;(H_{\mathrm{free}}+\lambda\mathbbm{1})[F^\lambda]+\langle \xi,(\mathbbm{1}-Z_{\lambda,n})\, \xi\rangle_{\cH_{\lambda,n}}\;\geqslant\;0
  \end{split}
 \]
and $\dim \mathcal{D}_n^-<+\infty$. Thus, $H_\alpha$ is bounded from below by $-\frac{\alpha^2}{4\pi^4}$ (i.e., $-\lambda$) on $\mathcal{D}^+_n$, whereas if $\mathcal{D}_n^-\neq\{0\}$, then on such \emph{finite-dimensional} space the form of $H_\alpha$ attains values that are necessarily below the threshold $-\frac{\alpha^2}{4\pi^4}$.

This implies that $E_{H_\alpha}((-\infty,-\frac{\alpha^2}{4\pi^4}))=E_{H_\alpha}([-\frac{\alpha^2}{4\pi^4(1-\Lambda(m))^2},-\frac{\alpha^2}{4\pi^4}))$ is a \emph{finite rank} orthogonal projection, where $\ud  E_{H_\alpha}$ is the spectral measure associated with $H_\alpha$, and then  also $\sigma(H_\alpha)\cap [-\frac{\alpha^2}{4\pi^4(1-\Lambda(m))^2},-\frac{\alpha^2}{4\pi^4})\subset \sigma_{\mathrm{disc}}(H_\alpha)$.

In order to complete the proof, we need to use the latter information so as to complete the identification of $\sigma_{\mathrm{ess}}(H_\alpha)$, which we are going to do now.
\end{proof}


\begin{proof}[Proof of Theorems \ref{thm:ess-spec} and \ref{thm:disc-spec-finite} -- conclusion]
 When $\alpha\geqslant 0$, Theorem \ref{thm:ess-spec} is entirely proved by Proposition \ref{prop:ess-spec-alphapositive}. When $\alpha<0$, we know that 
 \begin{itemize}
  \item  $\sigma(H_\alpha)\subset[-\frac{\alpha^2}{4\pi^4(1-\Lambda(m))^2},+\infty)$ (from Theorem \ref{thm:properties_of_Halpha}),
 \item $\sigma_{\mathrm{ess}}(H_\alpha)\supset[-\frac{\alpha^2}{4\pi^4},+\infty)$ (from Proposition \ref{prop:ess-spec-alphanegative_1}),
 \item $\sigma_{\mathrm{ess}}(H_\alpha)\cap [-\frac{\alpha^2}{4\pi^4(1-\Lambda(m))^2},-\frac{\alpha^2}{4\pi^4})=\emptyset$ (from the first part of the proof of Theorem \ref{thm:disc-spec-finite}).
 \end{itemize}
 Then necessarily
 \[
  \begin{split}
   \sigma_{\mathrm{ess}}(H_\alpha)\;&=\;[{\textstyle-\frac{\alpha^2}{4\pi^4}},+\infty)\\
   \sigma_{\mathrm{disc}}(H_\alpha)\;&\subset\; [{\textstyle-\frac{\alpha^2}{4\pi^4(1-\Lambda(m))^2}},{\textstyle-\frac{\alpha^2}{4\pi^4}}) \,,
  \end{split}
 \]
 which proves Theorem \ref{thm:ess-spec} also for $\alpha<0$. From the first part of the proof of Theorem \ref{thm:disc-spec-finite} we then conclude that the whole $\sigma_{\mathrm{disc}}(H_\alpha)$ is finite, and this completes the proof of  Theorem \ref{thm:disc-spec-finite}.
\end{proof}

\section{Regimes of absence of bound states}\label{sec:noEV}

In this Section and in the following one we develop the proof of Theorem \ref{thm:EVexist}. 
In this Section, in particular, we discuss the regimes of \emph{absence} of bound states in the discrete spectrum and the two main results are going to be the following.

\begin{proposition}\label{prop:noEV}
Let $\alpha<0$. There exists a mass $M_\star>m^*$, with $M_\star\leqslant(2.617)^{-1}$, such that when $m>M_\star$ there are no bound states of $H_\alpha$ below $\inf\sigma_{\mathrm{ess}}(H_\alpha)$ whose eigenfunctions have the charge $\xi$ with angular symmetry $\ell=1$.
\end{proposition}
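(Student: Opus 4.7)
The plan is to reduce the question to a coercivity estimate for the charge operator $T_1$ on the $\ell=1$ sector of $H^{1/2}(\mathbb{R}^3)$, and then to identify an explicit threshold mass through a one-dimensional operator inequality obtained by a Mellin-type substitution.

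First, by Lemma~\ref{lem:reduction_lemma} combined with the rescaling of Lemma~\ref{lem:scaling}, an $\ell=1$ eigenfunction $u_\xi^\lambda$ of $H_\alpha$ with eigenvalue $-\lambda$ in the admissible window~\eqref{eq:amiss_window} corresponds bijectively to an eigenfunction $\widetilde{\xi}\in H^{1/2}_{\ell=1}(\mathbb{R}^3)$ of $T_1$ with eigenvalue $\varepsilon=|\alpha|/\sqrt{\lambda}$ lying in $[2\pi^2\sqrt{1-\Lambda(m)^2},2\pi^2)$. By the min--max principle, the absence of any such eigenvalue is ensured by the operator bound
\[
\langle\xi,T_1\xi\rangle_{H^{\frac{1}{2}},H^{-\frac{1}{2}}}\;\geqslant\;2\pi^2\|\xi\|_{L^2(\mathbb{R}^3)}^2,\qquad\xi\in H^{1/2}_{\ell=1}(\mathbb{R}^3)\,.
\]
Thanks to the strict monotonicity in $m$ of the left-hand side (Lemma~\ref{lem:T_monotonicity}, whose hypothesis $\xi\in H^{1/2}_-$ includes the $\ell=1$ sub-sector), it suffices to exhibit a single value $M_\star$ at which this bound holds; the proposition will then follow for every $m>M_\star$.

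Decomposing $\xi=Y_{1,M}(\Omega_p)\,f(|p|)$ and invoking the partial-wave expansion~\eqref{eq:Tform_decomp_1}--\eqref{eq:Tform_decomp_2}, together with the bound $\Psi_{1,1}[f]\geqslant\Psi_{0,1}[f]$ from~\eqref{eq:psi0psi1oldineq} and the representation~\eqref{eq:Psi0repr-1}--\eqref{eq:Psi0repr-3} of $\Psi_{0,1}$, the task is reduced to the radial inequality
\[
\int_0^{+\infty}\!\!r^2\bigl(\sqrt{\nu r^2+1}-1\bigr)|f(r)|^2\,\ud r\;\geqslant\;\int_{\mathbb{R}}\sigma_1^{(m)}(k)\,|f^\sharp(k)|^2\,\ud k\,.
\]
Under the Mellin-type substitution $\psi(x):=e^{2x}f(e^x)$, which makes $f^\sharp=\widehat{\psi}$ the ordinary Fourier transform of $\psi\in L^2(\mathbb{R})$, this is equivalent to the pseudo-differential inequality
\[
V_m(x)\;\geqslant\;\sigma_1^{(m)}(-\ii\partial_x)\qquad\text{on }L^2(\mathbb{R}),\qquad V_m(x)\;:=\;\frac{\nu\,e^{x}}{\sqrt{\nu\,e^{2x}+1}+1}\,.
\]

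The main obstacle is verifying this last operator inequality quantitatively. The difficulty is that $V_m(x)\to 0$ as $x\to-\infty$, so no naive pointwise comparison $V_m\geqslant\sup_k\sigma_1^{(m)}(k)=\sqrt{\nu}\,\Lambda(m)$ can succeed, and the full non-locality of the Fourier multiplier $\sigma_1^{(m)}(-\ii\partial_x)$ must be exploited. The plan is to leverage joint monotonicity in $m$: as $m$ grows, $V_m$ increases pointwise towards its limit $\sqrt{\nu}\to 1$ on $(0,+\infty)$, while $\sigma_1^{(m)}(\cdot)$ decreases uniformly to zero (in particular $\sigma_1^{(m)}(0)=\sqrt{\nu}\,\Lambda(m)\to 0$). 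Consequently the form $\psi\mapsto\langle\psi,(V_m-\sigma_1^{(m)}(-\ii\partial_x))\psi\rangle$ is monotone increasing in $m$, so identifying $M_\star$ amounts to showing that the bottom of the spectrum of $\sigma_1^{(m)}(-\ii\partial_x)-V_m(x)$ becomes non-negative at some concretely computable $m$. The explicit bound $M_\star\leqslant(2.617)^{-1}$ is then obtained by an uncertainty-type trade-off argument: quantifying, via a scaling family of one-dimensional trial profiles concentrated near $-\infty$ in $x$-space, that the Fourier mass of $\widehat\psi$ pushed to large $|k|$ (where $\sigma_1^{(m)}$ is small) compensates the smallness of $V_m$ at $-\infty$, uniformly for $m$ above the stated threshold.
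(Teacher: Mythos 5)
Your opening reduction (via Lemmas \ref{lem:reduction_lemma} and \ref{lem:scaling}, plus the monotonicity Lemma \ref{lem:T_monotonicity}) to the single coercivity bound $T_1\geqslant 2\pi^2\mathbbm{1}$ on the $\ell=1$ sector is exactly the paper's starting point. The gap is in the step where you replace $\Psi_{1,1}[f]$ by $\Psi_{0,1}[f]$ using \eqref{eq:psi0psi1oldineq} and pass to the pseudo-differential inequality $V_m(x)\geqslant\sigma_1^{(m)}(-\ii\partial_x)$ on $L^2(\mathbb{R})$: that inequality is \emph{false for every} $m$, so your sufficient condition can never be verified. Indeed, translating a fixed profile, $\psi_n(x):=\psi_0(x+n)$, leaves $|\widehat{\psi_n}|=|\widehat{\psi_0}|$ unchanged, hence $\int_{\mathbb{R}}\sigma_1^{(m)}(k)\,|\widehat{\psi_n}(k)|^2\,\ud k$ is a fixed strictly positive constant (recall $\sigma_1^{(m)}>0$ everywhere), while $\int_{\mathbb{R}}V_m(x)\,|\psi_n(x)|^2\,\ud x\to 0$ as $n\to+\infty$ because $V_m$ is bounded and vanishes at $-\infty$. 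There is no ``uncertainty-type trade-off'' available: in the $f$-variable the translation of $\psi$ is a dilation of $f$ toward $r=0$, and it pushes no Fourier mass to large $|k|$. The structural reason is that $\Psi_{0,1}$ is the scale-invariant ($\lambda=0$) form; discarding the $+\lambda$ in the denominator of the off-diagonal kernel loses exactly the length scale that controls the small-$r$ region, where the diagonal gain $\Phi_1[f]-2\pi^2\|f\|^2\sim\tfrac{\nu}{2}\int r^4|f|^2\,\ud r$ is too weak to dominate a scale-invariant negative term.

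The paper's proof keeps $\lambda=1$ throughout: it factorises $T_1-2\pi^2\mathbbm{1}=\mathcal{F}R_1(2\pi^2\mathbbm{1}+S_1)R_1$ with $R_1$ the multiplier by $\theta_1(p)=\sqrt{\sqrt{\nu p^2+1}-1}$ (so that $\theta_1(r)\sim r$ near $0$ absorbs the degeneracy there), reduces the claim to $\|\mathcal{K}_1^{(m)}\|<1$ for the explicit radial kernel \eqref{eq:kernel}, and establishes this by a Schur test with weight $\theta_1(r)/r^2$ together with the pointwise bound \eqref{eq:boundonphiell-corrected} on $\phi_1$; the resulting constant $C(m)$ is strictly decreasing in $m$ and crosses $1$ at $m\approx(2.617)^{-1}$. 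To repair your argument you would have to work with $\Psi_{1,1}$ itself (or another $\lambda>0$ surrogate) rather than with $\Psi_{0,1}$; as written, the central operator inequality you aim to prove does not hold for any mass.
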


\begin{proposition}\label{prop:noEVodd3}
Let $\alpha<0$. The Hamiltonian $H_\alpha$ has \emph{no} bound states below $\inf\sigma_{\mathrm{ess}}(H_\alpha)$ whose eigenfunctions have the charge $\xi$ with angular symmetry $\ell=3,5,7,\dots$
\end{proposition}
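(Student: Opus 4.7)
\smallskip

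By Lemmas \ref{lem:reduction_lemma}, \ref{lem:scaling}, and \ref{lem:only_odd_symm}, the task reduces to establishing the quadratic form bound $T_1\geq 2\pi^2\mathbbm{1}$ on the sector $H^{1/2}_\ell(\mathbb R^3)$ for every odd $\ell\geq 3$: any eigenvalue of $H_\alpha$ below $\inf\sigma_\mathrm{ess}(H_\alpha)$ corresponds, via the scaling, to an eigenvalue $\varepsilon\in[2\pi^2\sqrt{1-\Lambda(m)^2},2\pi^2)$ of $T_1$ in the sector of the charge's angular symmetry, so that proving the form bound excludes such eigenvalues for odd $\ell\geq 3$. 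Through the angular decomposition \eqref{eq:Tform_decomp_1}--\eqref{eq:Tform_decomp_2} this is the same as showing $\Phi_1[f]+\Psi_{1,\ell}[f]\geq 2\pi^2\|f\|_{L^2(\mathbb R^+,r^2\ud r)}^2$ for every radial profile $f$ of an element of the $\ell$-component.

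The monotonicity bound \eqref{eq:psi0psi1oldineq} gives $\Psi_{1,\ell}[f]\geq\Psi_{0,\ell}[f]$, and the Mellin-diagonal representation \eqref{eq:Psi0repr-1}--\eqref{eq:Psi0repr-3} together with the strictly decreasing chain \eqref{eq:sigmaellk_tower} yields the uniform estimate $\Psi_{0,\ell}[f]\geq-2\pi^2\int_\mathbb R\sigma_3^{(m)}(k)\,|f^\sharp(k)|^2\,\ud k$ valid for every odd $\ell\geq 3$. The quantitative input is the strict separation $\sigma_3^{(m)}(0)<\sigma_1^{(m)}(0)=\sqrt{\nu}\,\Lambda(m)<\sqrt{\nu}$, available for all $m>m^*$ because $\Lambda(m)<1$ above the stability threshold.

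Passing to the Mellin variable $g(x):=e^{2x}f(e^x)$, for which $f^\sharp=\widehat g$, a direct computation transforms the relevant functionals as
\[
\Phi_1[f]=2\pi^2\!\int_{\mathbb R}\!\sqrt{\nu+e^{-2x}}\,|g(x)|^2\,\ud x,\qquad \|f\|_{L^2(r^2\ud r)}^2=\!\int_\mathbb R e^{-x}|g(x)|^2\,\ud x,
\]
and the desired inequality turns into the operator comparison
\[
\int_\mathbb R\!\big(\sqrt{\nu+e^{-2x}}-e^{-x}\big)\,|g(x)|^2\,\ud x\;\geqslant\;\int_\mathbb R\sigma_3^{(m)}(k)\,|\widehat g(k)|^2\,\ud k,
\]
between the multiplication operator $M_w$ with symbol $w(x):=\sqrt{\nu+e^{-2x}}-e^{-x}$ (smooth, strictly positive for $x\in\mathbb R$, monotonically increasing from $0$ at $x=-\infty$ to $\sqrt{\nu}$ at $x=+\infty$) and the Fourier multiplier $F_{\sigma_3^{(m)}}$ with symbol $\sigma_3^{(m)}(k)$ (smooth, even, bounded above by $\sigma_3^{(m)}(0)<\sqrt{\nu}$).

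The genuinely difficult step is to promote the pointwise gap $\|\sigma_3^{(m)}\|_\infty<\|w\|_\infty=\sqrt{\nu}$ to the required operator inequality: indeed $w$ vanishes as $x\to-\infty$, so the comparison cannot be pointwise, and a would-be violating $g$ would need to concentrate both near $x=-\infty$ (where $M_w$ is small) and near $k=0$ (where $F_{\sigma_3^{(m)}}$ approaches its supremum), which is forbidden by an uncertainty-principle-type obstruction. I would close the argument via a Birman--Schwinger reduction for the compact operator $M_w^{-1/2}F_{\sigma_3^{(m)}}M_w^{-1/2}$, localised on a neighbourhood of $+\infty$ and complemented by a localisation-error estimate on the complement, exploiting the explicit asymptotic rate $w(x)\sim \tfrac{1}{2}\nu\,e^{x}$ as $x\to-\infty$ together with the smooth decay of $\sigma_3^{(m)}(k)$ at $|k|\to\infty$ to exclude the simultaneous concentration; the resulting strict operator inequality $M_w\geq F_{\sigma_3^{(m)}}$ then yields the form bound, and hence the absence of eigenvalues in every odd angular sector $\ell\geq 3$.
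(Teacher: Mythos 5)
Your reduction to the form bound $T_1\geqslant 2\pi^2\mathbbm{1}$ on the odd sectors $\ell\geqslant 3$, and the Mellin-variable reformulation, are both correct, but the argument then breaks down at a step that cannot be repaired: the target inequality $M_w\geqslant F_{\sigma_3^{(m)}}$ you arrive at after invoking $\Psi_{1,\ell}[f]\geqslant\Psi_{0,\ell}[f]$ is \emph{false}. Take any fixed nonzero profile $g_0$ and set $g_N(x):=g_0(x+N)$. Then $\widehat{g_N}(k)=e^{\ii kN}\widehat{g_0}(k)$, so
\[
\int_{\mathbb R}\sigma_3^{(m)}(k)\,|\widehat{g_N}(k)|^2\,\ud k\;=\;\int_{\mathbb R}\sigma_3^{(m)}(k)\,|\widehat{g_0}(k)|^2\,\ud k\;>\;0
\]
is independent of $N$ (the symbol $\sigma_3^{(m)}$ is strictly positive), whereas
\[
\int_{\mathbb R}w(x)\,|g_N(x)|^2\,\ud x\;=\;\int_{\mathbb R}w(y-N)\,|g_0(y)|^2\,\ud y\;\xrightarrow{\;N\to+\infty\;}\;0
\]
by dominated convergence, since $w$ is bounded and vanishes at $-\infty$. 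There is no uncertainty-principle obstruction here: translating $g$ towards $x=-\infty$ does not spread $\widehat{g}$ at all, it only changes its phase, so a violating sequence exists and no Birman--Schwinger or localisation device can restore the operator inequality.

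The loss occurs exactly at $\Psi_{1,\ell}\geqslant\Psi_{0,\ell}$: passing to $\lambda=0$ discards the infrared regularisation $+1$ in the denominator of \eqref{eq:Tform_decomp_2}, and it is precisely this $+1$ that makes the off-diagonal term harmless at small momenta. Indeed $\int_{-1}^1 P_\ell(y)\,(r^2+r'^2+\mu rr'y+1)^{-1}\ud y=\frac{1}{\mu rr'}\phi_\ell\big(\frac{r^2+r'^2+1}{\mu rr'}\big)$ vanishes like $(rr')^{\ell}$ as $r,r'\to 0$ (by \eqref{eq:phiellexpansion}, ultimately because $\int_{-1}^1P_\ell=0$), exactly where $\Phi_1[f]-2\pi^2\|f\|^2$ also degenerates; the scale-invariant functional $\Psi_{0,\ell}$ does not see this cancellation. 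The paper therefore keeps $\lambda=1$ throughout: it writes $T_1-2\pi^2\mathbbm{1}=\mathcal F R_1(2\pi^2\mathbbm{1}+S_1)R_1$, reduces \eqref{eq:S1normbelow1} to the radial kernels $\mathcal K_\ell^{(m)}$ of \eqref{eq:kernel}, bounds $\|\mathcal K_3^{(m)}\|$ by a Schur test with weight $\theta_1(r')/r'$ using the pointwise bound \eqref{eq:boundonphiell-corrected} on $\phi_3$, obtains $\|\mathcal K_3^{(m)}\|\leqslant C(m)\leqslant C(m^*)\approx 0.87<1$, and then covers $\ell=5,7,\dots$ via the monotonicity $\|\mathcal K_3^{(m)}\|>\|\mathcal K_5^{(m)}\|>\cdots$ of Corollary \ref{cor:Kells}. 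Any salvage of your Mellin route would have to retain the $\lambda$-dependence, which destroys the exact diagonalisation on which it relies.
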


Let us present the strategy for proving Propositions \ref{prop:noEV} and \ref{prop:noEVodd3}.

 Owing to the angular decompositions of Section \ref{sec:angular_decomp} and the reduction Lemmas \ref{lem:reduction_lemma} and \ref{lem:scaling}, in order to prove that for chosen $\alpha<0$, $m>m^*$, and $\ell\in\mathbb{N}_0$ the operator $H_\alpha$ does not admit eigenvalues below $\inf\sigma_{\mathrm{ess}}(H_\alpha)$ whose eigenfunctions have the charge $\xi$ with angular symmetry $\ell$, it suffices to show that  $T_1\geqslant 2\pi^2\mathbbm{1}$ as an inequality between symmetric operators on the Hilbert subspace $L^2_\ell(\mathbb{R}^3)$.

 In a sense, it is precisely when investigating the positivity of $T_1-2\pi^2\mathbbm{1}$ that the Minlos-Yoshitomi decomposition \eqref{Tlambda_decomposition} emerges most naturally, since
  \[
  \begin{split}
   \widehat{({T}_1\xi)}(p)-2\pi^2\widehat{\xi}(p)\;&=\;2\pi^2(\sqrt{\nu p^2+1}-1)\widehat{\xi}(p)+\int_{\mathbb{R}^3}\frac{\widehat{\xi}(p)}{p^2+q^2+\mu p\cdot q+1}\,\ud q \\
   &=\;\mathcal{F}R_1(2\pi^2\mathbbm{1}+S_1)R_1\xi\,,
  \end{split}
 \]
 where $\widehat{(R_1\xi)}(p)=\theta_1(|p|)\widehat{\xi}(p)$ and $\theta_1(r)=\sqrt{\sqrt{\nu r^2+1}-1}$ as in \eqref{eq:defRlambda}/\eqref{eq:rlambda}, and 
 \begin{equation*}
  \widehat{(S_1\xi)}(p)\;:=\;\int_{\mathbb{R}^3}\frac{\widehat{\xi}(q)}{\,\theta_1(|p|)(p^2+q^2+\mu p\cdot q+1)\theta_1(|q|)\,}\,\ud q\,,
 \end{equation*}
 therefore $T_1-2\pi^2\mathbbm{1}\geqslant\mathbbm{O}$ on $L^2_\ell(\mathbb{R}^3)$ if
 \begin{equation*}
  \|S_1\|_{L^2_\ell(\mathbb{R}^3)\to L^2_\ell(\mathbb{R}^3)}\;<\;2\pi^2\,.
 \end{equation*}

 As follows from the angular decomposition \eqref{eq:xi-decomp}/\eqref{eq:addition_formula}, the latter condition is equivalent to
 \begin{equation}\label{eq:S1normbelow1}
  \|\mathcal{K}_\ell^{(m)}\|_{L^2(\mathbb{R}^+,r^2\ud r)\to L^2(\mathbb{R}^+,r^2\ud r)}\;<\;1\,,
 \end{equation}
 where $\mathcal{K}_\ell^{(m)}$ is the integral operator with kernel
 \begin{equation}
  \begin{split}
  \label{eq:kernel}
   \mathcal{K}_\ell^{(m)}(r,r')\;:=&\;\frac{1}{\,\pi\,\theta_1(r) \theta_1(r')}\int_{-1}^1\!\ud y\,\frac{P_\ell(y)}{\,r^2+r'^2+\mu r r' y+1\,} \\
   =&\;\frac{1}{\,\pi\mu\,r r' \theta_1(r) \theta_1(r')}\,\phi_\ell\big({\textstyle\frac{r^2+r'^2+1}{\mu r r'}}\big)\,,
  \end{split}
 \end{equation}
where we set for convenience
 \begin{equation}
  \phi_\ell(z)\,:=\;\int_{-1}^1\!\ud y\,\frac{P_\ell(y)}{y+z}\,,\qquad z>1\,.
 \end{equation}

  Since $\frac{r^2+r'^2+1}{\mu r r'}\geqslant\frac{2}{\mu}>m^*+1$, $\phi_\ell(z)$ is indeed well defined, and moreover it satisfies the following properties.
 
 \bigskip
 
 \begin{lemma}\label{lem:phiell}~
  \begin{itemize}
    \item[(i)] For every $z>1$,
  \begin{equation}\label{eq:phiellIneq}
   \phi_1(z)\;<\;\phi_3(z)\;<\;\phi_5(z)\;\cdots\;<\;0\,.
  \end{equation}
   \item[(ii)] For odd $\ell$'s the function $(1,+\infty)\ni z\mapsto\phi_\ell(z)$ is smooth and strictly monotone increasing, with 
   \begin{equation}\label{eq:phiellexpansion}
    \phi_\ell(z)\;=\;-\frac{\,2^{-\ell}\int_{-1}^1\ud y\,(1-y^2)^\ell}{(z-1)^{\ell+1}}\,(1+o(1))\qquad\textrm{as }z\to +\infty\,.
   \end{equation}
   \item[(iii)] For odd $\ell$'s one has the bounds
   \begin{eqnarray}
    |\phi_\ell(z)|\!\!&\leqslant&\!\!\frac{\,2^{-\ell}\int_{-1}^1\ud y\,(1-y^2)^\ell}{(z-1)^{\ell+1}}\qquad\quad\; z>1 \label{eq:boundonphiell} \\
    |\phi_\ell(z)|\!\!&\leqslant&\!\!\frac{\,C_\ell\,2^{-\ell}\int_{-1}^1\ud y\,(1-y^2)^\ell}{z^{\ell+1}}\qquad z>1+m^*\,, \label{eq:boundonphiell-corrected}
   \end{eqnarray}
   where 
   \begin{equation}\label{eq:correcting_factor}
    C_\ell\;:=\;\frac{\,2^\ell\,(1+m^*)^{\ell+1}\int_{-1}^1\ud y\,\frac{-P_\ell(y)}{\,y+1+m^*\,}}{\int_{-1}^1\ud y\,(1-y^2)^\ell}\,.
   \end{equation}
   In particular,
   \[
    C_1\;\approx\;2.74\,,\qquad C_3\;\approx\;6.07\,,\qquad\textrm{etc.}
   \]
\end{itemize}
 \end{lemma}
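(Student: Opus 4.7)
The cornerstone of the argument is a single reformulation of $\phi_\ell$ via the Rodrigues formula \eqref{eq:LegendrePoly} and integration by parts $\ell$ times: since $(y^2-1)^\ell$ and its derivatives of order $0,1,\dots,\ell-1$ vanish at $y=\pm 1$, and $\frac{\ud^\ell}{\ud y^\ell}(y+z)^{-1}=(-1)^\ell\ell!\,(y+z)^{-(\ell+1)}$, every boundary term drops and one obtains
\[
 \phi_\ell(z)\;=\;\frac{(-1)^\ell}{2^\ell}\int_{-1}^1\frac{(1-y^2)^\ell}{(y+z)^{\ell+1}}\,\ud y\,,\qquad z>1\,.
\]
For odd $\ell$ the right-hand side is strictly negative for every $z>1$, and all three items of the lemma will be read off this representation.

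\textbf{Items (i) and (ii).} For (i) I would rewrite the integrand as $(y+z)^{-1}\big(\frac{1-y^2}{2(y+z)}\big)^\ell$ and observe by elementary calculus that $\sup_{y\in[-1,1]}\frac{1-y^2}{2(y+z)}=z-\sqrt{z^2-1}<1$ for every $z>1$; hence the parenthesised factor is pointwise strictly less than $1$ on the open interior, the integrand is strictly decreasing in $\ell$ on $(-1,1)$, and \eqref{eq:phiellIneq} follows once the outer sign is restored. For (ii), smoothness follows from differentiation under the integral sign in the original definition since $y+z\geq z-1$ is bounded away from zero; strict monotonicity is immediate from the integrated-by-parts representation, as for odd $\ell$ one has $\phi_\ell'(z)=\frac{\ell+1}{2^\ell}\int_{-1}^1(1-y^2)^\ell(y+z)^{-(\ell+2)}\,\ud y>0$; and the asymptotic \eqref{eq:phiellexpansion} follows by replacing $(y+z)^{-(\ell+1)}$ with $z^{-(\ell+1)}(1+o(1))$ uniformly in $y\in[-1,1]$ and noting that $z^{\ell+1}=(z-1)^{\ell+1}(1+o(1))$ as $z\to+\infty$.

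\textbf{Item (iii) and the main obstacle.} The first bound \eqref{eq:boundonphiell} is immediate from $(y+z)\geq z-1$ in the representation. The actual work lies in the sharper bound \eqref{eq:boundonphiell-corrected}, which replaces $(z-1)^{\ell+1}$ by the smaller $z^{\ell+1}$ at the price of the constant $C_\ell$. Setting
\[
 F_\ell(z)\;:=\;\frac{z^{\ell+1}|\phi_\ell(z)|}{2^{-\ell}\int_{-1}^1(1-y^2)^\ell\,\ud y}\;=\;\frac{\int_{-1}^1(1-y^2)^\ell(1+y/z)^{-(\ell+1)}\,\ud y}{\int_{-1}^1(1-y^2)^\ell\,\ud y}\,,
\]
I would show that $F_\ell$ is strictly decreasing on $(1,+\infty)$ and then evaluate it at $z=1+m^*$. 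Differentiating under the integral sign, the sign of $F_\ell'(z)$ is that of $\int_{-1}^1(1-y^2)^\ell\,y\,(y+z)^{-(\ell+2)}\,\ud y$, which I would prove negative by pairing the contributions at $y$ and $-y$ for $y\in(0,1)$ and using $(y+z)^{-(\ell+2)}<(-y+z)^{-(\ell+2)}$ (valid because $y+z>-y+z>0$ for $z>1$). This monotonicity --- that the naive $z^{\ell+1}$-scaling of $|\phi_\ell|$ is actually an envelope up to a finite multiplicative factor --- is the main obstacle. Once it is established, $F_\ell(z)\leq F_\ell(1+m^*)$ for $z\geq 1+m^*$, and direct comparison with \eqref{eq:correcting_factor} recognises $F_\ell(1+m^*)=C_\ell$, which yields \eqref{eq:boundonphiell-corrected}.
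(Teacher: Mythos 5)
Your proposal is correct and follows essentially the same route as the paper: the Rodrigues formula plus $\ell$-fold integration by parts gives the representation $\phi_\ell(z)=-2^{-\ell}\int_{-1}^1(1-y^2)^\ell(y+z)^{-(\ell+1)}\,\ud y$ for odd $\ell$, from which all three items are read off. Your explicit proof that $z\mapsto z^{\ell+1}|\phi_\ell(z)|$ is strictly decreasing (via the $y\leftrightarrow -y$ pairing) is a welcome addition, since it rigorously justifies the step that the paper dispatches with ``obviously the optimal $C_\ell$ is determined by the condition at $z=1+m^*$.''
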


 \begin{proof}
  Let $\ell\in2\mathbb{N}_0+1$ and $z>1$. Then, following from \eqref{eq:LegendrePoly} and integration by parts,
  \[
    \phi_\ell(z)\;=\;\int_{-1}^1\!\ud y\,\frac{1}{y+z}\Big(\frac{1}{2^\ell \ell!}\,\frac{\ud^\ell}{\ud y^\ell}(y^2-1)^\ell\Big)\;=\;\frac{1}{2^\ell}\int_{-1}^{1}\!\ud y\,\frac{\,(y^2-1)^\ell}{\,(y+z)^{\ell+1}}\;<\;0\,,
  \]
 which proves that all quantities in \eqref{eq:phiellIneq} are strictly negative. Moreover,
 \[
  \frac{\,(y^2-1)^{\ell+2}}{\,2^{\ell+2}(y+z)^{\ell+3}}\;>\;\frac{\,(y^2-1)^\ell}{\,2^\ell(y+z)^{\ell+1}}\,,
 \]
because this is equivalent to $(y^2-1)^2<4(y+z)^2$ and hence also to $(y+1)^2+2(z-1)>0$: this shows that $\phi_{\ell+2}(z)>\phi_\ell(z)$ and completes the proof of \eqref{eq:phiellIneq}. 
From the representation 
\[
  \phi_\ell(z)\;=\;-\frac{1}{2^\ell}\int_{-1}^{1}\!\ud y\,\frac{\,(1-y^2)^\ell}{\,(y+z)^{\ell+1}}
\]
the smoothness and the strictly increasing monotonicity of $z\mapsto\phi_\ell(z)$ are obvious, and one also deduces
\[
 \begin{split}
  \phi_\ell(z)\;&=\;-\frac{1}{\,2^\ell(z-1)^{\ell+1}}\int_{-1}^1\!\ud y\,\frac{(1-y^2)^{\ell}}{(1+\frac{y+1}{z-1})^{\ell+1}}\;\geqslant\;-\frac{1}{\,2^\ell(z-1)^{\ell+1}}\int_{-1}^1\!\ud y\,(1-y^2)^{\ell}
 \end{split}
\]
whence \eqref{eq:phiellexpansion} and \eqref{eq:boundonphiell}. Last, we look for a multiple $C_\ell$ of the function
\[
 (1,+\infty)\,\ni\,z\;\mapsto\;\frac{\,2^{-\ell}\int_{-1}^1\ud y\,(1-y^2)^\ell}{z^{\ell+1}}\;=:\;g_\ell(z)
\]
such that $C_\ell \,g_\ell(z)\geqslant|\phi_\ell(z)|$ in the \emph{restricted regime} $z\in[1+m^*,+\infty)$. 
Obviously the optimal $C_\ell$ is determined by the condition $C_\ell \,g_\ell(1+m^*)=|\phi_\ell(1+m^*)|$, whence \eqref{eq:boundonphiell-corrected} and \eqref{eq:correcting_factor}.
 \end{proof}
 

 \begin{corollary}\label{cor:Kells}
  One has
  \begin{equation}
   \|\mathcal{K}_1^{(m)}\|\;>\;\|\mathcal{K}_3^{(m)}\|\;>\;\|\mathcal{K}_5^{(m)}\|\;>\;\cdots
  \end{equation}
 \end{corollary}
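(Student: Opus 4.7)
\medskip\noindent\emph{Proof plan for Corollary \ref{cor:Kells}.} The strategy is to lift the pointwise strict ordering of Lemma \ref{lem:phiell}(i) to a strict ordering of the operator norms $\|\mathcal{K}_\ell^{(m)}\|$ on $L^2(\mathbb{R}^+, r^2\,\ud r)$, via a Jentzsch (Perron--Frobenius for integral operators) argument. Since $z:=(r^2+r'^2+1)/(\mu r r') > 1$ on $(0,\infty)^2$, Lemma \ref{lem:phiell}(i) and the kernel formula \eqref{eq:kernel} jointly give
\begin{equation*}
0 \;<\; -\mathcal{K}_{\ell+2}^{(m)}(r,r') \;<\; -\mathcal{K}_\ell^{(m)}(r,r') \qquad\text{on }\,(0,\infty)^2,
\end{equation*}
so each $-\mathcal{K}_\ell^{(m)}$ is a self-adjoint integral operator on $L^2(\mathbb{R}^+, r^2\,\ud r)$ with strictly positive kernel. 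The variational identity $\|\mathcal{K}_\ell^{(m)}\|=\|{-\mathcal{K}_\ell^{(m)}}\|=\sup_{f\geq 0,\,\|f\|=1}\langle f,-\mathcal{K}_\ell^{(m)} f\rangle$ already delivers the non-strict chain $\|\mathcal{K}_\ell^{(m)}\|\geq\|\mathcal{K}_{\ell+2}^{(m)}\|$.

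For strictness, I would invoke Jentzsch's theorem: a compact self-adjoint integral operator on $L^2$ of a $\sigma$-finite measure space whose kernel is strictly positive a.e.\ has its operator norm as a simple largest eigenvalue, realized by a strictly positive normalized eigenfunction. Applied to $-\mathcal{K}_{\ell+2}^{(m)}$ this produces $f_{\ell+2} > 0$ a.e., $\|f_{\ell+2}\|=1$, with $-\mathcal{K}_{\ell+2}^{(m)} f_{\ell+2}=\|\mathcal{K}_{\ell+2}^{(m)}\|\,f_{\ell+2}$. Using $f_{\ell+2}$ as a trial state for $-\mathcal{K}_\ell^{(m)}$,
\begin{equation*}
\|\mathcal{K}_\ell^{(m)}\| \;\geq\; \bigl\langle f_{\ell+2},\,-\mathcal{K}_\ell^{(m)} f_{\ell+2}\bigr\rangle \;=\; \|\mathcal{K}_{\ell+2}^{(m)}\| \;+\; \bigl\langle f_{\ell+2},\,(-\mathcal{K}_\ell^{(m)}+\mathcal{K}_{\ell+2}^{(m)})\,f_{\ell+2}\bigr\rangle,
\end{equation*}
and the correction term is strictly positive because $-\mathcal{K}_\ell^{(m)}+\mathcal{K}_{\ell+2}^{(m)}$ has strictly positive kernel (by the pointwise ordering recalled above) and $f_{\ell+2}>0$ a.e. Hence $\|\mathcal{K}_\ell^{(m)}\| > \|\mathcal{K}_{\ell+2}^{(m)}\|$ for every odd $\ell$, which is the claimed tower.

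The main technical obstacle is the compactness (or at least the attainment of the norm as an eigenvalue with a strictly positive eigenvector) of $\mathcal{K}_\ell^{(m)}$ on $L^2(\mathbb{R}^+, r^2\,\ud r)$. A direct Hilbert--Schmidt estimate succeeds near $r, r'\to 0$ and in the off-diagonal regime at infinity thanks to the sharp decay $|\phi_\ell(z)|\lesssim(z-1)^{-\ell-1}$ from \eqref{eq:boundonphiell}, but it diverges on the diagonal at infinity ($r\asymp r'\to\infty$), where $z$ stays bounded and $\theta_1(r)^{-2}\theta_1(r')^{-2}\sim(\nu r r')^{-1}$ is only logarithmically integrable. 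I would circumvent this by splitting $\mathcal{K}_\ell^{(m)}=\mathcal{K}_\ell^{(m),0}+W_\ell$, with $\mathcal{K}_\ell^{(m),0}$ the dilation-invariant ($\lambda=0$) piece (unitarily equivalent, via the Mellin representation \eqref{eq:Psi0repr-1}--\eqref{eq:Psi0repr-3}, to multiplication by $\sigma_\ell^{(m)}(k)$, whose suprema are already strictly ordered by \eqref{eq:sigmaellk_tower}) and $W_\ell$ a Hilbert--Schmidt remainder: the Perron-eigenvector argument then handles the discrete part of the spectrum above the common essential level, while \eqref{eq:sigmaellk_tower} directly controls the strict ordering of the essential norms.
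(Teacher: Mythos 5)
The paper gives no proof of this corollary: it is stated as an immediate consequence of Lemma \ref{lem:phiell}(i) together with the kernel formula \eqref{eq:kernel}, and the only place it is used (Proposition \ref{prop:noEVodd3}) needs merely the non-strict chain $\|\mathcal{K}_\ell^{(m)}\|\leqslant\|\mathcal{K}_3^{(m)}\|$. Your first paragraph is exactly the rigorous content of that implicit argument -- pointwise strict ordering of the (negative) kernels, plus the variational characterisation of the norm of a self-adjoint operator with sign-definite kernel over the cone of nonnegative functions -- and it is correct and complete for the non-strict inequalities. So the core of your proposal matches the paper's (unstated) reasoning, and already covers everything the paper actually relies on.

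Your worry about strictness is legitimate, but the Jentzsch superstructure you erect to address it has a real gap and is also unnecessary. As you yourself note, $\mathcal{K}_\ell^{(m)}$ is not Hilbert--Schmidt (the diagonal $r\asymp r'\to\infty$, where $z\to 2/\mu$ and $\theta_1(r)^2\sim\sqrt{\nu}\,r$, produces a logarithmic divergence); in fact it is not compact at all, being a compact perturbation of a Mellin multiplication operator, so its essential spectrum is a nondegenerate interval, Jentzsch's theorem does not apply, and the norm need not be attained by any eigenfunction. Your fallback (splitting off the $\lambda=0$ multiplier, invoking \eqref{eq:sigmaellk_tower} for the essential part and a Perron eigenvector for a possible discrete part) is plausible in outline but leaves two things unproved: the compactness of the remainder $W_\ell$, and the positivity of the ground state for a non-compact positivity-improving operator whose norm happens to be a discrete eigenvalue. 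None of this machinery is needed. From the representation $|\phi_\ell(z)|=2^{-\ell}\int_{-1}^1(1-y^2)^\ell(y+z)^{-\ell-1}\,\ud y$ one reads off that $|\phi_{\ell+2}(z)|/|\phi_\ell(z)|$ is a weighted average of $\frac{(1-y^2)^2}{4(y+z)^2}$, whence
\[
\frac{|\phi_{\ell+2}(z)|}{|\phi_\ell(z)|}\;\leqslant\;\sup_{y\in[-1,1]}\Big(\frac{1-y^2}{2(y+z)}\Big)^{\!2}\;=\;\big(z-\sqrt{z^2-1}\,\big)^2\;\leqslant\;\big(z_0-\sqrt{z_0^2-1}\,\big)^2\;=:\;c\;<\;1
\]
uniformly for $z\geqslant z_0:=1+m^*$ and all odd $\ell$; since in \eqref{eq:kernel} one has $z=\frac{r^2+r'^2+1}{\mu r r'}>m+1>z_0$, this gives the pointwise bound $|\mathcal{K}_{\ell+2}^{(m)}(r,r')|\leqslant c\,|\mathcal{K}_\ell^{(m)}(r,r')|$ and hence $\|\mathcal{K}_{\ell+2}^{(m)}\|\leqslant c\,\|\mathcal{K}_\ell^{(m)}\|<\|\mathcal{K}_\ell^{(m)}\|$, with no compactness or spectral input whatsoever. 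I recommend replacing the Jentzsch argument by this one line (or simply noting that the non-strict chain suffices for the application).
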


 We shall study the occurrence of condition \eqref{eq:S1normbelow1} by means of Schur's test.

%
%

 
 \begin{proof}[Proof of Proposition \ref{prop:noEV}]
First we show that there is a $m$-dependent constant $C(m)$ such that 
  \[\tag{*}\label{eq:theestimate1}
    \int_0^{+\infty} \!\!\ud r'\,r'^2\,|\mathcal{K}_1^{(m)}(r,r')|\,\frac{\theta_1(r')}{r'^2}\;\leqslant\;C(m)\,\frac{\theta_1(r)}{r^2}\,,
  \]
 which implies, by Schur's test and owing to the symmetry of the kernel $\mathcal{K}_1^{(m)}(r,r')$, that $\|\mathcal{K}_1^{(m)}\|<C(m)$. Since in the present context $z=\frac{r^2+r'^2+1}{\mu r r'}>1+m^*$, we can apply formulas \eqref{eq:boundonphiell-corrected}-\eqref{eq:correcting_factor} that yield, for $\ell=1$,
 \[
  |\phi_1(z)|\;\leqslant\;C_1\cdot\frac{2}{3\,z^2}\quad\forall z>1+m^*\,,\qquad C_1\,\approx\,2.74\,.
 \]
Therefore,
  \[
   \begin{split}
    \int_0^{+\infty} \!\!\ud r'\,r'^2\,|\mathcal{K}_1^{(m)}(r,r')|\,\frac{\theta_1(r')}{r'^2}\;&=\frac{1}{\,\pi\mu}\int_0^{+\infty}\!\!\ud r'\,\frac{1}{\,rr' \theta_1(r)}\,\big|\phi_1\big({\textstyle\frac{r^2+r'^2+1}{\mu r r'}}\big)\big| \\
    &\leqslant\;\frac{2\mu\, C_1 }{3\pi}\,\frac{r}{\theta_1(r)}\int_0^{+\infty}\!\!\ud r'\,\frac{r'}{\,(r^2+r'^2+1)^2} \\
    &=\;\frac{\,\mu C_1}{3 \pi }\,\frac{r^3}{\,\theta_1(r)^2 (1+r^2)}\,\frac{\theta_1(r)}{r^2}\,.
   \end{split}
  \]
  The function $(0,+\infty)\ni r\mapsto A(r):=\frac{r^3}{\,\theta_1(r)^2(1+r^2)}$ attains its maximum at $r=r_\mathrm{max}:=\sqrt{-1+2\nu + 2 \sqrt{1-\nu+\nu^2}}$ and 
 \[
  A(r_\mathrm{max})\;=\;\textstyle\frac{\left(2 \sqrt{\nu ^2-\nu +1}+2 \nu -1\right)^{3/2}}{\left(2 \sqrt{\nu ^2-\nu
   +1}+2 \nu \right) \left(\sqrt{\nu  \left(2 \sqrt{\nu ^2-\nu +1}+2 \nu
   -1\right)+1}-1\right)}\,.
 \]
 Estimate \eqref{eq:theestimate1} is thus proved with
   \[
  C(m)\;=\;\frac{\,\mu C_1}{3 \pi }\,A(r_\mathrm{max})\,,
 \]
 the dependence on $m$ being present both in $\mu$ and in $A(r_\mathrm{max})$, which are both continuously and strictly \emph{decreasing} in $m$. We calculate (see Fig.~\ref{fig:noEV_ell1})
 \[
  C(m)\;=\;1\qquad\Leftrightarrow\qquad m\,\approx\,(2.617)^{-1}\,,
 \]
which shows that there exists a threshold $M_\star\leqslant(2.617)^{-1}$ such that when $m>M_\star$ one has $\|\mathcal{K}_1^{(m)}\|<1$ and then the sufficient condition \eqref{eq:S1normbelow1} for the absence of eigenvalues is satisfied.
\end{proof}

\begin{figure}
\includegraphics[width=7cm]{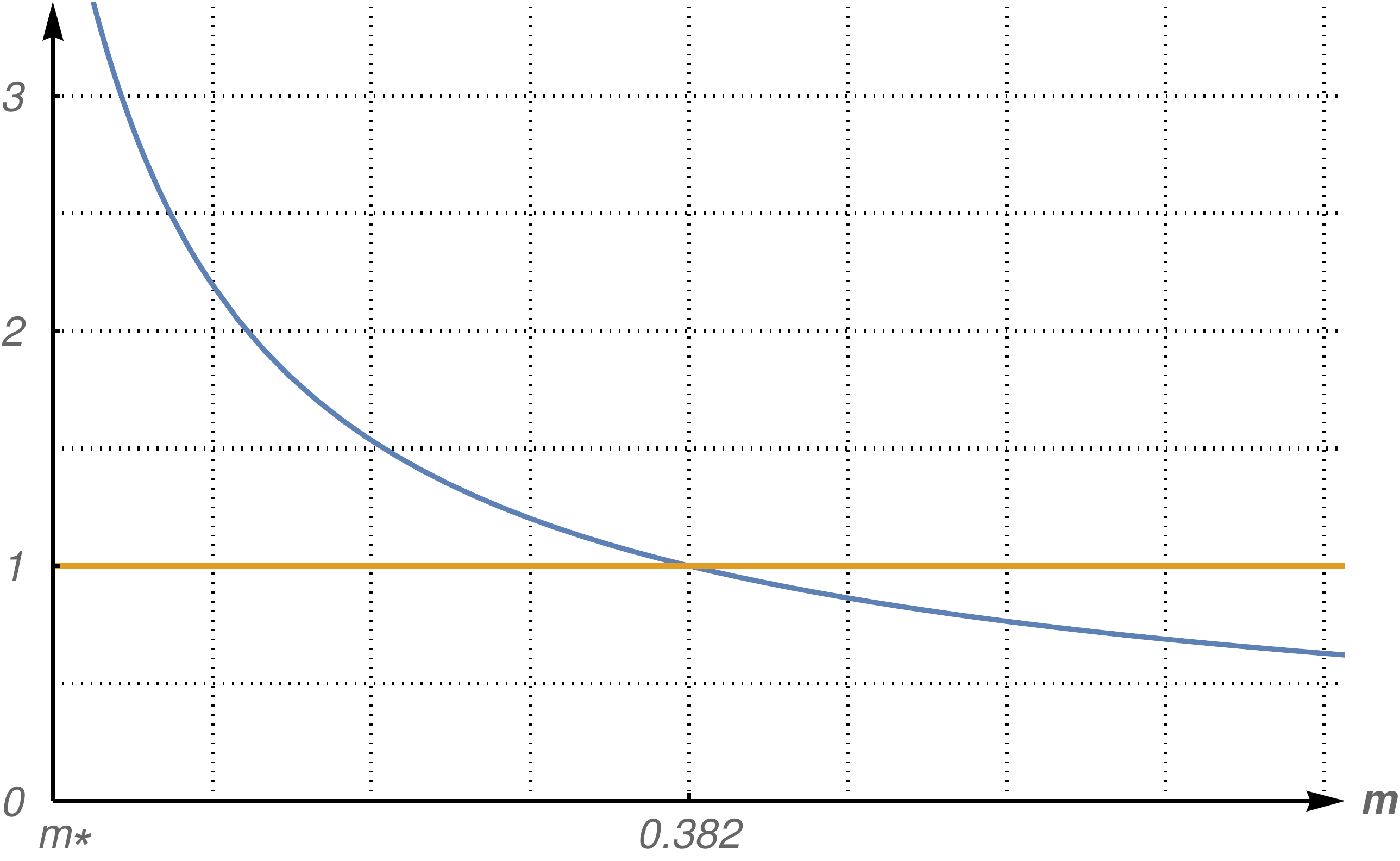}
\caption{Numerical solution to the equation $C(m)=1$ set up in the proof of Proposition \ref{prop:noEV}: the intersection of the function $C(m)$ (blue curve) with the reference value (orange line) takes place at $m\approx 0.382=(2.617)^{-1}.$}\label{fig:noEV_ell1}
\end{figure}

 \begin{proof}[Proof of Proposition \ref{prop:noEVodd3}]
  It is enough to prove that $\|\mathcal{K}_3^{(m)}\|<1$ for all $m\geqslant m^*$ and then to apply Corollary \ref{cor:Kells}. To this aim, let us first show that there is a $m$-dependent constant $C(m)$ such that 
  \[\tag{*}\label{eq:theestimate}
    \int_0^{+\infty} \!\!\ud r'\,r'^2\,|\mathcal{K}_3^{(m)}(r,r')|\,\frac{\theta_1(r')}{r'}\;\leqslant\;C(m)\,\frac{\theta_1(r)}{r}\,,
  \]
 which implies, by Schur's test and owing to the symmetry of the kernel $\mathcal{K}_3^{(m)}(r,r')$, that $\|\mathcal{K}_3^{(m)}\|<C(m)$.  When $\ell=3$  formulas \eqref{eq:boundonphiell-corrected}-\eqref{eq:correcting_factor} yield
 \[
  |\phi_3(z)|\;\leqslant\;C_3\cdot\frac{4}{35\,z^4}\quad\forall z>1+m^*\,,\qquad C_3\,\approx\,6.07\,,
 \]
whence
  \[
   \begin{split}
    \int_0^{+\infty} \!\!\ud r'\,r'^2\,|\mathcal{K}_3^{(m)}(r,r')|\,\frac{\theta_1(r')}{r'}\;&=\frac{1}{\,\pi\mu}\int_0^{+\infty}\!\!\ud r'\,\frac{1}{\,r \theta_1(r)}\,\big|\phi_3\big({\textstyle\frac{r^2+r'^2+1}{\mu r r'}}\big)\big| \\
    &\leqslant\;\frac{\,4\mu^3C_3}{35\pi}\,\frac{r^3}{\theta_1(r)}\int_0^{+\infty}\!\!\ud r'\,\frac{r'^4}{\,(r^2+r'^2+1)^4} \\
    &=\;\frac{\,\mu^3C_3}{\,280\,}\,\frac{r^4}{\,\theta_1(r)^2(r^2+1)^{\frac{3}{2}}}\,\frac{\theta_1(r)}{r}\,.
   \end{split}
  \]
 The function $(0,+\infty)\ni r\mapsto A(r):=\frac{r^4}{\,\theta_1(r)^2(r^2+1)^{3/2}}$ attains its maximum at $r=r_\mathrm{max}:=\sqrt{\frac{3}{2}\sqrt{9\nu^2-4\nu+4\,}+\frac{9}{2}\nu-1}$ and 
 \[
  A(r_\mathrm{max})\;=\;\textstyle\frac{\sqrt{\frac{2}{27}} \left(3 \sqrt{9 \nu ^2-4 \nu +4}+9 \nu -2\right)^2}{
   \left(\sqrt{9 \nu ^2-4 \nu +4}+3 \nu \right)^{3/2} \left(\sqrt{18 \nu ^2+\left(6
   \sqrt{9 \nu ^2-4 \nu +4}-4\right) \nu +4}-2\right)}\,.
 \]
 Therefore, estimate \eqref{eq:theestimate} is proved with
 \[
  C(m)\;=\;\frac{\,C_3}{\,280\,}\,\mu^3\,A(r_\mathrm{max})\,,
 \]
 the dependence on $m$ being present both in $\mu$ and in $A(r_\mathrm{max})$, which are both continuously and strictly \emph{decreasing} in $m$. Thus, upon evaluating when $m=m^*$ the constants $\mu\approx 1.86$, $\nu\approx 0.13$, $r_\mathrm{max}\approx 1.57$, and $A(r_\mathrm{max})\approx 6.22$, we find $C(m^*)\approx 0.87$ and we conclude that
 \[
  \|\mathcal{K}_3^{(m)}\|\;<\;C(m)\;\leqslant\;C(m^*)\;<\;1\,,
 \]
 which completes the proof.
 \end{proof}

\section{Existence of eigenvalues}\label{sec:existence_of_EV}

In this Section we discuss the existence of eigenvalues for $H_\alpha$ and complete the proof Theorem \ref{thm:EVexist}. Only the case $\alpha<0$ is relevant, for we know already from Theorem \ref{thm:ess-spec} that $\sigma_{\mathrm{disc}}(H_\alpha)$ is empty when $\alpha\geqslant 0$.

In this regime a variational argument provides a manageable condition on the charge operator $T_1$ which is sufficient for the existence of eigenvalues below the bottom of the essential spectrum.

\begin{lemma}[Variational Lemma]\label{lem:var_lem}
 Let $m>m^*$ and $\alpha<0$. 
Assume that there exists $\widetilde{\xi}_\circ\in H^{\frac{1}{2}}(\mathbb{R}^3)$ such that
 \begin{equation}
  \varepsilon_\circ\;:=\;\frac{\;\langle \widetilde{\xi}_\circ,T_1 \widetilde{\xi}_\circ\rangle_{H^{\frac{1}{2}},H^{-\frac{1}{2}}}}{\|\widetilde{\xi}_\circ\|^2_{L^2}}\;\in\;\big[2\pi^2\sqrt{1-\Lambda(m)^2},2\pi^2\big)\,.
 \end{equation}
Then $H_\alpha$ admits a negative eigenvalue $E_\circ$ with
\begin{equation}\label{eq:whereisE0}
 E_\circ\;\in\;\Big[-\frac{\alpha^2}{4\pi^4(1-\Lambda(m)^2)},-\frac{\alpha^2}{\varepsilon_\circ^2}\Big] .
\end{equation}
\end{lemma}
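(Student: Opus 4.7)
The plan is to produce a trial element $g\in\mathcal{D}[H_\alpha]$ whose Rayleigh quotient $H_\alpha[g]/\|g\|_\cH^2$ is strictly below $\inf\sigma_{\mathrm{ess}}(H_\alpha)=-\alpha^2/(4\pi^4)$ (Theorem \ref{thm:ess-spec}). The min-max principle will then supply an eigenvalue $E_\circ\leqslant H_\alpha[g]/\|g\|_\cH^2$ below the essential spectrum; the lower bound $E_\circ\geqslant-\alpha^2/(4\pi^4(1-\Lambda(m)^2))$ in \eqref{eq:whereisE0} will follow automatically from the semi-boundedness estimate of Theorem \ref{thm:properties_of_Halpha}(ii).

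The natural candidate trial function is a pure charge element $g=u_{\xi_\circ}^{\lambda_\circ}$, i.e.\ with vanishing regular part $F^{\lambda_\circ}\equiv 0$ in the decomposition \eqref{eq:DHab_form}, obtained from $\widetilde\xi_\circ$ by the scaling prescribed in Lemma \ref{lem:scaling}. Concretely, set
\[
 \lambda_\circ\;:=\;\frac{\alpha^2}{\varepsilon_\circ^2}\,,\qquad \widehat{\xi_\circ}(p)\;:=\;\widehat{\widetilde\xi_\circ}(p/\sqrt{\lambda_\circ})\,.
\]
The bilateral bound on $\varepsilon_\circ$ translates precisely into $\lambda_\circ\in(\alpha^2/(4\pi^4),\alpha^2/(4\pi^4(1-\Lambda(m)^2))]$, i.e.\ the admissible window \eqref{eq:amiss_window}. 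Dilations preserve $H^{\frac{1}{2}}(\mathbb{R}^3)$ (with a rescaled but finite norm), so $\xi_\circ\in H^{\frac{1}{2}}(\mathbb{R}^3)$ and therefore $u_{\xi_\circ}^{\lambda_\circ}\in\mathcal{D}[H_\alpha]$; moreover $u_{\xi_\circ}^{\lambda_\circ}\neq 0$ since $\widetilde\xi_\circ\neq 0$ and $\xi\mapsto u_\xi^\lambda$ is injective (paragraph after \eqref{eq:u_xi}).

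The crux is now an immediate application of Lemma \ref{lem:scaling}: with $\varepsilon:=|\alpha|/\sqrt{\lambda_\circ}=\varepsilon_\circ$, the working hypothesis $\langle\widetilde\xi_\circ,T_1\widetilde\xi_\circ\rangle_{H^{1/2},H^{-1/2}}=\varepsilon_\circ\|\widetilde\xi_\circ\|_{L^2}^2$ is equivalent to the vanishing of the charge form at the shifted parameter $\lambda_\circ$, namely
\[
 \langle\xi_\circ,(T_{\lambda_\circ}+\alpha\mathbbm{1})\xi_\circ\rangle_{H^{1/2},H^{-1/2}}\;=\;0\,.
\]
Plugging this into \eqref{eq:DHab_form} with $F^{\lambda_\circ}=0$ collapses the quadratic form to
\[
 H_\alpha[u_{\xi_\circ}^{\lambda_\circ}]\;=\;-\lambda_\circ\,\|u_{\xi_\circ}^{\lambda_\circ}\|_\cH^2\,,
\]
so the Rayleigh quotient equals $-\lambda_\circ=-\alpha^2/\varepsilon_\circ^2$. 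Since $\varepsilon_\circ<2\pi^2$, this quantity is strictly below $-\alpha^2/(4\pi^4)=\inf\sigma_{\mathrm{ess}}(H_\alpha)$, so the min-max principle produces an eigenvalue $E_\circ\leqslant-\alpha^2/\varepsilon_\circ^2$ below the essential spectrum, which combined with Theorem \ref{thm:properties_of_Halpha}(ii) gives \eqref{eq:whereisE0}.

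No genuine obstacle is anticipated: the argument reduces to choosing the correct scaling parameter $\lambda_\circ$ so that the variational hypothesis on $T_1$ converts, via Lemma \ref{lem:scaling}, into the exact cancellation of the charge term in the quadratic form of $H_\alpha$. The only minor verification needed is that $\widetilde\xi_\circ\in H^{\frac{1}{2}}$ rescales to a function in $H^{\frac{1}{2}}$, which is routine.
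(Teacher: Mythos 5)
Your proposal is correct and follows essentially the same route as the paper: the same choice $\lambda_\circ=\alpha^2/\varepsilon_\circ^2$, the same rescaled charge $\xi_\circ$, the use of Lemma \ref{lem:scaling} to cancel the charge term in \eqref{eq:DHab_form}, and the min--max principle combined with Theorem \ref{thm:properties_of_Halpha}(ii) to localise $E_\circ$. No gaps.
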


\begin{proof}
 Let $\lambda_\circ:=\alpha^2/\varepsilon_\circ^2$ and $\widehat{\xi}_\circ(p):=\widehat{\widetilde{\xi}}_\circ(p/\sqrt{\lambda_\circ})$.
 Then $u_{\xi_\circ}^{\lambda_\circ}\in\mathcal{D}[H_\alpha]$ and
 \[
   H_\alpha[u_{\xi_\circ}^{\lambda_\circ}]\;=\;-\lambda_\circ\,\|u_{\xi_\circ}^{\lambda_\circ}\|^2_{L^2(\mathbb{R}^3\times\mathbb{R}^3)}+2\langle \xi_\circ,(T_{\lambda_\circ}+\alpha\mathbbm{1})\xi_\circ\rangle_{H^{\frac{1}{2}}(\mathbb{R}^3),H^{-\frac{1}{2}}(\mathbb{R}^3)}\,,
 \]
as follows from \eqref{eq:DHab_form}. Moreover, owing to Lemma \ref{lem:scaling},
\[
 0\;=\;\langle \widetilde{\xi}_\circ,T_1 \widetilde{\xi}_\circ\rangle_{H^{\frac{1}{2}},H^{-\frac{1}{2}}}-\varepsilon_\circ\|\widetilde{\xi}_\circ\|^2_{L^2}\;=\;\langle \xi_\circ,(T_{\lambda_\circ}+\alpha\mathbbm{1})\xi_\circ\rangle_{H^{\frac{1}{2}},H^{-\frac{1}{2}}}\,.
\]
Thus,
\[
 \frac{H_\alpha[u_{\xi_\circ}^{\lambda_\circ}]}{\;\;\|u_{\xi_\circ}^{\lambda_\circ}\|^2_{L^2}}\;=\;-\lambda_\circ\;\in\;\Big[-\frac{\alpha^2}{4\pi^4(1-\Lambda(m)^2)},-\frac{\alpha^2}{4\pi^4}\Big)\,,
\]
implying that there exists a normalised element $g_\circ:= u_{\xi_\circ}^{\lambda_\circ}/\|u_{\xi_\circ}^{\lambda_\circ}\|_{L^2}$ in $\mathcal{D}[H_\alpha]$ such that $H_\alpha[g_\circ]=-\lambda_\circ<\inf\sigma_{\mathrm{ess}}(H_\alpha)$. Then, by the min-max principle, there exists an eigenvalue $E_\circ$ of $H_\alpha$ below $-\lambda_\circ$ and \eqref{eq:whereisE0} follows.
\end{proof}

For the proof of Theorem \ref{thm:EVexist} we shall apply the above variational argument to the angular sector $\ell=1$ for $T_1$. Our choice of the trial function is inspired by the discussion presented in \cite{michelangeli-schmidbauer-2013} (see Fig.~7 therein).

We consider trial functions $\xi_\star$ of the form
\begin{equation}\label{eq:defxtrialgeneric}
 \widehat{\xi}_\star(p)\;:=\;f_\star(|p|)\,Y_{1,a}(\Omega_p)
\end{equation}
for some $f_\star\in L^2(\mathbb{R}^+,r^2\sqrt{r^2+1}\ud r)$ and some $a\in\{-1,0,1\}$, where we used polar coordinates $p\equiv|p|\Omega_p$ and $Y_{1,a}$ is the corresponding spherical harmonic in the $\ell=1$ sector. By means of \eqref{eq:Tform_decomp_1}-\eqref{eq:Tform_decomp_2} we compute
\begin{equation}\label{eq:T1onGenericTrial}
  \begin{split}
   \langle \xi_\star,&T_{1}\,\xi_\star\rangle_{H^{\frac{1}{2}},H^{-\frac{1}{2}}}\;=\;\Phi_0[f_\star]+\Psi_{0,1}[f_\star] \\
   &=\;2\pi^2\!\int_0^{+\infty}\!\!\ud r\,r^2\sqrt{\nu r^2+1}\,|f_\star(r)|^2 \\
   &\qquad +2\pi\!\int_0^{+\infty}\!\!\ud r\int_0^{+\infty}\!\!\ud r'\,\overline{f_\star(r)}\,f_\star(r')\int_{-1}^1\ud y\,\frac{r^2\,r'^2\,y}{r^2+r'^2+\mu r r' y + 1}\,.
  \end{split}
 \end{equation}

 We observe the following.

 \begin{lemma}\label{lem:expectation_is_monotone}
 If the function $f_\star$ chosen in \eqref{eq:defxtrialgeneric} does not depend on $m$, then the map 
 \[
  m\;\mapsto\;\mathcal{B}(m)\;:=\;\frac{\;\langle \xi_\star,T_{1}\,\xi_\star\rangle_{H^{\frac{1}{2}},H^{-\frac{1}{2}}}}{\|\xi_\star\|_{L^2}^2}
 \]
 defined by \eqref{eq:T1onGenericTrial}
 is continuous and strictly monotone increasing in $m$.
 \end{lemma}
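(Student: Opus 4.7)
The strategy reduces the statement to a direct application of Lemma~\ref{lem:T_monotonicity}. By hypothesis $f_\star$ is independent of $m$, so the charge $\xi_\star$ defined in \eqref{eq:defxtrialgeneric} is itself $m$-independent, and in particular the denominator $\|\xi_\star\|_{L^2(\mathbb{R}^3)}^2$ is a positive constant in $m$. It therefore suffices to prove continuity and strict monotone increase of the numerator $m \mapsto \langle \xi_\star,T_1\xi_\star\rangle_{H^{\frac{1}{2}},H^{-\frac{1}{2}}}$. The key observation is that by construction $\xi_\star$ is concentrated in the angular momentum sector $\ell=1$, hence $\xi_\star \in H^{1/2}_{\ell=1}(\mathbb{R}^3) \subset H^{1/2}_-(\mathbb{R}^3)$. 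Lemma~\ref{lem:T_monotonicity} applied with $\lambda=1$ then immediately yields the desired continuity and strict monotonicity.

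There is no genuine obstacle here: the only point to record is that the odd angular symmetry of the trial function is precisely what places it inside the class for which Lemma~\ref{lem:T_monotonicity} provides the \emph{strict} monotonicity statement. For definiteness one may alternatively unpack the computation from \eqref{eq:T1onGenericTrial}: the first summand $\Phi_1[f_\star]=2\pi^2\!\int_0^{+\infty}\! r^2\sqrt{\nu r^2+1}\,|f_\star(r)|^2\,\mathrm{d}r$ depends on $m$ only through $\nu(m)=m(m+2)/(m+1)^2$, which is smooth and strictly increasing on $(0,+\infty)$, so this contribution is smooth and strictly monotone increasing in $m$; the second summand $\Psi_{1,1}[f_\star]$, rewritten via \eqref{eq:offdiagint} as minus a sum over odd $j\geqslant 1$ of non-negative terms each proportional to $\mu(m)^j$ with $\mu(m)=2/(m+1)$ smooth and strictly decreasing in $m$, is itself smooth and strictly monotone increasing in $m$. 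Summing the two contributions concludes the argument.
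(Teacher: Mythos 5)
Your proposal is correct and follows essentially the same route as the paper: since $f_\star$ (hence $\xi_\star$ and $\|\xi_\star\|_{L^2}$) is $m$-independent and $\xi_\star\in H^{1/2}_{\ell=1}(\mathbb{R}^3)\subset H^{1/2}_-(\mathbb{R}^3)$, the claim reduces to Lemma~\ref{lem:T_monotonicity} with $\lambda=1$. Your second paragraph merely unpacks the proof of that lemma (monotonicity of $\nu$ for $\Phi$ and of $\mu$ in the representation \eqref{eq:offdiagint} for $\Psi$), which is fine but redundant.
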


 \begin{proof}
 Since $\xi_\star$ is independent of $m$, it suffices to consider the map $m\mapsto \langle \xi_\star,T_{1}\,\xi_\star\rangle_{H^{\frac{1}{2}},H^{-\frac{1}{2}}}$, in which case the statement follows at once from Lemma \ref{lem:T_monotonicity}, keeping into account that $\xi_\star\in H^{1/2}_{\ell=1}(\mathbb{R}^3)$.
%
%
 \end{proof}

 We then make the following choice:
 \begin{equation}\label{eq:choice}
  f_\star(r)\;:=\;\frac{e^{-b r^2}}{\,r\ln(r+a)}\,,\qquad a=1.2\,,\quad b=0.05\,.
 \end{equation}
 With this choice, the solution to the equation $\mathcal{B}(m)=2\pi^2$ is unique, owing to Lemma \ref{lem:expectation_is_monotone}, and is given by $m=m_\star$, where 
 \begin{equation}\label{eq:defmsubstar}
  m_\star\;\approx\;(8.587)^{-1}\qquad\quad (\,\mathcal{B}(m_\star)=2\pi^2\,)
 \end{equation}
 (see Figure \ref{fig:intersectionEV}).
 Based again on Lemma \ref{lem:expectation_is_monotone}, we conclude that $\mathcal{B}(m)<2\pi^2$ for $m\in(m^*,m_\star)$. Thus, in such a regime of masses, the quantity
  \[
\frac{\;\langle \xi ,T_1\xi\rangle_{H^{\frac{1}{2}},H^{-\frac{1}{2}}}}{\|\xi\|_{L^2(\mathbb{R}^3)}^2}
\]
 can be made strictly smaller than $2\pi^2$ for suitable $\xi$'s. Lemma \ref{lem:var_lem} then allows us to conclude that in the considered range of masses the operator $H_\alpha$ admits eigenvalues below the bottom of the essential spectrum.

\begin{figure}
\includegraphics[width=7cm]{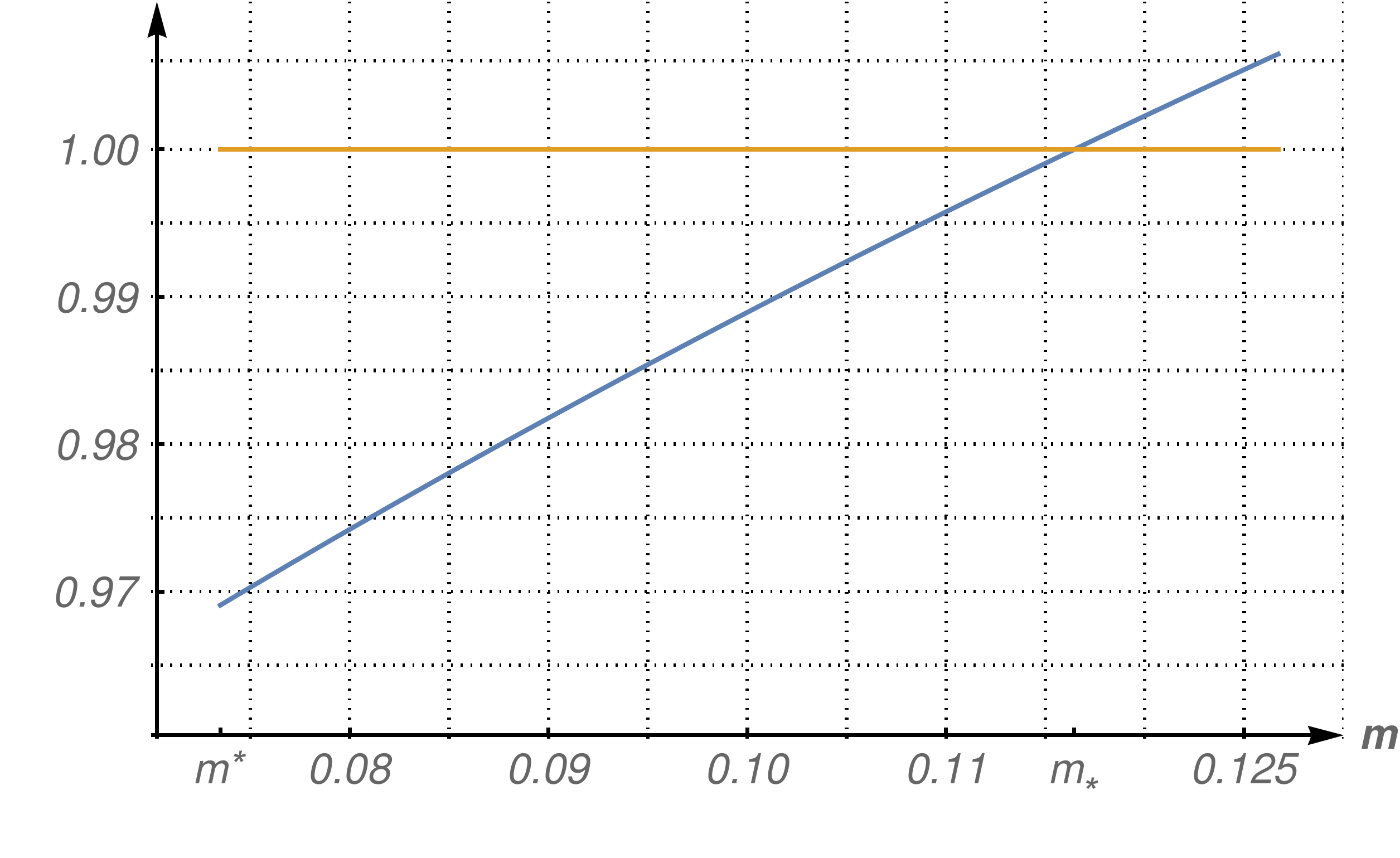}
\caption{Plot of the function $m\mapsto\frac{\mathcal{B}(m)}{2\pi^2}$ defined by \eqref{eq:T1onGenericTrial} with the choice \eqref{eq:choice} (blue curve) and intersection at $m=m_\star$ with the threshold value 1 (orange line).}\label{fig:intersectionEV}
\end{figure}

 \begin{remark}
 The mass window $(m^*,m_\star)$ that we could cover with the reasoning above is numerically the same, and in fact slightly larger, than the interval $(m^*,m^{**})$ in which bound states of $H_\alpha$ for $\alpha<0$ were already believed to exist, as indicated by the numerical evidence of \cite{michelangeli-schmidbauer-2013}. Here the threshold $m^{**}\approx (8.62)^{-1}$ is defined as follows. One can see \cite[Appendix A]{CDFMT-2015} that for $m>0$ and $s\in[0,1]$ the integral equation
 \begin{equation}\label{eq:s-integral-equation}
\pi\sqrt{{\textstyle\frac{\,m(m+2)}{(m+1)^2}}}+\int_{-1}^1\!\ud y\,y\int_0^{+\infty}\!\!\ud r\,\frac{r^s}{\,r^2+1+\frac{2}{m+1}ry}\;=\;0
\end{equation}
 defines a continuous and monotone increasing function $s\mapsto m(s)$ with $m(0)=m^*\approx (13.607)^{-1}$ given by 
 $\Lambda(m^*)=1$ in \eqref{eq:Lambdam} and $m(1)=m^{**}\approx (8.62)^{-1}$. The exact implicit formula defining $m^{**}$ is therefore
 \begin{equation}\label{eq:m**root}
\pi\sqrt{{\textstyle\frac{\,m(m+2)}{(m+1)^2}}}+\int_{-1}^1\!\ud y\,y\int_0^{+\infty}\!\!\ud r\,\frac{r}{\,r^2+1+\frac{2}{m+1}ry}\;=\;0\,.
\end{equation}
 \end{remark}

 \begin{proof}[Proof of Theorem \ref{thm:EVexist}]
 The form of the eigenfunctions is proved in Lemma \ref{lem:reduction_lemma}. The absence of eigenstates with charge of definite angular symmetry is proved in Lemma \ref{lem:only_odd_symm} for even $\ell$'s, and in Proposition \ref{prop:noEVodd3} for $\ell=3,5,7,\dots$ This establishes part (i) of the Theorem. The  existence of eigenstates with charge of angular symmetry $\ell=1$ and sufficiently small masses, including the identification of the threshold $m_\star$, is covered by the variational argument of this Section. The absence of eigenstates with charge of angular symmetry $\ell=1$ and sufficiently large masses, including the identification of the threshold $M_\star$, is proved in Proposition \ref{prop:noEV}. This establishes also part (ii) and (iii) of the Theorem.
 \end{proof}

\section{Monotonicity of eigenvalues}

In this Section we prove Theorem \ref{thm:EV_monotonicity}.

To this aim, let us slightly enrich our notation so as to include explicitly the mass dependence. With self-explanatory meaning we shall then use the symbols $H_\alpha^{(m)}$, $T_\lambda^{(m)}$, $T_{\lambda,\ell}^{(m)}$, and $u_{\xi,m}^{\lambda}$.

Moreover, at any fixed $m>m^*$ we shall enumerate in increasing order, counting the multiplicity, the (finitely many, owing to Theorem \ref{thm:EVexist}) eigenvalues of $H_\alpha^{(m)}$ below $\inf\sigma_{\mathrm{ess}}(H_\alpha^{(m)})$  as
\[
 -\lambda_1(m)\;\leqslant\;-\lambda_2(m)\;\cdots\;\leqslant\;-\lambda_{N_m}(m)\;<\;-\frac{\alpha^2}{4\pi^4}\,,
\]
where $N_m<+\infty$ is their total number.

%
%

We first establish the following useful property.

\begin{lemma}\label{lem:linear_independence}
 For given $N\in\mathbb{N}$, $\lambda_1,\dots,\lambda_N>0$, $\xi_1,\dots\xi_N\in H^{\frac{1}{2}}(\mathbb{R}^3)$, and $m>0$ (the $\lambda_j$'s and the $\xi_j$'s being not necessarily all distinct), assume that the functions $u_{\xi_1,m}^{\lambda_1},\cdots,u_{\xi_N,m}^{\lambda_N}$ are mutually orthogonal and hence
 \[
  \dim \mathrm{span}\big\{u_{\xi_1,m}^{\lambda_1},\cdots,u_{\xi_N,m}^{\lambda_N} \big\}\;=\; N\,.
 \]
 Let $m'>0$ such that  $\frac{|m-m'|}{m'(m+1)}<1$. Then
 \[
  \dim \mathrm{span}\big\{u_{\xi_1,m'}^{\lambda_1},\cdots,u_{\xi_N,m'}^{\lambda_N} \big\}\;=\; N\,.
 \]
 \end{lemma}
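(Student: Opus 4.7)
The plan is a proof by contradiction that exploits the distributional PDE satisfied by each $u_{\xi_j,m'}^{\lambda_j}$. Suppose one had coefficients $c_1,\dots,c_N$, not all zero, with $\sum_j c_j\,u_{\xi_j,m'}^{\lambda_j}=0$ in $\mathcal{H}$. Following the displayed equation immediately preceding \eqref{eq:kerHstar}, each $u_{\xi_j,m'}^{\lambda_j}$ solves, in the distributional sense,
\[
 \bigl(H_{\mathrm{free}}^{(m')}+\lambda_j\bigr)\,u_{\xi_j,m'}^{\lambda_j}\;=\;(2\pi)^{3/2}\bigl(\xi_j(x_1)\delta(x_2)-\xi_j(x_2)\delta(x_1)\bigr)\,,
\]
where $H_{\mathrm{free}}^{(m')}$ denotes the free Hamiltonian at mass ratio $m'$. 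Applying $H_{\mathrm{free}}^{(m')}$ distributionally to the putative relation and rearranging yields
\[
 \sum_j c_j\lambda_j\,u_{\xi_j,m'}^{\lambda_j}\;=\;(2\pi)^{3/2}\sum_j c_j\bigl(\xi_j(x_1)\delta(x_2)-\xi_j(x_2)\delta(x_1)\bigr)\,.
\]
The left-hand side is an $L^2(\mathbb R^3\times\mathbb R^3)$-function, while the right-hand side is a distribution supported on the coincidence hyperplanes $\Gamma_1\cup\Gamma_2$, a set of Lebesgue measure zero in $\mathbb R^6$. Pairing with test functions supported away from $\Gamma_1\cup\Gamma_2$ forces the $L^2$-function to vanish almost everywhere, hence identically, and the two sides must vanish separately; testing the surviving distributional identity against tensor-product test functions $\varphi(x_1)\psi(x_2)$ then gives $\sum_j c_j\xi_j=0$ in $H^{-\frac{1}{2}}(\mathbb R^3)$.

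Iterating the procedure (applying $H_{\mathrm{free}}^{(m')}$ to the freshly obtained identity $\sum_j c_j\lambda_j\,u_{\xi_j,m'}^{\lambda_j}=0$, and so on) produces, for every $k\in\mathbb{N}_0$,
\[
 \sum_j c_j\lambda_j^k\,u_{\xi_j,m'}^{\lambda_j}\;=\;0\quad\text{in }\mathcal{H}\,,\qquad \sum_j c_j\lambda_j^k\,\xi_j\;=\;0\quad\text{in }H^{-\frac{1}{2}}(\mathbb R^3)\,.
\]
Let $\Lambda_1,\dots,\Lambda_r$ enumerate the distinct values among $\{\lambda_1,\dots,\lambda_N\}$ and set $J_a:=\{j:\lambda_j=\Lambda_a\}$. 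Invertibility of the Vandermonde matrix $(\Lambda_a^k)_{0\leqslant k<r,\,1\leqslant a\leqslant r}$ isolates, for each $a$, the identity $\sum_{j\in J_a}c_j\,\xi_j=0$ in $H^{-\frac{1}{2}}(\mathbb R^3)$.

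The contradiction is then closed at the level of $m$: within each class $J_a$ the sub-family $\{u_{\xi_j,m}^{\Lambda_a}\}_{j\in J_a}$ inherits mutual orthogonality from the given family at $m$, hence is linearly independent in $\mathcal{H}$; by the injectivity of the linear map $\xi\mapsto u_\xi^{\Lambda_a}$ (noted just after \eqref{eq:u_xi}), the corresponding charges $\{\xi_j\}_{j\in J_a}$ are themselves linearly independent in $H^{-\frac{1}{2}}(\mathbb R^3)$. Combined with $\sum_{j\in J_a}c_j\xi_j=0$, this forces $c_j=0$ for every $j\in J_a$, and ranging over $a=1,\dots,r$ gives $c_j=0$ for every $j$, contradicting the non-triviality of $(c_j)_j$. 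The most delicate step is the dichotomy \emph{``$L^2$-function equals a distribution supported on a null set $\Rightarrow$ both vanish separately''}, which rests on the standard identification of $L^2$-functions with regular distributions and on the specific surface-layer structure of the boundary terms $\xi(x_1)\delta(x_2)-\xi(x_2)\delta(x_1)$; the hypothesis $\tfrac{|m-m'|}{m'(m+1)}<1$ intervenes only to ensure that the manipulations with $H_{\mathrm{free}}^{(m')}$ stay within the regime in which $u_{\xi,m'}^{\lambda}\in\cH$ is well-posed.
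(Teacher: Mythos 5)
Your proof is correct, and it takes a genuinely different route from the paper's. The paper argues perturbatively: it defines the linear map $A$ sending $\widehat{u_{\xi_j,m}^{\lambda_j}}$ to $\widehat{u_{\xi_j,m'}^{\lambda_j}}$ on the given span, uses the explicit Fourier representation \eqref{eq:u_xi} to estimate $\|A-\mathbbm{1}\|\leqslant\frac{|m-m'|}{m'(m+1)}<1$ (this is exactly where both the closeness hypothesis and the mutual orthogonality of the $u_{\xi_j,m}^{\lambda_j}$ are consumed), and concludes that $A$ is invertible, hence dimension-preserving. You instead extract from a putative dependency $\sum_j c_j u_{\xi_j,m'}^{\lambda_j}=0$ the distributional identity equating the $L^2$-function $\sum_j c_j\lambda_j u_{\xi_j,m'}^{\lambda_j}$ with a surface layer carried by the null set $\Gamma_1\cup\Gamma_2$, force both sides to vanish, and iterate; the Vandermonde step then reduces everything to linear independence of the charges within each class of equal $\lambda_j$'s, which is transferred back from the mass $m$ via linearity of $\xi\mapsto u_{\xi,m}^{\Lambda_a}$. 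Each step is legitimate: in particular $\sum_j c_j\xi_j$ is itself an $L^2$-function annihilated by all test functions supported away from the origin, so it vanishes. Note, however, that your argument uses \emph{neither} hypothesis that the paper relies on: the condition $\frac{|m-m'|}{m'(m+1)}<1$ is not needed even for well-posedness, since $u_{\xi,m'}^{\lambda}\in\cH$ holds for every $m'>0$ and $\lambda>0$ (the quadratic form $p_1^2+p_2^2+\mu' p_1\cdot p_2$ is positive definite for all $\mu'\in(0,2)$), so your closing remark about its role is not accurate --- it is simply superfluous; likewise only linear independence at $m$, not orthogonality, is used. You therefore prove the stronger statement that linear independence of $\{u_{\xi_j,m}^{\lambda_j}\}_j$ is an $m$-independent property of the data $(\xi_j,\lambda_j)_j$. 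What the paper's route buys in exchange is a short, self-contained operator-norm computation that avoids distributional manipulations and delivers the local-in-$m$ statement, which suffices for the monotonicity proof since that proof proceeds by small increments of the mass anyway.
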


 \begin{proof}
 Clearly, $\dim \mathrm{span}\big\{\widehat{u_{\xi_1,m}^{\lambda_1}},\dots,\widehat{u_{\xi_N,m}^{\lambda_N}} \big\}=N$ by unitarity of the Fourier transform in $L^2_\mathrm{f}(\mathbb{R}^3\times\mathbb{R}^3)$, and the thesis is equivalent to $\dim \mathrm{span}\big\{\widehat{u_{\xi_1,m'}^{\lambda_1}},\dots,\widehat{u_{\xi_N,m'}^{\lambda_N}} \big\}=N$. Let the map
 \[
  A\;:\;\mathrm{span}\big\{\widehat{u_{\xi_1,m}^{\lambda_1}},\dots,\widehat{u_{\xi_N,m}^{\lambda_N}} \big\}\;\longrightarrow\;\mathrm{span}\big\{\widehat{u_{\xi_1,m'}^{\lambda_1}},\dots,\widehat{u_{\xi_N,m'}^{\lambda_N}}\big\}
 \]
 be defined by
 \[
  A\,\widehat{u_{\xi_j,m}^{\lambda_j}}\;:=\;\widehat{u_{\xi_j,m'}^{\lambda_j}}\qquad j\in\{1,\dots,N\}
 \]
 and extended by linearity. Writing $\mu'=\frac{2}{m'+1}$ (and $\mu=\frac{2}{m+1}$) one has
 \[
  \widehat{u_{\xi_j,m'}^{\lambda_j}}(p,q)\;=\;1+(\mu-\mu')\frac{p\cdot q}{p^2+q^2+\mu' p\cdot q+\lambda_j}\,\widehat{u_{\xi_j,m}^{\lambda_j}}(p,q)\,,
 \]
 as follows directly from \eqref{eq:u_xi}. Thus,
 \[
  (A-\mathbbm{1})\widehat{u_{\xi_j,m}^{\lambda_j}}\;=\;(\mu-\mu')\frac{p\cdot q}{p^2+q^2+\mu' p\cdot q+\lambda_j}\,\widehat{u_{\xi_j,m}^{\lambda_j}}
 \]
 and obviously $\big|\frac{(\mu-\mu')\,p\cdot q}{p^2+q^2+\mu' p\cdot q+\lambda_j}\big|\leqslant \frac{|\mu-\mu'|}{2-\mu'}$, whence also
 \[
  \big\|(A-\mathbbm{1})\widehat{u_{\xi_j,m}^{\lambda_j}}\big\|_{\cH}\;\leqslant\;\frac{|\mu-\mu'|}{2-\mu'}\,\big\|\widehat{u_{\xi_j,m}^{\lambda_j}}\big\|_{\cH} 
 \]
 uniformly in $p,q,j$. It is straightforward to see that the latter inequality implies
 \[
  \|A-\mathbbm{1}\|\;\leqslant\;\frac{|\mu-\mu'|}{2-\mu'}\;=\;\frac{|m-m'|}{m'(m+1)}\,.
 \]
 Indeed, with the compact notation
 \[
  e_j\;:=\;\widehat{u_{\xi_j,m}^{\lambda_j}}\,\big/ \|\widehat{u_{\xi_j,m}^{\lambda_j}}\big\|_{\cH} \qquad j\in\{1,\dots,N\}\,,
 \]
 one sees that for a generic element $\sum_{j=1}^N c_j e_j$ ($c_j\in\mathbb{C}$) in the span of $\{e_1,\dots,e_n\}$ one has
 \[
  \begin{split}
  \|(A-\mathbbm{1})&{\textstyle \sum_{j=1}^N c_j e_j}\|_{\cH}^2\;=\;\sum_{j,k=1}^N \overline{c_j}\,c_k\langle (A-\mathbbm{1})e_j,(A-\mathbbm{1})e_k\rangle_{\cH} \\
  &\leqslant\;\textstyle \big(\frac{|m-m'|}{m'(m+1)}\big)^2\sum_{j,k=1}^N|c_j|\,|c_k| \;\leqslant\;\textstyle \big(\frac{|m-m'|}{m'(m+1)}\big)^2\sum_{j=1}^N|c_j|^2\,,
  \end{split}
 \]
 whence the conclusion. Last, since by assumption $\frac{|m-m'|}{m'(m+1)}<1$, then the map $A=\mathbbm{1}+(A-\mathbbm{1})$ is invertible, implying at once that $\dim \mathrm{span}\big\{\widehat{u_{\xi_1,m'}^{\lambda_1}},\dots,\widehat{u_{\xi_N,m'}^{\lambda_N}}\big\}=N$, which concludes the proof.
 \end{proof}

\begin{proof}[Proof of Theorem \ref{thm:EV_monotonicity}]
Let us proceed inductively, fixing two masses $m_1,m_2>m^*$, with $m_1<m_2$, and assuming non-restrictively that such masses are sufficiently close, in the quantitative sense $\frac{|m_2-m_1|}{m_1(m_2+1)}<1$.

First, we prove that the \emph{lowest} eigenvalue of $H_\alpha^{(m)}$ is strictly monotone increasing with $m$ in the sense that if both Hamiltonians $H_\alpha^{(m_1)}$ and $H_\alpha^{(m_2)}$ have non-empty discrete spectrum, then $-\lambda_1(m_1)<-\lambda_1(m_2)$.

Owing to Lemmas \ref{lem:reduction_lemma} and \ref{lem:only_odd_symm},
an eigenfunction corresponding to the eigenvalue $-\lambda_1(m_2)$ has necessarily the form (of a multiple of) $u_{\xi,m_2}^{\lambda_1(m_2)}$ for some $\xi\in H_-^{1/2}(\mathbb{R}^3)$, and 
\[
 \begin{split}
  (H^{(m_2)}_\alpha+\lambda_1(m_2)\mathbbm{1})\big[u_{\xi,m_2}^{\lambda_1(m_2)}\big]\;=\;\langle\xi,(T_{\lambda_1(m_2)}^{(m_2)}+\alpha\mathbbm{1})\xi\rangle_{H^{\frac{1}{2}},H^{-\frac{1}{2}}}\;=\;0\,.
 \end{split}
\]
On the other hand, forming the function $u_{\xi,m_1}^{\lambda_1(m_2)}\in\mathcal{D}[H_\alpha^{(m_1)}]$, one has
\[
 \begin{split}
  (H^{(m_1)}_\alpha+\lambda_1(m_2)\mathbbm{1})\big[u_{\xi,m_1}^{\lambda_1(m_2)}\big]\;&=\;\langle\xi,(T_{\lambda_1(m_2)}^{(m_1)}+\alpha\mathbbm{1})\xi\rangle_{H^{\frac{1}{2}},H^{-\frac{1}{2}}} \\
  &<\;\langle\xi,(T_{\lambda_1(m_2)}^{(m_2)}+\alpha\mathbbm{1})\xi\rangle_{H^{\frac{1}{2}},H^{-\frac{1}{2}}}\;=\;0\,,
 \end{split}
\]
where the first step follows from formula \eqref{eq:DHab_form} for the evaluation of the quadratic form of $H_\alpha^{(m_1)}$, and the inequality in the second step follows from the monotonicity Lemma \ref{lem:T_monotonicity}. Therefore,
\[
 \frac{\;H^{(m_1)}_\alpha\big[u_{\xi,m_1}^{\lambda_1(m_2)}\big]\;}{\big\|u_{\xi,m_1}^{\lambda_1(m_2)}\big\|_{\cH}^2}\;<\;-\lambda_1(m_2)\,,
\]
whence the existence of an eigenvalue of $H_\alpha^{(m_1)}$ strictly below $-\lambda_1(m_2)$, and then the conclusion $-\lambda_1(m_1)<-\lambda_1(m_2)$.

Next, by induction, let us assume that for the first $k$ eigenvalues (counting the multiplicity) of $H_\alpha^{(m_2)}$ one has $-\lambda_j(m_1)<-\lambda_j(m_2)$ for any $j\in\{1,\dots,k\}$ and let us further assume that $k< N_{m_2}$, that is, $H_\alpha^{(m_2)}$ admits one further $(k+1)$-th eigenvalue $-\lambda_{k+1}(m_2)$. Let us denote the corresponding eigenfunctions with
\[
 u_{\xi_1,m_2}^{\lambda_1(m_2)}\,,
 \,\dots\,,\,u_{\xi_k,m_2}^{\lambda_k(m_2)}\,,\, u_{\xi_{k+1},m_2}^{\lambda_{k+1}(m_2)}
\]
for some charges $\xi_1,\dots,\xi_k,\xi_{k+1}\in H_-^{1/2}(\mathbb{R}^3)$, assuming that such eigenfunctions are chosen so as to be mutually orthogonal, which is always possible also  in case of degeneracy.
Since in addition $\frac{|m_2-m_1|}{m_1(m_2+1)}<1$, then we can apply Lemma \ref{lem:linear_independence} and deduce that the functions
\[
 u_{\xi_1,m_1}^{\lambda_1(m_2)}\,,
 \,\dots\,,\,u_{\xi_{k+1},m_1}^{\lambda_{k+1}(m_2)}
\]
are linearly independent too. On each such function one has
\[
 \begin{split}
  (H^{(m_1)}_\alpha+\lambda_{j}&(m_2)\mathbbm{1})\big[u_{\xi_{j},m_1}^{\lambda_{j}(m_2)}\big]\;=\;\langle\xi_{j},(T_{\lambda_{j}(m_2)}^{(m_1)}+\alpha\mathbbm{1})\xi_{j}\rangle_{H^{\frac{1}{2}},H^{-\frac{1}{2}}} \\
  &<\;\langle\xi_{j},(T_{\lambda_{j}(m_2)}^{(m_2)}+\alpha\mathbbm{1})\xi_{j}\rangle_{H^{\frac{1}{2}},H^{-\frac{1}{2}}} \\
  &=\;(H^{(m_2)}_\alpha+\lambda_{j}(m_2)\mathbbm{1})\big[u_{\xi_{j},m_2}^{\lambda_{j}(m_2)}\big]\;=\;0 \qquad j\in\{1,\dots,k+1\}
 \end{split}
\]
%
having used again \eqref{eq:DHab_form} in the first step and third step, and Lemma \ref{lem:T_monotonicity} in the second step. This shows that there exists a $(k+1)$-dimensional subspace of $\mathcal{D}[H_\alpha^{(m_1)}]$, the space spanned by $u_{\xi_1,m_1}^{\lambda_1(m_2)},\dots,u_{\xi_{k+1},m_1}^{\lambda_{k+1}(m_2)}$,
on the elements of which the normalised expectations of $H_\alpha^{(m_1)}$ are strictly below $-\lambda_{k+1}(m_2)$, i.e., the largest among all the $-\lambda_{j}(m_2)$'s. This implies that $H_\alpha^{(m_1)}$ admits at least $k+1$ eigenvalues strictly below $-\lambda_{k+1}(m_2)$, and in particular $-\lambda_{k+1}(m_1)<-\lambda_{k+1}(m_2)$.

By induction, the proof is then completed.
\end{proof}

%



\def\cprime{$'$}

\end{document}